\newtheorem{theorem}{Theorem}
\newtheorem{lemma}{Lemma}
\newtheorem{corollary}{Corollary}
\newtheorem{proposition}{Proposition}
\newtheorem{remark}{Remark}
\newtheorem{definition}{Definition}
\newtheorem*{rep@theorem}{\rep@title}
\newcommand{\newreptheorem}[2]{%
\newenvironment{rep#1}[1]{%
 \def\rep@title{#2 \ref{##1}}%
 \begin{rep@theorem}}%
 {\end{rep@theorem}}}
\numberwithin{theorem}{section}
\numberwithin{lemma}{section}
\numberwithin{corollary}{section}
\numberwithin{proposition}{section}
\numberwithin{remark}{section}
\numberwithin{definition}{section}
\newcommand{\bs}[1]{\boldsymbol{#1}}
\newcommand{\supp}{{\rm supp}}
\newcommand{\dist}{{\rm dist}}
\title{Circuit equation of Grover walk}
\author{Yusuke Higuchi$^1$ and Etsuo Segawa$^2$\\
$^1${\small Department of Mathematics,
Gakushuin University,} \\
{\small Tokyo 171-8588, Japan.}
\\
$^2${\small Graduate School of Environment and Information Sciences, Yokohama National University,}\\ {\small Yokohama 240-8501, Japan.
}
}
\date{}
\begin{document}

\maketitle

\par\noindent
{\bf Abstract}. 
We consider the Grover walk on the infinite graph
in which an internal finite subgraph receives the inflow from the outside with some frequency 
and also radiates the outflow to the outside. 
To characterize the stationary state of this system, which is represented by a function on the arcs of the graph, 
we introduce a kind of discrete gradient operator twisted by the frequency. 
Then we obtain a circuit equation which shows that (i) the stationary state is described by the twisted gradient of a potential function which is a function on the vertices; (ii) the potential function satisfies the Poisson equation with respect to a generalized Laplacian matrix. 
Consequently, we characterize the scattering on the surface of the internal graph and the energy penetrating inside it.
Moreover, for the complete graph as the internal graph, we illustrate the relationship of the scattering
and the internal energy to the frequency and the number of tails. \\

\noindent{\it Key words and phrases.} 
Quantum walk,
Stationary state, 
Circuit equation,
Scattering matrix,
Comfortability. 


\section{Introduction}

Let us consider a situation in which water is poured continuously with an appropriate constant flow rate from a faucet to a tank with drains on the bottom. Over a sufficiently long period of time, the amount of water in the tank will become constant and stationary because of the balance between the inflow from the faucet and the outflow through the drains. In this paper, we consider a quantum walk model with a similar setup. We set a connected and finite graph and choose some vertices of this graph as the boundaries. 
From these boundaries, we feed a quantum walker with some frequency at every time step. Quantum walk has appeared as a dynamics that plays an important role in achieving  quadratical speed up in the quantum searches on graphs~\cite{Ambainis2003, Childs, Portugal}. 
Most of studies concerning the quantum walk  i.e,~\cite{Ambainis2003, Childs, Portugal} have been 
developed in the $\ell^2$ space and have provided interesting results, but we are also interested in the behavior of the quantum walk in a more extended space such as the  $\ell^\infty$ space. As a result, we will naturally obtain a {\it stationary} state of the quantum walk as a dynamical system. 
To this end, we set the following graph, time evolution and initial state. 
First, we connect the semi-infinite paths, named the tails, to the boundary vertices of the original graph $G_0=(V_0,A_0)$. The resulting semi-infinite graph is denoted by $\tilde{G}=(\tilde{V},\tilde{A})$. 
Here $A_0$ and $\tilde{A}$ are the sets of the symmetric arcs of $G_0$ and $\tilde{G}$, respectively. 
Secondly, we set the time evolution such that the local time evolution at each vertex follows the Grover matrix. Note that, in the region ahead of the boundary, the time evolution is free. 
Finally, as the initial state, we set a uniformly bounded initial state on the tails so that a quantum walker penetrates the internal graph at every time step. 
See Figure~\ref{Fig:graphs} for the complete graph with $4$ vertices. 
This quantum walk model was first formalized in \cite{FelHil1,FelHil2} on general graphs.  
Such a model can be interpreted as a discrete-analogue of the stationary Schr{\"o}dinger equation~\cite{Albe}. 
For example, relations of discrete-time quantum walks on the one-dimensional lattice to 
the Schr{\"o}dinger equation with delta potentials and also continuous potential in the real line can be seen in ~\cite{MMOS,KentaHiguchi,Morioka}. 
The notion of the stationarity of quantum walks has been considered in \cite{Konno} by extending the total state space $\ell^2(\mathbb{Z};\mathbb{C}^2)$ to the vector space $(\mathbb{C}^2)^{\mathbb{Z}}$. 

Let us state related works~\cite{HSS1,HSS2} on the Grover walks in the above setup from the view point of the frequency of the inflow parameterized by $z$ or $\xi$. 
Details of the definitions and settings can be seen in Section~2.
The $n$-th iteration of the quantum walk $\psi_n\in \mathbb{C}^{A_0}$ itself does not converge by the frequency $z\in \partial\mathbb{D}=\{ w\in \mathbb{C} \;:\; |w|=1\}$ of the inflow.  However it is shown in \cite{FelHil1,FelHil2,HS} that this dynamics is attracted to a stable orbit which is described by $\mathcal{A}_z=\{z^{-\ell}\phi_z: \ell\in \mathbb{N}\}$. Here $\phi_z\in \mathbb{C}^{A_0}$ is  $\phi_z=\lim_{n\to\infty}z^n\psi_n$ and called the stationary state. 
If we insert the constant inflow (which is equivalent to $z=1$) into the internal graph, then the stationary state is described by a potential function which satisfies the Poisson equation with respect to the Laplacian matrix~\cite{HSS1}.  On the other hand, if we insert the inflow whose signature is alternatively changed with respect to the time step; that is, $z=-1$,   then the stationary state has a potential function which satisfies the Poisson equation with respect to the {\it signless} Laplacian matrix~\cite{HSS2}. 
By setting the inflow $z=e^{i\xi}$, the former corresponds to $\xi=0$ while the latter corresponds to $\xi=\pi$. 
Now a natural problem is to connect between them by considering the general frequency $\xi\in[0,\pi]$. In this paper, 
we consider such a type of dynamics, which is precisely given in (\ref{eq:dynamics}) on the tailed graph $\tilde{G}$ with the internal graph $G_{0}$ in Section~\ref{subsec:setting} under the general inflow condition in (\ref{eq:initialstate}). To characterize the behavior of this dynamics,
we introduce the generalized Laplacian operator parameterized by $z$; if $z=\pm 1$, then the Laplacian and the signless Laplacian matrices are reproduced, respectively. More precisely, the generalized Laplacian matrix is defined as follows. 
\begin{definition}[Generalized Laplacian matrix]
Let $G_0=(V_0,A_0)$ be the internal graph with the boundary $\delta V_0\subset V_0$, which is the symmetric digraph. 
Let $j_{\pm}(z)=(z\pm z^{-1})/2$ for $z\neq 0$. 
The generalized Laplacian matrix 
on $\mathbb{C}^{V_0}$
is defined by
\[ L_{z}=M_0-j_+(z) D_0+j_-(z)\Pi_{\delta V_0} \]
for $z\in \delta \mathbb{D}$. 
Here $M_0$, $D_0$ are the adjacency matrix and the degree matrix of $G_0$, and $\Pi_{\delta V_0}$ is the projection matrix onto $\delta V_0$; that is, $(\Pi_{\delta V_0})_{u,v}=1$ if and only if $u=v\in\delta V_0$. 
\end{definition}
\begin{figure}[hbtp]
    \centering
    \includegraphics[keepaspectratio, width=110mm]{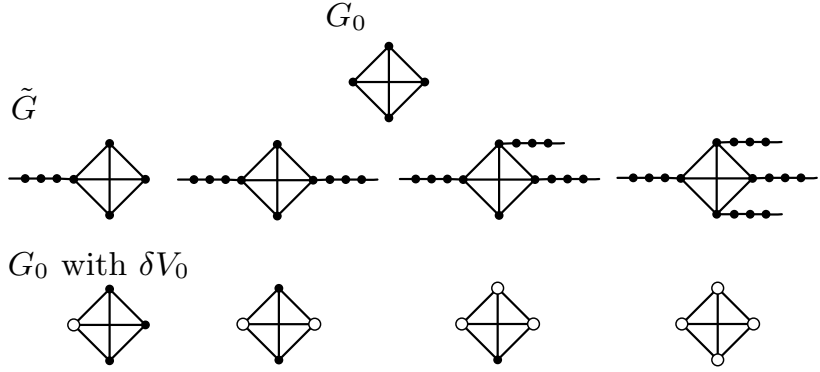}
    \caption{
    Examples of the tailed graph $\tilde{G}$ with an internal graph $\tilde{G}_{0}\simeq K_4$, the complete graph with $4$ vertices, are shown in the second and first row, respectively. In the third row, the $G_{0}$ with $\delta V_{0}$ corresponding to each $\tilde{G}$ is shown: the white vertices belong to $\delta V_{0}$. 
    We depict the cases for the number of boundaries $\ell=1,2,3,4$. The time evolution follows the Grover walk. Note that the time evolution on the tails is free because of a property of the Grover walk. Then once a quantum walker goes out to the tail from the internal graph, it never returns back to the internal, and this situation constitutes the outflow. The initial state for each case $\ell$ is set on the tails so that a quantum walker penetrates the internal graph at every time step, and this situation constitutes the inflow.
    }
    \label{Fig:graphs2}
\end{figure}
We find the circuit equation for the Grover walk; that is, we show that the stationary state with the general $z\in \delta \mathbb{D}$ has the potential function which satisfies the Poisson equation with respect to this generalized Laplacian matrix as follows.
\begin{theorem}[Circuit equation of Grover walk]\label{prop:PoiPoten}
Let $\phi_{e^{i\theta}}\in \mathbb{C}^{A_0}$ be the stationary state with the inflow $\alpha_{in}\in\mathbb{C}^{V_0}$ and its frequency $\theta\in \mathbb{R}$. Then the stationary state  $\phi_{e^{i\theta}}$ has the potential function $\nu_{e^{i\theta}}\in \mathbb{C}^{V_0}$ such that 
\[ i\sin\theta \;\phi_{e^{i\theta}}(a)=e^{i\theta} \nu_{e^{i\theta}}(t(a))-\nu_{e^{i\theta}}(o(a)). \]
Here $\nu_{e^{i\theta}}$ satisfies the following Poisson equation. 
\[ L_{e^{i\theta}} \nu_{e^{i\theta}} =i\sin \theta \bs{\alpha}_{in}, \]
where $L_z$ is the generalized Laplacian matrix on $\mathbb{C}^{V_0}$. 
\end{theorem}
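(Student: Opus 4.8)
The plan is to extract from stationarity a single local balance equation, use its vertex-dependence to \emph{define} the potential, and then read that same equation twice: once across an arc and its inverse to get the twisted gradient, and once after substitution to get the Poisson equation. Throughout write $z=e^{i\theta}$, so that $j_-(z)=(z-z^{-1})/2=i\sin\theta$ and $j_+(z)=(z+z^{-1})/2=\cos\theta$, and let $\bar a$ denote the inverse arc, with $o(\bar a)=t(a)$ and $t(\bar a)=o(a)$.

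First I would record the local form of stationarity. Since the Grover walk is free on the tails, a walker that leaves the internal graph never returns, and in the stationary regime the inflow deposits the amplitude $\bs\alpha_{in}(v)$ on the tail arc entering each boundary vertex $v\in\delta V_0$. Writing the Grover average at the tail-augmented degree $\deg_{\tilde G}(o(a))$ and imposing the eigenrelation $\phi_z=\lim_n z^n\psi_n$ dictates, for every internal arc $a\in A_0$,
\[
 z^{-1}\phi_{z}(a)+\phi_{z}(\bar a)=\frac{2}{\deg_{\tilde G}(o(a))}\Big(\sum_{b:\,t(b)=o(a)}\phi_{z}(b)+\bs\alpha_{in}(o(a))\Big),
\]
where the sum runs over internal arcs only and $\bs\alpha_{in}$ is extended by $0$ off $\delta V_0$. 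The right-hand side depends on $a$ only through $o(a)$, so it is legitimate to set
\[
 \nu_{z}(v):=\frac{1}{\deg_{\tilde G}(v)}\Big(\sum_{b:\,t(b)=v}\phi_{z}(b)+\bs\alpha_{in}(v)\Big),
\]
after which the balance equation reads $z^{-1}\phi_{z}(a)+\phi_{z}(\bar a)=2\nu_{z}(o(a))$.

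The twisted gradient is then pure linear algebra. Applying the balance equation to both $a$ and $\bar a$ gives the pair $z^{-1}\phi_z(a)+\phi_z(\bar a)=2\nu_z(o(a))$ and $z^{-1}\phi_z(\bar a)+\phi_z(a)=2\nu_z(t(a))$; eliminating $\phi_z(\bar a)$ and using $1-z^{-2}=2z^{-1}j_-(z)$, I expect exactly
\[
 j_-(z)\,\phi_{z}(a)=z\,\nu_{z}(t(a))-\nu_{z}(o(a)),
\]
which is the first assertion, since $j_-(z)=i\sin\theta$ and $z=e^{i\theta}$. For the Poisson equation I would feed this identity back into the definition of $\nu_z$. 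Summing $j_-(z)\phi_z(b)=z\nu_z(t(b))-\nu_z(o(b))$ over the internal arcs $b$ with $t(b)=v$ gives
\[
 j_-(z)\sum_{b:\,t(b)=v}\phi_{z}(b)=z\,(D_0)_{v,v}\,\nu_{z}(v)-(M_0\nu_{z})(v),
\]
because there are $(D_0)_{v,v}=\deg_{G_0}(v)$ such arcs and $\sum_{b:\,t(b)=v}\nu_{z}(o(b))=(M_0\nu_{z})(v)$. The definition of $\nu_z$ also gives $\sum_{b:\,t(b)=v}\phi_z(b)=\deg_{\tilde G}(v)\,\nu_z(v)-\bs\alpha_{in}(v)$. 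Equating the two expressions for $j_-(z)\sum_{b}\phi_z(b)$, using $\deg_{\tilde G}(v)=(D_0)_{v,v}+(\Pi_{\delta V_0})_{v,v}$ together with $j_-(z)-z=-j_+(z)$, and reading the result as a matrix identity on $\mathbb{C}^{V_0}$, I expect
\[
 \big(M_0-j_+(z)D_0+j_-(z)\Pi_{\delta V_0}\big)\nu_{z}=j_-(z)\,\bs\alpha_{in},
\]
that is $L_{z}\nu_{z}=i\sin\theta\,\bs\alpha_{in}$.

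The main obstacle is the first step, not the algebra: one must justify the precise local balance equation, namely that $\phi_z=\lim_n z^n\psi_n$ exists, that the tail degree enters the Grover average while the inflow enters additively as exactly $\bs\alpha_{in}(v)$ under the eigen-twist $z^{-1}$, and that no reflected amplitude from the tails survives. This rests on the freeness of the Grover dynamics on the tails and on the convergence to the stable orbit recalled in Section~2; once it is in place, the remaining two steps are the routine $2\times2$ elimination and degree bookkeeping sketched above.
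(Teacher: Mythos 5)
Your proposal is correct and takes essentially the same route as the paper: both define the potential $\nu_z$ as the degree-normalized sum of stationary amplitudes over incoming arcs (with the tail arc at a boundary vertex contributing exactly $\bs{\alpha}_{in}$), derive the twisted gradient identity by pairing the eigenequation at $a$ and $\bar{a}$, and obtain the Poisson equation by summing that identity over arcs incoming to each vertex. The only difference is bookkeeping: the paper first establishes the random-walk eigenequation $j_+(z)\tilde{\nu}_z=\tilde{P}\tilde{\nu}_z$ on the full tailed graph and then substitutes the boundary relation coming from the tail arcs, whereas you absorb the inflow directly into the definition of $\nu_z$ and sum over internal arcs only — the same computation organized in a slightly more compact order.
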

\noindent
If we directly substitute
$\theta=0,\pi$ into the above theorem to recover the results on ~\cite{HSS1,HSS2}, then we notice that $\phi_z(a)$ has an indeterminate form,  
since both $\theta$'s are singular 
points of $L_{e^{i\theta}}$. 
However they turn out to be  
removable, and thus we 
 will continuously connect the stationary states at $\theta=0$ and $\theta=\pi$ with their neighborhoods by using Kato's perturbation theory~\cite{Kato}.

Starting from this theorem, we focus on characterizing the stationary state from the following two quantities. The first is the scattering on the surface. The scattering shows the relation between the inflow and its response to the outside, namely the outflow. In the complete graph, we will see that if we set $\theta=\pi$ or $\theta=\pm \arccos(-1/(N-1))=:\pm\theta_{*}$, then the perfect reflection always occurs if $|\delta V|>1$.  
The second quantity is the comfortability. 
From the classical point of view, the larger the number of drains is, the more difficult it is for the water to pool in the tank in the stationary state because there are so many exits for the water. We set the squared modulus of the stationary state in the internal graph as the comfortability~\cite{HSS2}, which corresponds to the amount of the water in the tank. 
In this paper, we will see that the worst setting of the boundary for the comfortability of quantum walk on the complete graph is that which joins the tail to every vertex; this conclusion accords with the above intuition.
However, we show that if we set some special frequency of the inflow, then the more the boundaries, whose number is strictly less than the number of vertices, are in the graph, the more {\it comfortable} the quantum walker feels. 
This means that if we add the ``unnecessary" tail to the best situation in regard to the comfortability, then the comfortaiblity of the quantum walk suddenly declines to the worst level (see Figure~\ref{Fig:connection}).  

These interesting phenomena are based on  
the following expressions for the scattering and comfortability on a given general connected finite graph $V_{0}$.
Recall that both these quantities yield  
information in the long time limit. 
The former indicates 
the information on the surface 
of the internal graph $V_{0}$ with $\delta V_{0}$; that is, 
how the inflow and outflow are related. In contrast, 
the latter, whose precise definition is given in Definition~\ref{def:2.3} in Section~\ref{subsec:2.6}, indicates the information
on how much the total energy is stored inside $V_{0}$. 
\begin{theorem}[Scattering matrix]\label{cor:scattering}
Let $\bs{\alpha}_{\delta}, \bs{\beta}_\delta\in \mathbb{C}^{\delta V_0}$ represent the inflow and outflow in the long time limit. 
Then we have 
\[ \bs{\beta}_{\delta}=S_{e^{i\theta}}\bs{\alpha}_{\delta}, \]
where $S_{e^{i\theta}}$ is the unitary matrix on $\mathbb{C}^{\delta V_0}$ described by
\[
S_{z}=z\;\chi_{\delta {V_0}}\; \left\{2i\sin\theta L_{z}^{-1}-I_{V_0}\right\}\;\chi_{\delta V_0}^*,
\]
where $\chi_{\delta V_0}$ is isomorphic to the $|\delta V_0|\times |V_0|$ matrix,   $[\;I_{\delta V_0} \;|\; \bs{0}\;]$.  
\end{theorem}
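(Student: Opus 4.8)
The plan is to read off the scattering relation from the circuit equation of Theorem~\ref{prop:PoiPoten} and then to establish unitarity by a resolvent identity. Since the walker is fed only along the tails, which are attached at the boundary vertices, the inflow vector is supported on $\delta V_0$; hence I may write $\bs{\alpha}_{in}=\chi_{\delta V_0}^{*}\bs{\alpha}_{\delta}$ with $\bs{\alpha}_{\delta}\in\mathbb{C}^{\delta V_0}$. The generalized Laplacian $L_{e^{i\theta}}$ is invertible away from its singular points, so solving the Poisson equation in Theorem~\ref{prop:PoiPoten} gives the potential explicitly:
\[
\nu_{e^{i\theta}}=i\sin\theta\,L_{e^{i\theta}}^{-1}\chi_{\delta V_0}^{*}\bs{\alpha}_{\delta}.
\]
This already puts the $L_{e^{i\theta}}^{-1}$ of the claimed formula in place; it remains to express the outflow $\bs{\beta}_\delta$ through $\nu_{e^{i\theta}}$ and $\bs{\alpha}_\delta$.

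The crux is a purely local identity at each boundary vertex $v\in\delta V_0$. Writing $d_v$ for the degree of $v$ in $G_0$, the vertex $v$ has degree $d_v+1$ in $\tilde G$ (the extra arc being the tail), and the Grover coin there sends the incoming amplitudes --- the inflow $\alpha(v)$ together with the internal amplitudes $\phi_{e^{i\theta}}(a)$ for $t(a)=v$ --- to the outgoing tail amplitude
\[
\beta(v)=\frac{2}{d_v+1}\Big(\alpha(v)+\sum_{t(a)=v}\phi_{e^{i\theta}}(a)\Big)-\alpha(v).
\]
I would now evaluate $\sum_{t(a)=v}\phi_{e^{i\theta}}(a)$ by summing the circuit-equation identity $i\sin\theta\,\phi_{e^{i\theta}}(a)=e^{i\theta}\nu_{e^{i\theta}}(t(a))-\nu_{e^{i\theta}}(o(a))$ over the internal arcs into $v$, which produces $e^{i\theta}d_v\nu_{e^{i\theta}}(v)-(M_0\nu_{e^{i\theta}})(v)$, and then eliminate $(M_0\nu_{e^{i\theta}})(v)$ using the $v$-th row of the Poisson equation $L_{e^{i\theta}}\nu_{e^{i\theta}}=i\sin\theta\,\bs{\alpha}_{in}$ together with $e^{i\theta}-\cos\theta=i\sin\theta$. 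The expected miracle is that the $d_v$-dependent terms collapse, giving $\alpha(v)+\sum_{t(a)=v}\phi_{e^{i\theta}}(a)=(d_v+1)\nu_{e^{i\theta}}(v)$, so that the awkward factor $d_v+1$ cancels in the coin formula and leaves the clean boundary law $\beta(v)=2\nu_{e^{i\theta}}(v)-\alpha(v)$. Assembling this over $\delta V_0$ and substituting the explicit $\nu_{e^{i\theta}}$ gives $\chi_{\delta V_0}\{2i\sin\theta\,L_{e^{i\theta}}^{-1}-I_{V_0}\}\chi_{\delta V_0}^{*}\bs{\alpha}_\delta$; the overall prefactor $z=e^{i\theta}$ then appears once I account for the single free step on the tail relating the amplitude on the boundary outflow arc to the reference outflow $\bs{\beta}_\delta$ used in the long-time limit (in the $\phi_z=\lim_n z^n\psi_n$ normalization each outward step contributes a factor $z$). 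This yields $\bs{\beta}_\delta=S_{e^{i\theta}}\bs{\alpha}_\delta$ with $S_z$ as stated.

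For unitarity I would argue algebraically rather than invoke energy conservation. Abbreviating $s=\sin\theta$ and $R=\chi_{\delta V_0}L_z^{-1}\chi_{\delta V_0}^{*}$, one has $S_z=z(2isR-I_{\delta V_0})$ and, using $\chi_{\delta V_0}\chi_{\delta V_0}^{*}=I_{\delta V_0}$, also $S_z^{*}=\bar z(-2isR'-I_{\delta V_0})$ with $R'=\chi_{\delta V_0}(L_z^{*})^{-1}\chi_{\delta V_0}^{*}$. Expanding,
\[
S_z^{*}S_z=I_{\delta V_0}+2is(R'-R)+4s^2R'R.
\]
Because $M_0$, $D_0$ and $\Pi_{\delta V_0}$ are real symmetric with $\Pi_{\delta V_0}=\chi_{\delta V_0}^{*}\chi_{\delta V_0}$, the generalized Laplacian satisfies $L_z^{*}=L_z-2is\,\chi_{\delta V_0}^{*}\chi_{\delta V_0}$. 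The resolvent identity then gives $(L_z^{*})^{-1}-L_z^{-1}=2is\,(L_z^{*})^{-1}\chi_{\delta V_0}^{*}\chi_{\delta V_0}L_z^{-1}$, and sandwiching by $\chi_{\delta V_0}(\cdot)\chi_{\delta V_0}^{*}$ yields exactly $R'-R=2isR'R$. Substituting this turns $2is(R'-R)$ into $-4s^2R'R$, so $S_z^{*}S_z=I_{\delta V_0}$, and unitarity follows.

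The step I expect to be the main obstacle is the boundary bookkeeping in the second paragraph: correctly handling the degree $d_v+1$ at a boundary vertex and verifying that it cancels against the Poisson equation, and pinning down the single tail-step phase that produces the prefactor $z$. Everything downstream --- the explicit potential and the resolvent-identity proof of unitarity --- is then essentially forced.
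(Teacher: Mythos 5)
Your derivation of the scattering formula follows essentially the same route as the paper's proof---reduce everything to the boundary relation $\bs{\beta}_\delta(u_j)=2z\nu_z(u_j)-z\bs{\alpha}_\delta(u_j)$ and then substitute the Poisson solution $\nu_z=i\sin\theta\,L_z^{-1}\chi_{\delta V_0}^*\bs{\alpha}_\delta$ from Theorem~\ref{prop:PoiPoten}---but you make the boundary step much harder than it needs to be. The ``expected miracle'' $\alpha(v)+\sum_{t(a)=v}\phi_z(a)=(d_v+1)\nu_z(v)$ is not something to derive from the circuit and Poisson equations: it is literally the definition (\ref{eq:nuz}) of $\tilde{\nu}_z$ at a boundary vertex, since the arcs of $\tilde{G}$ terminating at $v$ are the tail arc (amplitude $\alpha(v)$) together with the $d_v$ internal arcs. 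Your detour does verify it correctly, but only under the extra hypothesis $\sin\theta\neq 0$, which the definitional route does not require. Relatedly, your displayed coin identity is missing a factor $z^{-1}$ on the left-hand side: the stationary state satisfies $U\tilde{\phi}_z=z^{-1}\tilde{\phi}_z$, not $U\tilde{\phi}_z=\tilde{\phi}_z$, and this eigenvalue is precisely where the prefactor $z$ comes from. You patch this with the ``single free step on the tail'' remark, which lands on the right answer but is loose; the paper gets the exact boundary law in one line by evaluating the generalized eigenequation (\ref{eq:eigeneq}) at the outgoing tail arc $\bar{e}_j$, with the factor $z$ built in.

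Two further points of comparison. First, your resolvent-identity proof of unitarity (from $L_z^*=L_z-2i\sin\theta\,\Pi_{\delta V_0}$, hence $R'-R=2i\sin\theta\,R'R$ and $S_z^*S_z=I$) is correct and is a genuine addition: the paper never proves unitarity in this theorem's proof, it quotes \cite{FelHil1,HS} for that fact, so your argument makes the statement self-contained where the paper's is not. Second, you treat only frequencies at which $L_z$ is invertible, whereas the second half of the paper's proof shows that the points of $\mathbb{B}_*\cap\delta\mathbb{D}$ with $\det L_z=0$ (which genuinely occur, e.g.\ at $\theta=\pm\arccos(-1/(N-1))$ for complete graphs with few boundary vertices) are removable singularities, using analyticity of $\phi_z$ in $|z|<1/|\lambda_1|$ (Proposition~\ref{prop:analytic}) and the identity theorem. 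Since the theorem asserts a relation between the physical inflow and outflow at every frequency, this analytic-continuation step is part of the content and is the one substantive piece missing from your proposal.
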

\begin{theorem}[Comfortability]\label{cor:comf}
The comfortability for $z=e^{i\theta}$ is described by \begin{equation}
\mathcal{E}_{z}=
\langle  L_{z}^{-1}\;\bs{\alpha}_{in},(D_0-\cos\theta M_0)L_{z}^{-1}\;\bs{\alpha}_{in} \rangle.
\end{equation}
\end{theorem}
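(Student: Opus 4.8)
The plan is to start from the definition of the comfortability (Definition~\ref{def:2.3}), which records, up to a fixed normalizing constant, the squared modulus of the stationary state summed over the internal arcs, and then to eliminate $\phi_{e^{i\theta}}$ in favour of the potential $\nu_{e^{i\theta}}$ by feeding in the circuit equation of Theorem~\ref{prop:PoiPoten}. Concretely, from $i\sin\theta\,\phi_{e^{i\theta}}(a)=e^{i\theta}\nu_{e^{i\theta}}(t(a))-\nu_{e^{i\theta}}(o(a))$ I would take squared moduli, using $|i\sin\theta|^2=\sin^2\theta$, to get
\[
\sin^2\theta\,\bigl|\phi_{e^{i\theta}}(a)\bigr|^2=\bigl|e^{i\theta}\nu_{e^{i\theta}}(t(a))-\nu_{e^{i\theta}}(o(a))\bigr|^2
\]
for every arc $a\in A_0$, and then sum this identity over all arcs.

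The core of the argument is to recognize the arc-sums that appear after expanding the square as quadratic forms on $\mathbb{C}^{V_0}$. Since the number of arcs with a prescribed terminus (respectively origin) equals the degree of that vertex, both $\sum_{a}|\nu_{e^{i\theta}}(t(a))|^2$ and $\sum_{a}|\nu_{e^{i\theta}}(o(a))|^2$ collapse to $\langle \nu_{e^{i\theta}},D_0\nu_{e^{i\theta}}\rangle$, while running the cross term over all ordered adjacent pairs reconstructs $\langle \nu_{e^{i\theta}},M_0\nu_{e^{i\theta}}\rangle$. The point I would emphasize is that because $M_0$ and $D_0$ are real symmetric, these quadratic forms are real even though $\nu_{e^{i\theta}}$ is complex; hence the conjugate cross terms $e^{i\theta}\langle \nu_{e^{i\theta}},M_0\nu_{e^{i\theta}}\rangle$ and $e^{-i\theta}\langle \nu_{e^{i\theta}},M_0\nu_{e^{i\theta}}\rangle$ combine into $2\cos\theta\,\langle \nu_{e^{i\theta}},M_0\nu_{e^{i\theta}}\rangle$, which is exactly where the phase $e^{i\theta}$ gets replaced by $\cos\theta$. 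Up to the normalization fixed in Definition~\ref{def:2.3}, this gives
\[
\sin^2\theta\;\mathcal{E}_{z}\;\propto\;\langle \nu_{e^{i\theta}},(D_0-\cos\theta\,M_0)\,\nu_{e^{i\theta}}\rangle .
\]

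Finally I would substitute the Poisson equation of Theorem~\ref{prop:PoiPoten} in the form $\nu_{e^{i\theta}}=i\sin\theta\,L_{e^{i\theta}}^{-1}\bs{\alpha}_{in}$. The factor $|i\sin\theta|^2=\sin^2\theta$ produced by the two occurrences of $\nu_{e^{i\theta}}$ cancels the $\sin^2\theta$ on the left, leaving precisely $\langle L_{e^{i\theta}}^{-1}\bs{\alpha}_{in},(D_0-\cos\theta\,M_0)L_{e^{i\theta}}^{-1}\bs{\alpha}_{in}\rangle$, which is the claimed expression.

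I expect the main obstacle to be essentially bookkeeping: tracking the symmetric arc structure so that the degree matrix $D_0$ emerges correctly from the diagonal terms and the adjacency matrix $M_0$ from the off-diagonal terms, and handling the complex conjugation in the cross term so that the real-valuedness of $\langle \nu_{e^{i\theta}},M_0\nu_{e^{i\theta}}\rangle$ is genuinely used. A secondary point worth addressing is the apparent singularity at $\theta=0,\pi$: the prefactor $1/\sin^2\theta$ and the singular $L_{e^{i\theta}}^{-1}$ must cancel in the manner described above, so that the final formula extends continuously to these frequencies through the same removable-singularity mechanism already invoked for the stationary state.
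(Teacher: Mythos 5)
Your proposal is correct and follows essentially the same route as the paper: your arc-by-arc expansion of $\sum_{a}\bigl|e^{i\theta}\nu(t(a))-\nu(o(a))\bigr|^2$ into $2\langle \nu,(D_0-\cos\theta\,M_0)\nu\rangle$ is exactly the paper's operator identity $\partial_z\partial_z^*=(1+|z|^2)D_0-(z+z^*)M_0$ written out entrywise, and your substitution $\nu_{e^{i\theta}}=i\sin\theta\,L_{e^{i\theta}}^{-1}\bs{\alpha}_{in}$ with the $\sin^2\theta$ cancellation reproduces the paper's direct use of $\phi_z=\partial_z^*L_z^{-1}\bs{\alpha}_{in}$ from Theorem~\ref{thm:stationarystate}. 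Your appeal to the removable-singularity mechanism (analyticity of $\phi_z$ in $|z|<1/|\lambda_1|$) for $z\in\mathbb{B}_*\cap\delta\mathbb{D}$ is likewise the paper's closing argument.
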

\noindent In the expressions of the scattering matrix and the comfortability, we use the inverse of $L_{e^{i\theta}}$. There exists a frequency $\theta$ such that $\det(L_{e^{i\theta}})=0$ up to the graph structure and the boundary $\delta V_0$. 
Indeed, for the complete graph, there exists such a frequency at $\theta=0,\pi,\pm \arccos(-1/(N-1))$, and we will see a special response of the scattering matrix and comfortability. For example, at $\theta=\pm \theta_*$, the perfect reflection occurs, and also the comfortability is the largest if the number of boundaries and $N$ are sufficiently large. 
However we will show that such a frequency is a removable singularity from the view point of the function of $z=e^{i\theta}$ by extending the domain of $z$ analytically to $\{z\in \mathbb{C} \;:\; |z|<s\}$ for some real value $s>1$. 

The rest of this paper is organized as follows. 
Section~2 is devoted to the setting of our quantum walk model. In particular, we define the stationary state of our quantum walk model and the generalized Laplacian matrix which plays an important role in describing the stationary state and the scattering matrix. 
In Section 3, we first give the proof of Theorem 1.1. 
If $L_z$ is invertible, then the stationary state $\phi_z$
can be immediately expressed by $L_z^{-1}$ and $\bs{\alpha}_{in}$. 
Thus we clarify that the  generalized Laplacian matrix is not invertible if and only if (i) $z=1$ or (ii) $z=-1$ and $G_0$ is a bipartite graph or (iii) $(z+z^{-1})/2$ is the eigenvalue of the underlying random walk whose eigenvector has no overlap with the boundary.  
We also show that such eigenvalues are a   spectrum of the truncated operator with respect to the internal graph.
Analyzing the properties of singularities of $L_z$, 
we show that such singularities are removable for $\phi_z$; 
we can express the stationary state $\phi_z$ as 
$\partial^*L_z^{-1}\bs{\alpha}_{in}$ continuously 
with respect to $z$ ($|z|=1$). 
Then we give the proofs of Theorems 1.2 and 1.3. 
In Section 5, for the complete graph with $N$ vertices, 
we give the concrete forms of the scattering matrix and the comfortability in terms of $N$, the frequency $\theta$ and the number of tails $\ell$. In addition, for every $\ell$, we illustrate the relationship between the transmitting rate and $\theta$, and between the comfortability and $\theta$. These examples provide fruitful information in regard to the effects of quantum walks.
    \label{Fig:graphs}
\section{Setting of our quantum walk model}

\subsection{Graph notation}
The symmetric digraph $G=(V,A)$ is the digraph such that there exists an inverse arc $\bar{a}\in A$ for any $a\in A$.  
The origin and terminal vertices of $a\in A$ are denoted by $o(a)$ and $t(a)$, respectively. 
Remark that $o(\bar{a})=t(a)$ and $|\bar{a}|=|a|$ for any $a\in A$.  
The degree of $u\in V$ is the number of arcs whose terminal vertices are $u$, that is
    \[\deg_G(u):=\#\{ a\in A  \;|\; t(a)=u\}=\#\{a\in A \;|\; o(a)=u\}. \]
\subsection{Tailed graph}\label{subsec:setting}
Let $G_0=(V_0,A_0)$ be a connected and finite symmetric digraph. 
We set arbitrary chosen vertex subset of $V_0$ as the set of the boundary vertices by $\delta V_0\subset V_0$.  We call it the surface of $G_0$.
We connect the semi-infinite length path to each boundary vertex of $\delta V=\{u_1,\dots,u_r\}$. 
The semi-infinite length path connected to the boundary vertex $u_j$ is called the tail connected to $u_j$ and described by $Tail(j)$. 
The resulting semi-infinite graph is denoted by $\tilde{G}=(\tilde{V},\tilde{A})$ and called the tailed graph of $G_0=(V_0,A_0)$ with the boundary set $\delta V_0$. 
The arc set of the tail connected to $u_j$ is denoted by $A_j$ $(j=1,\dots,r)$. 
We set $\tilde{d}(u)$ and $d(u)$ as the degree of $u\in V_0$ in $\tilde{G}$ and $G_0$, respectively. 
\subsection{Random walk}
Let $\Omega$ be a countable set. We define $\mathbb{C}^\Omega$ by the vector space whose standard basis vectors are labeled by $\Omega$. 
The adjacency matrix and the degree matrix of $G$ on $\mathbb{C}^{V}$ are defined by 
\[ (Mf)(u)=\sum_{o(a)=u} f(t(a)),\;\;(Df)(u)=\deg(u)f(u)  \]
for any $f\in \mathbb{C}^V$ and $u\in V$. 
Moreover the probability transition matrices on $\tilde{G}$ and $G_0$ are denoted by 
\[ \tilde{P}=D^{-1}M(\tilde{G}),\;P_0=D_0^{-1}M(G_0), \]
respectively.
Here $M(\tilde{G})$ and $M(G_0)$ are the adjacency matrices of $\tilde{G}$ and $G_0$, respectively, and $D_0$ is the degree matrix of $G_0$. 
We will set $M_0:=M(G_0)$ in short. 
\subsection{Grover walk}
If the scattering rule at each vertex is described by the Grover matrix in the quantum walk dynamics, we call it the Grover walk. 
We will explain this in more detail in the following. 
The Grover matrix of degree $d$ is denoted by \[ \mathrm{Gr}(d)=\frac{2}{d}J_d-I_d, \]
where $J_d$ is the all $1$ matrix and $I_d$ is the identity matrix.
The time evolution operator of the Grover walk $U:\mathbb{C}^{\tilde{A}}\to \mathbb{C}^{\tilde{A}}$ is defined as follows. Let $\tilde{\psi}_n$ be the $n$-th iteration of the Grover walk with some initial state; that is, 
\[ \tilde{\psi}_{n+1}=U\tilde{\psi}_n. \]
Let the set of arcs whose origin vertices are $u\in \tilde{V}$ be denoted by $\{a_1,\dots,a_\kappa\}$, where $\kappa=\deg_G(u)$. 
For any $\psi\in \mathbb{C}^{\tilde{A}}$ and vertex $u\in \tilde{V}$, the time evolution can be expressed as follows. 
\[ \begin{bmatrix} \tilde{\psi}_{n+1}(a_1) \\ \vdots \\ \tilde{\psi}_{n+1}(a_\kappa) \end{bmatrix}=\mathrm{Gr}(\kappa)\begin{bmatrix} \tilde{\psi}_n(\bar{a}_1) \\ \vdots \\ \tilde{\psi}_n(\bar{a}_\kappa) \end{bmatrix}.  \]
Since the Grover matrix is unitary, the unitarity of the total time evolution operator $U$ is ensured; that is, $UU^*=U^*U=I_{\tilde{A}}$. 
Another expression for the time evolution $U$ is
\begin{equation}\label{eq:dynamics} (U\tilde{\psi})(a)=\frac{2}{\tilde{d}(o(a))}\sum_{t(b)=o(a)}\tilde{\psi}(b)-\tilde{\psi}(\bar{a})
\end{equation}
for any $\tilde{\psi}\in\mathbb{C}^{\tilde{A}}$.
This expression will be a starting point for the proof of our theorems. 
\subsection{Initial state and stable orbit}
The initial state is set so that the support is included in all the arcs of the tails whose directions head to the original graph $G_0$. More precisely, setting the parameters of the input by $\alpha_j\in \mathbb{C}$ ($j=1,\dots, r$), we describe the initial state with the complex number $z\in \delta \mathbb{D}:=\{w\in \mathbb{C} \;|\; |w|=1\}$ by 
\begin{equation}\label{eq:initialstate} \tilde{\psi}_0(a)=\begin{cases} \alpha_j z^{-k} & \text{: $a\in A_j$ and   $\dist(o(a),u_j)>\dist(t(a),u_j)=k$,  }\\
&\text{\qquad\qquad\qquad\qquad\qquad\qquad\qquad $(j=1,\dots,r$; $k=0,1\dots)$,}\\
0 & \text{: otherwise.}\end{cases}
\end{equation}
Although our interest is the initial state with the parameter $z\in \delta\mathbb{D}$, we will often  consider the initial state with the parameter  $z\in\mathbb{C}\setminus\{0\}$ in (\ref{eq:initialstate}) for our discussion. 
Note that the quantum walker on each tail is {\it free} since 
\[\mathrm{Gr}(2)=\begin{bmatrix}0& 1\\1 & 0\end{bmatrix}. \]
Then at every time step $n$, the internal graph $G_0$ receives inflow from each tail $\alpha_jz^{-n}$. 
On the other hand, once a quantum walker in the internal graph goes out to a tail, then it never goes back. Such a quantum walker is regarded as outflow to the tail. The convergence to a stationary state of the quantum walk has been ensured in \cite{HS}.   
Our object in this paper is to know the 
long time behavior of the dynamics 
$$
\tilde{\psi}_{n}
=U^{n}\tilde{\psi}_0
$$
with (\ref{eq:dynamics}) and (\ref{eq:initialstate}).
\begin{theorem}[\cite{HS}]\label{thm:existance}
For $z\in \delta \mathbb{D}$, 
let $\tilde{\phi}_n:=z^n\tilde{\psi}_n$. Then we have 
\begin{align*}
    \exists\lim_{n\to\infty}\tilde{\phi}_n &:=\tilde{\phi}_z, \\
    U\tilde{\phi}_z &= z^{-1}\tilde{\phi}_z.
\end{align*}
\end{theorem}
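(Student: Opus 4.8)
The plan is to reduce the infinite-dimensional evolution to a closed, finite-dimensional affine recursion on the internal arcs, and then to obtain convergence of that recursion from the dissipativity forced by the tails. I would first pass to the rescaled sequence $\tilde\phi_n=z^n\tilde\psi_n$. From $\tilde\psi_{n+1}=U\tilde\psi_n$ one gets at once
\[ \tilde\phi_{n+1}=zU\tilde\phi_n. \]
Because $U$ is a finite-range (local) operator by (\ref{eq:dynamics}), pointwise convergence of $\tilde\phi_n$ transfers through $U$; hence once $\tilde\phi_n\to\tilde\phi_z$ is established, passing to the limit in this recursion gives $\tilde\phi_z=zU\tilde\phi_z$, that is $U\tilde\phi_z=z^{-1}\tilde\phi_z$, which is the stated eigenrelation. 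The whole theorem thus reduces to the \emph{existence} of the limit.

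Next I would split $\tilde A$ into inward tail arcs, outward tail arcs, and the internal arcs $A_0$. Since $\mathrm{Gr}(2)$ is the flip, the evolution on each tail is free, so the inward arc at distance $k$ from $u_j$ carries in $\tilde\psi_n$ the value $\alpha_j z^{-(k+n)}$; after rescaling this is $\alpha_j z^{-k}$, \emph{independent of $n$}. Hence the inward part of $\tilde\phi_n$ is already stationary and delivers a constant source at the boundary. Moreover, once a walker leaves $A_0$ onto a tail it never returns, so the internal component $\phi_n^{(0)}:=z^n\tilde\psi_n|_{A_0}\in\mathbb{C}^{A_0}$ satisfies the \emph{closed} affine recursion
\[ \phi_{n+1}^{(0)}=zE\,\phi_n^{(0)}+z\rho, \qquad E:=\chi_{A_0}U\chi_{A_0}^{*}, \]
where $E$ is the truncation of $U$ to $A_0$ (with boundary degrees $\tilde d$, so amplitude is lost into the tails at each $u_j\in\delta V_0$) and $\rho\in\mathbb{C}^{A_0}$ is the constant inflow, supported only at the boundary via the factors $2/\tilde d(u_j)$. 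Note also that $\phi_0^{(0)}=0$, since $\tilde\psi_0$ is supported on the tails.

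The crux is to show $(zE)^n\rho\to 0$, equivalently that $I-zE$ is invertible on the subspace reached by the orbit; then the geometric series gives $\phi_n^{(0)}\to z(I-zE)^{-1}\rho$, and propagating this stationary internal state outward by the free tail evolution produces the full limit $\tilde\phi_z$, whose outward profile is a fixed pattern of the form $z^{k}\beta_j$. Here one uses that $E$ is a compression of a unitary with genuine leakage into the tails, hence a contraction $\|E\|\le1$; a unit-modulus eigenvalue of $E$ forces norm preservation and therefore no leakage, so it corresponds exactly to an eigenstate of $U$ confined to $A_0$ (a trapped state, which the paper identifies with the spectrum of this truncated operator). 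Thus $\mathbb{C}^{A_0}=\mathcal{C}\oplus\mathcal{C}'$, both $E$-invariant, with $E|_{\mathcal{C}}$ unitary on the confined subspace $\mathcal{C}$ and $E|_{\mathcal{C}'}$ of spectral radius strictly below $1$. Since $\phi_0^{(0)}=0$, convergence follows provided the source satisfies $\rho\in\mathcal{C}'$, i.e. $\rho$ carries no confined component; on $\mathcal{C}'$ the map $zE$ is a strict contraction and $(zE)^n\to 0$ geometrically.

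The main obstacle is precisely this last point: verifying that the boundary-supported source $\rho$ is annihilated by every left eigenvector of $E$ of unit modulus (equivalently that $\rho$ avoids the confined subspace), and doing so uniformly in $z\in\delta\mathbb{D}$, including the resonant frequencies $z=\pm1$ and $z=-1$ with $G_0$ bipartite, where $I-zE$ is most delicate. A cleaner alternative that bypasses the iteration is the resolvent/generating-function route: form $\Psi(w)=\sum_{n\ge0}w^n\tilde\psi_n=(I-wU)^{-1}\tilde\psi_0$ for small $|w|$, reduce $(I-wU)^{-1}$ to the boundary by a Schur complement (this is where the matrix $L_z$ of the Generalized Laplacian emerges), and show that $\Psi$ continues to a neighbourhood of $w=z$ with only a simple pole there; its residue is then $\tilde\phi_z$, and the absence of other singularities on $\delta\mathbb{D}$ yields decay of the transient. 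In either approach, it is the connectedness of $G_0$ together with the presence of at least one tail that produces the dissipation making the limit exist.
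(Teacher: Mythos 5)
Your skeleton is exactly the route the paper (deferring to \cite{HS}) takes: the closed affine recursion on $\mathbb{C}^{A_0}$ is the paper's (\ref{eq:iteration_phi_n}), your splitting $\mathbb{C}^{A_0}=\mathcal{C}\oplus\mathcal{C}'$ of the compression $E_{PON}$ into a unitary (confined) part and a part of spectral radius strictly less than one is the decomposition underlying Proposition~\ref{prop:analytic}, and summing the geometric series on $\mathcal{C}'$ is how the limit is produced. But the step you yourself flag as ``the main obstacle'' --- that the boundary source $\rho$ has no component in the confined subspace $\mathcal{C}$ --- is precisely the content of Theorem~\ref{thm:existance}; everything else in your argument is standard contraction theory in finite dimensions. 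The paper does not prove this step either: it is imported from \cite{HS}, where it appears as the statement (quoted in the proof of Proposition~\ref{prop:analytic}) that the orbit $(E_{PON})^{j}\rho$ stays in the spectral subspace of eigenvalues $|\lambda|<1$. So, as written, your proposal has a genuine gap at the one point that carries the theorem, and the alternative resolvent route you sketch at the end is not an escape: proving that the generating function has no singularities on $\delta\mathbb{D}$ other than the pole at $w=z$ is the same trapped-state statement in disguise.

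The gap closes with a short unitarity argument, and your two worries are red herrings. First, $\rho=\chi_{A_0}U\tilde{\psi}_0$ does not depend on $z$ at all: after one step only the arcs $e_j$ (tail arcs at distance $0$, carrying the values $\alpha_j$) scatter into $A_0$, so no uniformity issue arises; and the frequencies $z=\pm1$, $z\in j_+^{-1}(\sigma_{per})$ are delicate for the invertibility of $L_z$, not for the existence of the limit. Second, by your own compression argument, $\mathcal{C}$ is spanned by restrictions $\chi_{A_0}w$ of eigenvectors $w$ of $U$ with $\supp(w)\subset A_0$ (unit-modulus eigenvalues of a contraction are semisimple, so genuine eigenvectors suffice), and the splitting $\mathcal{C}\oplus\mathcal{C}'$ is orthogonal. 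For such $w$ with $Uw=\lambda w$, $|\lambda|=1$, unitarity gives $U^{*}w=\bar{\lambda}w$, hence
\[
\langle \chi_{A_0}w,\;\rho\rangle_{\mathbb{C}^{A_0}}
=\langle w,\;U\tilde{\psi}_0\rangle_{\mathbb{C}^{\tilde{A}}}
=\langle U^{*}w,\;\tilde{\psi}_0\rangle
=\bar{\lambda}\,\langle w,\;\tilde{\psi}_0\rangle=0,
\]
where the first equality uses $\supp(w)\subset A_0$ and the last uses that $w$ and the initial state $\tilde{\psi}_0$ have disjoint supports. This kills the confined component of $\rho$ for every trapped eigenvalue and every $z\in\delta\mathbb{D}$ simultaneously. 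With this lemma inserted, your first route is complete: $(zE_{PON})^{n}\rho\to0$ geometrically, $\phi_n\to z\bigl(I-zE_{PON}\bigr)^{-1}\big|_{\mathcal{C}'}\rho$, and free propagation on the tails together with the locality of $U$ gives the full limit $\tilde{\phi}_z$ and the relation $U\tilde{\phi}_z=z^{-1}\tilde{\phi}_z$.
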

Our interest is the properties of the stationary state $\tilde{\phi}_z$ for the following reason. 
Since the dynamics on the tails are trivial, we will focus on the dynamics on the internal graph. To this end, let us set 
$\chi_{A_0}: \mathbb{C}^{\tilde{A}}\to\mathbb{C}^{A_0}$ as the restriction to the internal graph  such that 
\[ (\chi_{A_0}\tilde{\psi})(a)=\tilde{\psi}(a). \]
Then the adjoint is described by
\[ (\chi_{A_0}^*\psi)(a) = \begin{cases} \psi(a) & \text{: $a\in A_0$,}\\ 0 & \text{: otherwise.} \end{cases} \]
Let the restriction to $\mathbb{C}^{A_0}$ of $\psi_n$ be denoted by $\psi_n:=\chi \tilde{\psi}_n$. Then we have 
\begin{equation}\label{eq:iteration_psi_n}
    \psi_{n+1}=E_{PON}\psi_n+z^{-n}\rho,\;\psi_0=0, 
\end{equation} 
where $E_{PON}=\chi_{A_0}U\chi_{A_0}^*$, and $\rho=\chi_{A_0} U\psi_0$. Details are found in \cite{HS}. 
It is easy to see that $\mathrm{spec}(E_{PON})\subset \mathbb{D}:=\{z\in \mathbb{C} \;|\; |z|\leq 1\}$. 
Let us extend the domain of $z$ to $|z|<1/|\lambda_1|$ in the following. Here $\lambda_1\in \mathrm{spec}(E_{PON})$ has the largest absolute value  in $\mathrm{spec}(E_{PON})\setminus \delta \mathbb{D}$.  
From the time evolution of the original time sequence $\psi_n$ in  (\ref{eq:iteration_psi_n}), $\psi_n$ obviously does not converge to a fixed point if $z\in \delta \mathbb{D} \setminus \{1\}$ because of the oscillation of the outflow represented by  $z^{-n}\rho$.  
Thus instead of $\psi_n$, we set $\phi_n:=z^n\psi_n$ to cancel the oscillation. Indeed, (\ref{eq:iteration_psi_n}) is written by 
\begin{equation}\label{eq:iteration_phi_n}
    z^{-1}\phi_{n+1}=E_{PON}\phi_n+\rho,\;\phi_0=0.
\end{equation} 
By \cite{HS} and Proposition~\ref{prop:analytic} in this paper, $\phi_n$ converges to a fixed point when $|z|<1/|\lambda_1|$, that is, 
\[ \exists ! \lim_{n\to\infty}\phi_n=:\phi_z\]
and $\phi_z$ is a solution of the following equation: 
\begin{equation}\label{eq:limitz}
    z^{-1}\phi_z=E_{PON}\phi_z+\rho. 
\end{equation} 
This is equivalent to stating that for any $\epsilon>0$, there exists $n_0$ such that 
\begin{align}\label{eq:limitz0}
 |z|^n \; || \psi_n-z^{-n}\phi_z ||<\epsilon
\end{align}
for any $n>n_0$. If $1\leq |z|< 1/|\lambda_1|$, this includes 
\begin{equation}\label{eq:stableorbit}
     || \psi_n-z^{-n}\phi_z ||<\epsilon.  
\end{equation}
Then by (\ref{eq:stableorbit}), the sequence $\{\psi_n\}$ is attracted to the stable orbit  $\mathcal{A}_z:=\{z^{-n}\phi_z \;|\; n\in \mathbb{N}\}$; indeed inserting $z^{-n}\phi_z=:\psi^{(n)}\in \mathcal{A}_z$ into LHS of (\ref{eq:iteration_psi_n}) as $\psi_n$, we have $E\psi^{(n)}+z^{-n}\rho=\psi^{(n+1)}\in \mathcal{A}_z$ for $1\leq |z|<1/|\lambda_1|$ by (\ref{eq:limitz}).
Thus once we obtain $\phi_z$, the limit behavior of the original $\psi_n$ can be described in the stable orbit $\mathcal{A}_z$. 
In particular, if $|z|=1$, then $|\phi_z(a)|^2=\lim_{n\to^\infty}|\psi_n(a)|^2$ for any $a\in A_0$.  On the other hand, if $1<|z|<1/|\lambda_1|$, then $\psi_n$ degenerates to $0$ as $n\to\infty$. 
This is the reason for focusing on $\phi_z$ or $\tilde{\phi}_z$ with $z\in \delta \mathbb{D}$. 
Note that for a technical reason, we sometimes extend the domain of $z$ to $|z|<1/|\lambda_1|$ in this paper. 

Let us return to the original dynamics on 
$\tilde{G}=(\tilde{V},\tilde{A})$ with the initial condition (\ref{eq:initialstate}). From the observation stated above, 
we can easily construct $\tilde{\psi}^{(n)}$ on $\tilde{G}$, 
by extending $\psi^{(n)}\in \mathcal{A}_z$ on $G_{0}$, such that  
$\tilde{\psi}^{(n)}$ satisfies 
$$
\tilde{\psi}^{(n+1)} = 
z^{-1} \tilde{\psi}^{(n)} = 
U\tilde{\psi}^{(n)}
$$
and 
$$
\tilde{\psi}^{(n)}(a)=\alpha_j z^{-(k+n)} 
$$
for every $n=0,1,2,\dots$ and 
 every "incoming arc" $a$ on tails;
 that is, 
 $a\in A_j$ such that    $\dist(o(a),u_j)>\dist(t(a),u_j)=k$ 
$(j=1,\dots,r$; $k=0,1\dots)$.
For $|z|=1$, the stationary state for $U$ 
with (\ref{eq:initialstate}) does not exist in general, but we can see that the long time behavior of the dynamics with (\ref{eq:initialstate}) tends to the 
stable orbit 
$\tilde{\mathcal{A}}_z=\{\tilde{\psi}^{(n)}\}$. Naturally, for each 
arc $a\in\tilde{A}$, the value  $|\tilde{\psi}^{(n)}(a)|^{2}$ does not depend on $n$: in particular, 
$|\tilde{\psi}^{(n)}(a)|^{2} = 
|\phi_z(a)|^{2}$ for every arc $a\in A_{0}$ in the internal graph $G_{0}$. 
In this sense, 
we can catch the ``stationary state" for 
this kind of dynamics on $\tilde{G}$; 
thereafter we mainly  
discuss the characterization of 
$\phi_z$ on $G_{0}$ with 
the boundary $\delta V_0$, 
whose precise definition is seen in the next subsection.
\subsection{Definitions of the stationary state, the scattering matrix and generalized Laplacian matrix}\label{subsec:2.6}
\begin{definition}[Stationary state]
Let $\tilde{\psi}_n$ be the $n$-th iteration of Grover walk $U$ on $\tilde{G}$ with the initial state $\tilde{\psi}_0$ (\ref{eq:initialstate}).  
The stationary state of this quantum walk is defined by 
\[ \tilde{\phi}_z=\lim_{n\to\infty}z^n\tilde{\psi}_n. \]
In particular, the restriction of $\tilde{\phi}_z$ to $V_0$ is denoted by $\phi_z\in \mathbb{C}^{A_0}$; that is,
\[\phi_z=\chi_{A_0}\tilde{\phi}_z. \]
\end{definition}
Let the set of boundary vertices be $\delta V_0:=\{u_1,\dots,u_r\}\subset V_0$ and the arc of the tail whose terminal vertex is $u_j$ by $e_j$ $(j=1,\dots,r)$.  
Let the input from the tail $\bs{\alpha}_{in}\in\mathbb{C}^{V_0}$ be defined by \begin{equation}\label{eq:input}
\bs{\alpha}_{in}(u)=\begin{cases} \tilde{\phi}_z(e_j)=\tilde{\phi}_0(e_j) & \text{: $t(e_j)=u$ ($j=1,\dots,r$)}\\
0 & \text{: otherwise}\end{cases}
\end{equation}
while the output to the tail
$\bs{\beta}_{out}\in\mathbb{C}^{V_0}$ be defined by
\[\bs{\beta}_{out}(u)=\begin{cases} \tilde{\phi}_z(\bar{e}_j) & \text{: $t(e_j)=u$ ($j=1,\dots,r$)}\\
0 & \text{: otherwise.}\end{cases}
\]
Let us set 
 $\chi_{\delta V_0}: \mathbb{C}^{V_0}\to \mathbb{C}^{\delta V_0}$ by 
\[ (\chi_{\delta V_0}f)(u_j)=f(u_j)\;(j=1,\dots,r). \]
The adjoint $\chi_{\delta V_0}^*:\mathbb{C}^{\delta V_0}\to \mathbb{C}^{V_0}$ is described by 
\[ (\chi_{\delta V_0}^*f_o)(u)=\begin{cases} f_o(u) & \text{: $u\in \delta V_0$} \\ 0 & \text{: $u\notin \delta V_0$} \end{cases}
\]
We put $\bs{\alpha}_\delta=\chi_{\delta V_0}\bs{\alpha}_{in}$ and $\bs{\beta}_\delta=\chi_{\delta V_0}\bs{\beta}_{out}$. 
In~\cite{FelHil1,HS}, the existence of the following matrix is ensured and this matrix is a unitary matrix on $\mathbb{C}^{\delta V_0}$.
\begin{definition}[Scattering matrix]
The scattering matrix on the surface is defined by \[ \bs{\beta}_\delta=S(z;G_0;\delta V_0)\bs{\alpha}_\delta, \]
which is independent of the in- and outputs $\bs{\alpha}_\delta$ and $\bs{\beta}_{\delta}$. 
\end{definition}
We are also interested in the stationary state in the internal graph. So we define the following which may be interpreted as how much quantum walker feels comfortable in the internal graph. 
\begin{definition}[Comfortability]\label{def:2.3}
For $z\in \delta\mathbb{D}$,
the comfortability is defined by 
\[ \mathcal{E}_z=\frac{1}{2}||\phi_z||^2_{\ell^2}=\frac{1}{2}\sum_{a\in A_0}|\phi_z(a)|^2. \]
\end{definition}

We will consider how the information on the internal graph appears in the stationary state, the scattering matrix and the comfortability. 
To this end, let us 
introduce a key matrix on $\mathbb{C}^{V_0}$. 
\begin{definition}[Generalized Laplacian matrix]\label{def:jz}
For $z\in \mathbb{C}\setminus \{0\}$,  
\[ j_{\pm}(z):= \frac{z\pm z^{-1}}{2}.  \]
Let $L_z:=L(z;G_0;\delta V_0)$ be defined by
    \[ L_z:= M_0-j_+(z)D_0+j_-(z)\Pi_{\delta V_0}. \]
\end{definition}
The matrix  $L_z$ with parameter $z$ on $\mathbb{C}^{V_0}$ reproduces the Laplacian matrix ($z=1$) and the signless Laplacian matrix ($z=-1$) of $G_0$ by the parameter $z\in \delta\mathbb{D}$. Then we call $L_z$ the generalized Laplacian matrix. The generalized Laplacian matrix has the information on the graph and its boundaries and plays an important role in expressing the stationary state of the Grover walk. 
\section{Main theorems}
Recall that the stationary state of the quantum walk is denoted by $\tilde{\phi}_z$ with the initial state expressed by $\bs{\alpha}_{in}\in \mathbb{C}^{V_0}$.  
The stationary state restricted to $G_0$ is denoted by
\[ \phi_z:=\chi_{A_0}\tilde{\phi}_z. \]

Let $\tilde{\nu}_z\in \mathbb{C}^{\tilde{V}}$ denote  the average of the stationary state $\tilde{\phi}_z(a)$ for $t(a)=u$, that is,  
\begin{equation}\label{eq:nuz}
\tilde{\nu}_z(u)= \frac{1}{\deg_{\tilde{G}}(u)}\sum_{t(a)=u\text{ in $\tilde{G}$}} \tilde{\phi}_z(a)  \end{equation}
for every $u\in \tilde{V}$. 
Let $\chi_{V_0}: \mathbb{C}^{\tilde{V}}\to\mathbb{C}^{V_0}$ be the restriction to the internal graph  such that 
\[ (\chi_{V_0}f)(u)=f(u) \]
and the adjoint is described by
\[ (\chi_{V_0}^*g)(u) = \begin{cases} g(u) & \text{: $u\in V_0$,}\\ 0 & \text{: otherwise.} \end{cases} \]
The restriction of $\tilde{\nu}_z$ to $G_0$ is denoted by 
\[ \nu_z := \chi_{V_0}\tilde{\nu}_z. \]

The following theorem is the starting point of all the considerations for the stationary state:
the function $\nu_z$ is a kind of potential function of the stationary state $\phi_z$ and satisfies a kind of Poisson equation. 
\begin{reptheorem}{prop:PoiPoten}[Circuit equation for the Grover walk]
Let $z\in \delta \mathbb{D}$. 
Assume $G_0=(V_0,A_0)$ is a connected and symmetric digraph and $\tilde{G}=(\tilde{V},\tilde{A})$ is the tailed graph of $G_0$ with the boundary vertex set $\delta V_0$. 
Then the function $\tilde{\nu}_z\in \mathbb{C}^{\tilde{V}}$ is the twisted potential function of $\tilde{\phi}_z\in \mathbb{C}^{\tilde{A}}$ such that
\begin{equation}\label{eq:startingpoint2}
    j_-(z)\;\tilde{\phi}_z(a)=z\;\tilde{\nu}_z(t(a))-\tilde{\nu}_z(o(a)) 
\end{equation}
for any $a\in \tilde{A}$.
Here $\nu_z\in \mathbb{C}^{V_0}$ satisfies the Poisson equation 
\begin{equation}\label{eq:startingpoint}
L_z\; \nu_z = j_-(z)\;\bs{\alpha}_{in}
\end{equation}
for every $u\in V_0$. 
\end{reptheorem}
\begin{proof}
By the definition of the Grover walk, the generalized  eigenequation in Theorem~\ref{thm:existance} implies  
\begin{equation}\label{eq:eigeneq}
    \frac{1}{2}\left(\; \tilde{\phi}_z(a)+z\tilde{\phi}_z(\bar{a})  \;\right)= z \tilde{\nu}_z(o(a))
\end{equation}
for every $a\in \tilde{A}$. Since (\ref{eq:eigeneq}) also holds at the inverse of the arc $a\in \tilde{A}$, we have 
\begin{equation}\label{eq:eigeneq2}
    \frac{1}{2}\left(\; \tilde{\phi}_z(\bar{a})+z\tilde{\phi}_z(a)  \;\right)= z \tilde{\nu}_z(t(a)).
\end{equation}
Equations (\ref{eq:eigeneq}) and (\ref{eq:eigeneq2}) imply 
\begin{equation}\label{eq:eigeneq3}
    j_-(z)\;\tilde{\phi}_z(a) = z \tilde{\nu}_z(t(a))-\tilde{\nu}_z(o(a)). 
\end{equation}
Then the proof of (\ref{eq:startingpoint2}) is completed. 
Taking the summation of (\ref{eq:eigeneq3}) over $t(a)=u$, we obtain the following generalized eigenequation of the random walk induced by the generalized eigenequation of the quantum walk
\begin{equation}\label{eq:Randomwalk} j_+(z)\tilde{\nu}_z(u)=(\tilde{P}\tilde{\nu}_z)(u)
\end{equation}
for every $u\in \tilde{V}$. 
Recall that the set of boundary vertices are denoted by $\delta V_0:=\{u_1,\dots,u_r\}\subset V_0$ and the arc of the tail whose terminal vertex is $u_j$ by $e_j$ $(j=1,\dots,r)$. Note that $\phi_z(e_j)=\alpha_j$ ($j=1,\dots,r$), which is the input to the internal graph. 
Let us consider (\ref{eq:Randomwalk}) on the boundary as follows. 
By (\ref{eq:eigeneq3}), we have 
\begin{equation}\label{eq:boundary}
    \tilde{\nu}_z(o(e_j))=z\; \tilde{\nu}_z(u_j)-j_-(z)\;\alpha_j,\text{ ($j=1,\dots,r$).}
\end{equation}
Putting $P'=\chi_{V_0}\tilde{P}\chi_{V_0}$, which is the principal submatrix of $\tilde{P}$ with respect to the internal graph, we have an equivalent expression of (\ref{eq:Randomwalk}) as follows: 
\begin{align}\label{eq:internal}
    j_+(z)\nu_z(u_j)&=\frac{1}{\tilde{d}(u_j)}\tilde{\nu}_z(o(e_j))+(P'\nu_z)(u_j) \text{ for any $u_j\in \delta V_0$,}\\
    j_+(z)\nu_z(u)&=(P'\nu_z)(u) \text{ for any $u\in V_0\setminus \delta V_0$.}
\end{align}
Inserting (\ref{eq:boundary}) into the expression of $\tilde{\nu}_z(o(e_j))$ in (\ref{eq:internal}) and remarking that  $\tilde{\nu}_z(u_j)=\nu_z(u_j)$, we obtain
\begin{equation}\label{eq:RWQW}
    (P'-j_+(z)+z {\tilde{D}}^{-1}\Pi_{\delta V_0})\nu_z=j_-(z){\tilde{D}}^{-1}\bs{\alpha}_{in}. 
\end{equation}
Here $\tilde{D}$ is the degree matrix in $\tilde{G}$, that is,  $(\tilde{D}f)(u)=\tilde{d}(u)f(u)$ for any $u\in V_0$. Since $\tilde{d}(u_j)=d(u_j)+1$ 
for every boundary vertex, note that $\tilde{D}=D_0+\Pi_{\delta V_0}$.
Then (\ref{eq:RWQW}) is equivalent to 
\begin{equation*}
L(z;G_0;\delta V_0)\nu_z = j_-(z)\bs{\alpha}_{in}. 
\end{equation*}
\end{proof}

The above statements (\ref{eq:startingpoint2}) and (\ref{eq:startingpoint}) in Theorem~\ref{prop:PoiPoten} hold for any connected and symmetric digraph $G_0$. The existence of ``$1/j_-(z)$"  depends on the input parameter $z$; that is, $z\notin \{\pm 1\}$,  while that of the inverse of the matrix $L_z$ depends on the internal graph geometry $G_0$ and the choice of the boundary $\delta V_0$. 
Let us set such points by 
\[\mathbb{B}_*:=\{ z\in \mathbb{C}\setminus\{0\} \;:\; \det(L_z)=0\}\cup \{\pm 1\}.   \]
The following theorem characterizes $\mathbb{B}_*$ by the spectrum of the principal submatrix of $U$, $E_{PON}:=\chi U\chi^*$~\cite{HS}. 
\begin{theorem}\label{thm:removable}
Let $E_{PON}$ be the principal submatrix of $U$ with respect to $A_0$; that is, $E_{PON}=\chi U\chi^*$. 
Let us denote the set $\mathrm{spec}^{\star}(E_{PON})$ as 
\[ \mathrm{spec}^{\star}(E_{PON}):=\{z^{-1} \;|\; z\in \mathrm{spec}(E_{PON})\setminus\{0\}\}. \]
Then we have 
\begin{equation}\label{eq:B*} \mathrm{spec}^{\star}(E_{PON})\cup\{\pm 1\}=\mathbb{B}_*. 
\end{equation}
In particular, 
\begin{equation}\label{eq:B*1}
\mathbb{B}_*\cap \delta\mathbb{D}
= j_+^{-1}(\sigma_{per})\cup \{\pm 1\}. 
\end{equation}
Here 
\[ \sigma_{per}:=\{ \lambda\in \mathrm{spec}(P_0) \;|\; \{f \;:\; \supp(f)\subset V_0\setminus \delta V \} \cap \ker (\lambda-P_0) \neq \emptyset\}. \]
\end{theorem}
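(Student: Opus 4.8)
The plan is to prove the two displayed identities separately, in both cases reducing a spectral condition on the arc operator $E_{PON}$ to a kernel condition on the vertex operator $L_z$. For the first identity (\ref{eq:B*}) I would fix $z\neq\pm 1$ (so that $j_-(z)\neq 0$, since $j_-(z)=0$ is equivalent to $z^2=1$) and exhibit a pair of mutually inverse linear maps between $\ker(z^{-1}I_{A_0}-E_{PON})\subset\mathbb{C}^{A_0}$ and $\ker L_z\subset\mathbb{C}^{V_0}$. Once such an isomorphism is in hand, $\det L_z=0$ holds exactly when $z^{-1}$ is an eigenvalue of $E_{PON}$; adjoining $\{\pm 1\}$ to both collections then yields $\mathbb{B}_*=\mathrm{spec}^{\star}(E_{PON})\cup\{\pm 1\}$ directly from the definition of $\mathrm{spec}^{\star}$.

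The two maps are the vertex-averaging map and the twisted-gradient map, exactly the objects appearing in the proof of Theorem~\ref{prop:PoiPoten}, now run on a homogeneous eigenvector rather than on the inhomogeneous stationary state. Starting from $\psi\in\ker(z^{-1}I_{A_0}-E_{PON})$ I would set $\tilde\psi=\chi_{A_0}^*\psi$ and $\nu(u)=\tilde d(u)^{-1}\sum_{t(b)=u,\,b\in A_0}\psi(b)$. Because $E_{PON}=\chi_{A_0}U\chi_{A_0}^*$ enforces the Grover eigenrelation only on internal arcs, the key point is that the tail-arc equations used in Theorem~\ref{prop:PoiPoten} are no longer available; instead I would derive, from the explicit form (\ref{eq:dynamics}) of $U$, the twisted-gradient identity $j_-(z)\psi(a)=z\nu(t(a))-\nu(o(a))$ directly on $A_0$, and then obtain $L_z\nu=0$ by summing this identity over internal arcs terminating at $u$. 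The only bookkeeping that matters is that $\tilde d(u)=d(u)+1$ at boundary vertices and $\tilde d(u)=d(u)$ elsewhere: the combination $z-j_-(z)=j_+(z)$ reproduces the term $-j_+(z)D_0$, while the extra unit of degree at $\delta V_0$ produces precisely the $+j_-(z)\Pi_{\delta V_0}$ term, with the correct sign. Conversely, given $\nu\in\ker L_z$ I would define $\psi(a)=j_-(z)^{-1}(z\nu(t(a))-\nu(o(a)))$; substituting $L_z\nu=0$ collapses the neighbour-average of $\psi$ at $o(a)$ back to $\nu(o(a))$, and a short computation recovers $(U\tilde\psi)(a)=z^{-1}\psi(a)$ on $A_0$. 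These two assignments are visibly inverse to one another, and since $j_-(z)\neq 0$ neither can send a nonzero vector to zero; this gives the claimed isomorphism.

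For the refinement (\ref{eq:B*1}) I would restrict to $z=e^{i\theta}\in\delta\mathbb{D}$ with $\theta\neq 0,\pi$ and exploit the Hermitian splitting $L_z=(M_0-\cos\theta\,D_0)+i\sin\theta\,\Pi_{\delta V_0}$, in which both $M_0-\cos\theta\,D_0$ and $\sin\theta\,\Pi_{\delta V_0}$ are Hermitian. Pairing $L_z\nu=0$ with $\nu$ and separating real and imaginary parts forces $\sin\theta\,\langle\nu,\Pi_{\delta V_0}\nu\rangle=0$; since $\sin\theta\neq 0$ this shows $\nu$ vanishes on $\delta V_0$. The equation $L_z\nu=0$ then collapses to $M_0\nu=\cos\theta\,D_0\nu$, i.e. $P_0\nu=\cos\theta\,\nu$ with $\nu$ supported in $V_0\setminus\delta V_0$, so $\cos\theta=j_+(z)\in\sigma_{per}$. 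The reverse inclusion is immediate: an eigenvector of $P_0$ supported off the boundary with eigenvalue $j_+(z)$ lies in $\ker L_z$ because the $\Pi_{\delta V_0}$ term annihilates it. Noting finally that every $\lambda\in\mathrm{spec}(P_0)\subset[-1,1]$ satisfies $j_+^{-1}(\lambda)\subset\delta\mathbb{D}$ closes the computation and gives $\mathbb{B}_*\cap\delta\mathbb{D}=j_+^{-1}(\sigma_{per})\cup\{\pm 1\}$.

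The step I expect to be the main obstacle is the boundary bookkeeping in the averaging map: one must resist reusing the tail-arc relations from the proof of Theorem~\ref{prop:PoiPoten} (which fail for the truncated operator $E_{PON}$) and instead verify by direct summation that the mismatch between $\tilde d$ and $d$ on $\delta V_0$ is exactly what converts the naive coefficient into $+j_-(z)\Pi_{\delta V_0}$ with the correct sign. Everything else—the inverse-map verification and the Hermitian-splitting argument—is routine linear algebra once this identity is pinned down.
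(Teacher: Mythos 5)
Your proposal is correct, and I have checked the key computations: the eigenvalue relation $E_{PON}\psi=z^{-1}\psi$ at an internal arc and at its reverse do combine to give $j_-(z)\psi(a)=z\nu(t(a))-\nu(o(a))$ with $\nu(u)=\tilde d(u)^{-1}\sum_{t(b)=u,\,b\in A_0}\psi(b)$; summing over $t(a)=u$ then produces exactly $L_z\nu=0$, the mismatch $\tilde d(u)-d(u)=1$ on $\delta V_0$ giving the $+j_-(z)\Pi_{\delta V_0}$ term with the right sign, and the two maps are indeed mutually inverse bijections between $\ker(z^{-1}-E_{PON})$ and $\ker L_z$ when $z\neq\pm1$ (injectivity of the reverse map uses connectedness and $z^2\neq1$, as you note). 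However, your route is genuinely different from the paper's. The paper does not touch eigenvectors at all: it imports from \cite{HS} the fact that $\mathrm{spec}(E_{PON})\setminus\{\pm1\}$ is the zero set of $\det((z^2+1)I-2zT-2\Pi_{\delta V_0}D^{-1})$ and then performs a purely algebraic determinant manipulation (using $T=D^{-1/2}M_0D^{-1/2}$, $D=D_0+\Pi_{\delta V_0}$, and $j_-(z^{-1})=-j_-(z)$) to rewrite that determinant as $(-2z)^{|V_0|}\det(D)^{-1}\det(L_{z^{-1}})$, which yields (\ref{eq:B*}) in one stroke; for (\ref{eq:B*1}) it uses realness of $E_{PON}$ (so that inversion on $\delta\mathbb{D}$ is conjugation) and cites \cite{HS} a second time for $(\mathrm{spec}(E_{PON})\cap\delta\mathbb{D})\setminus\{\pm1\}=j_+^{-1}(\sigma_{per})$. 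What your approach buys is self-containedness and extra information: the intertwining maps give an isomorphism of eigenspaces (hence equality of geometric multiplicities, not merely of zero sets of determinants), and your Hermitian-splitting argument for (\ref{eq:B*1}) — vanishing of $\langle\nu,\Pi_{\delta V_0}\nu\rangle$ forces the kernel vector to be supported off the boundary and to be a $P_0$-eigenvector — effectively reproves the second \cite{HS} fact rather than citing it. What the paper's approach buys is brevity (two determinant lines, modulo the imported results) and a global identity between characteristic polynomials, which is stronger than a statement about kernels at individual points $z$.
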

\begin{proof}
By \cite{HS}, the eigenvalues of $E_{PON}$ except $\pm 1$ are the zeros of the following equation with respect to $z$: 
\begin{equation}\label{eq:HS} \det((z^2+1)I-2zT-2\Pi_{\delta V_0}D^{-1} )=0. 
\end{equation}
Here $T=D^{-1/2}M_0D^{-1/2}$. 
The LHS can be deformed by 
\begin{align}
\det((z^2+1)I-2zT-2\Pi_{\delta V_0}D^{-1})
&= (2z)^{|V_0|}\det( j_+(z)-T-\Pi_{\delta V_0}(zD)^{-1} ) \notag \\
&= (2z)^{|V_0|}\det(D)^{-1}\det( j_+(z)D-M_0-z^{-1}\Pi_{\delta V_0} ) \notag \\
&=  (-2z)^{|V_0|}\det(D)^{-1}\det (j_+(z)D_0-M_0+j_-(z)\Pi_{\delta V_0}) \notag \\
&= (-2z)^{|V_0|}\det(D)^{-1}\det(L_{z^{-1}}). 
\label{eq:gen}
\end{align}
Here in the second and third equalities, we used $T=D^{-1/2}M_0D^{1/2}$ and  
$D=D_0+\Pi_{\delta V_0}$, respectively; the final equality derives from $j_-(z^{-1})=-j_-(z)$.  
Then we have finished the proof of (\ref{eq:B*}). 
For (\ref{eq:B*1}), since $E_{PON}$ is a real matrix, we have
\begin{align*}
    \mathrm{spec}^\star(E_{PON})\cap \delta \mathbb{D} &= \mathrm{spec} (E_{PON}) \cap \delta \mathbb{D}.
\end{align*}
By \cite{HS}, 
\[(\mathrm{spec} (E_{PON}) \cap \delta \mathbb{D}) \setminus \{\pm 1\}
= j_+^{-1}(\sigma_{per}), \]
which gives the desired conclusion.
\end{proof}

\begin{remark}
By (\ref{eq:HS}), if $0\in \mathrm{spec}(E_{PON})$, then $\det(I-2\Pi_{\delta V_0}D^{-1})$ must be $0$. Since $I-2\Pi_{\delta V_0}D^{-1}$ is a diagonal matrix, there must exist $0$ in the diagonal entries corresponding to $\delta V$ vertices. 
Then the definition of $D$ implies that 
the internal graph $G_0$ has at least one leaf in $\delta V$. 
Conversely, it is easy to see that if the internal graph $G_0$ has a leaf in $\delta V$, then $0\in \mathrm{spec}(E_{PON})$. 
Thus $0\in \mathrm{spec}(E_{PON})$ if and only if the internal graph $G_0$ has a leaf in $\delta V$. 
\end{remark}
\begin{corollary}
If $G_0$ is $\kappa$-regular and $\delta V_0=V_0$, then 
\[ \sigma(E_{PON})\setminus \delta\mathbb{D} \subset \left\{ w\in \mathbb{C} \;\bigg|\; |w|=\sqrt{\frac{\kappa-1}{\kappa+1}} \right\} \cup \left\{ w\in\mathbb{R} \;\bigg|\; \frac{\kappa-1}{\kappa+1}\leq |w|<1 \right\}, \]
and $E_{PON}$ has an eigenvalue in the second term if and only if  
\[\sigma (P_0) \cap \{(-1,-1+1/\kappa)\cup(1-1/\kappa, 1) \}\neq \emptyset. \]
\end{corollary}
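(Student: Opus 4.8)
\noindent\emph{Proof strategy.}\quad The plan is to specialize Theorem~\ref{thm:removable} to the present case and collapse the spectral problem to a single scalar quadratic per eigenvalue of $P_0$. Since $G_0$ is $\kappa$-regular and $\delta V_0=V_0$, I would substitute $D_0=\kappa I$, $\Pi_{\delta V_0}=I$, hence $D=\tilde D=(\kappa+1)I$ and $T=\tfrac{1}{\kappa+1}M_0=\tfrac{\kappa}{\kappa+1}P_0$, into the defining equation \eqref{eq:HS}. As $M_0$, $D$ and $\Pi_{\delta V_0}$ are now simultaneously diagonalized by the orthonormal eigenbasis of the symmetric matrix $P_0$, the determinant in \eqref{eq:HS} factors over $\sigma(P_0)\subset[-1,1]$, and Theorem~\ref{thm:removable} guarantees that every $w\in\sigma(E_{PON})\setminus\{\pm1\}$ is a root of
\[ w^2-\frac{2\kappa\lambda}{\kappa+1}\,w+\frac{\kappa-1}{\kappa+1}=0 \]
for some $\lambda\in\sigma(P_0)$.

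For the containment I would read off the two invariants of this quadratic: the product of its roots, $\tfrac{\kappa-1}{\kappa+1}$, and its discriminant, which is proportional to $\kappa^2\lambda^2-(\kappa^2-1)$. When the discriminant is negative the roots form a complex-conjugate pair, so each has modulus $\sqrt{(\kappa-1)/(\kappa+1)}$ and lies in the first set. When it is nonnegative both roots are real with positive product, hence of equal sign; tracking them as $\lambda$ increases to the Perron value $1$ (and, in the bipartite case, decreases to $-1$) shows each has modulus in $[\tfrac{\kappa-1}{\kappa+1},1)$, with modulus exactly $1$ occurring only at $\lambda=\pm1$, where that root is the excluded value $\pm1$. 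This gives the asserted inclusion.

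For the equivalence I would isolate exactly when a real eigenvalue strictly off the circle $\{|w|=\sqrt{(\kappa-1)/(\kappa+1)}\}$ is produced: this occurs iff the discriminant is positive for some $\lambda\neq\pm1$, that is $\kappa^2\lambda^2>\kappa^2-1$, equivalently $|\lambda|>\sqrt{\kappa^2-1}/\kappa$. Rewriting the admissible set of such $\lambda$ as a pair of symmetric open intervals abutting $\pm1$ then converts the existence of an off-circle real eigenvalue into a nonemptiness condition on $\sigma(P_0)$ of the stated form.

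The delicate point I anticipate is the bookkeeping at $\lambda=\pm1$. The Perron value $1$ always lies in $\sigma(P_0)$ and contributes the pair $\{1,\tfrac{\kappa-1}{\kappa+1}\}$: the first must be removed as a point of $\delta\mathbb{D}$, while the co-root $\tfrac{\kappa-1}{\kappa+1}$ is a genuine real eigenvalue sitting strictly inside the circle. I must therefore place this always-present eigenvalue correctly in the inclusion while ensuring the equivalence detects only eigenvalues pushed outside the circle by non-Perron spectrum. Fixing the open/closed status of the interval endpoints, and confirming that the discriminant threshold $\sqrt{\kappa^2-1}/\kappa$ matches the endpoints named in the statement, is what I expect to require the most care.
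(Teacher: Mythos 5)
Your reduction to the scalar quadratic $w^2-\tfrac{2\kappa\lambda}{\kappa+1}w+\tfrac{\kappa-1}{\kappa+1}=0$ for $\lambda\in\sigma(P_0)$ is exactly the paper's argument: the paper passes through (\ref{eq:gen}) rather than (\ref{eq:HS}) and writes the same quadratic as $(\kappa+1)z^2-2\mu z+(\kappa-1)=0$ with $\mu=\kappa\lambda\in\sigma(M_0)$. Your proof of the inclusion---conjugate roots of modulus $\sqrt{(\kappa-1)/(\kappa+1)}$ when the discriminant is negative, real roots of modulus in $[\tfrac{\kappa-1}{\kappa+1},1)$ when it is nonnegative, modulus $1$ occurring only for the excluded roots $\pm1$---coincides with (and is somewhat more careful than) the paper's. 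So the first assertion is fine, and both proofs handle it the same way.

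The gap is the equivalence, and it sits precisely at the point you deferred. Your discriminant threshold is $|\lambda|\geq\sqrt{\kappa^2-1}/\kappa=\sqrt{1-1/\kappa^2}$, while the intervals in the statement have endpoints $\pm(1-1/\kappa)$; since $\sqrt{1-1/\kappa^2}>1-1/\kappa$ for $\kappa\geq2$, these sets genuinely differ and the final ``rewriting'' step cannot be carried out. Concretely, for the $5$-cycle with every vertex tailed ($\kappa=2$), the eigenvalue $\lambda=\cos(4\pi/5)\approx-0.809$ lies in the stated interval $(-1,-1/2)$ yet satisfies $|\lambda|<\sqrt{3}/2$, so it produces only on-circle roots. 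Worse, the Perron bookkeeping you flag is not merely delicate but fatal to the statement as written: the co-root $\tfrac{\kappa-1}{\kappa+1}$ of $\lambda=1$ is always a genuine eigenvalue of $E_{PON}$ (one checks directly that $E_{PON}\bs{1}_{A_0}=\tfrac{\kappa-1}{\kappa+1}\bs{1}_{A_0}$ when every vertex carries exactly one tail), and it is real, strictly inside the circle, and inside the second set; hence ``$E_{PON}$ has an eigenvalue in the second term'' holds unconditionally for $\kappa\geq2$, e.g.\ $1/3\in\sigma(E_{PON})$ for $K_3$, where $\sigma(P_0)=\{1,-1/2\}$ meets neither stated interval. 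So the equivalence cannot be obtained by your route---nor, for what it is worth, by the paper's: the paper's own proof consists solely of the containment analysis and never addresses the second assertion, which as printed appears to need reformulation (restrict to eigenvalues not produced by $\lambda=\pm1$ and replace $1-1/\kappa$ by $\sqrt{1-1/\kappa^2}$). In short, your proposal is correct exactly as far as the paper's proof actually goes, and the confirmation you anticipated as ``requiring the most care'' is impossible as stated rather than merely delicate.
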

\begin{proof}
From (\ref{eq:gen}), we have 
\begin{align}
    z\in \mathrm{spec}(E_{PON})\setminus\{\pm 1\}
    & \Leftrightarrow 
    \det( (\kappa j_+(z)+j_-(z))I -M_0)=0.
\end{align}
This implies that an eigenvalue $\mu\in \mathrm{spec}(M)$ is mapped to 
\begin{align*}
    z &=\frac{\mu\pm \sqrt{\mu^2-(\kappa^2-1)}}{\kappa+1} \\
      &= \sqrt{\frac{\kappa-1}{\kappa+1}}\left\{ \frac{\mu}{\sqrt{\kappa^2-1}}\pm \sqrt{\left(\frac{\mu}{\sqrt{\kappa^2-1}}\right)^2-1} \right\}
\end{align*} 
as eigenvalues of $E_{PON}$. 
If $\mu^2\geq \kappa^2-1$, then $z\in \mathbb{R}$. Moreover, since $\sqrt{\kappa^2-1}\leq |\mu|\leq \kappa$, it is easy to see that $(\kappa-1)/(\kappa+1)\leq |z|<1$. 
On the other hand, if $\mu^2\leq \kappa^2-1$, then $\mu$ is on the circle with the radius $\sqrt{(\kappa-1)/(\kappa+1)}$. 
\end{proof}
\begin{remark}
Assume $G_0$ is $\kappa$-regular and $\delta V_0=V_0$. 
If the internal graph $G_0$ is the Ramanujan graph with $\kappa\geq 3$; that is, $|\lambda|\leq 2\sqrt{\kappa-1}/\kappa$ for any eigenvalue  $\lambda$ of $P_0$ except $1$, then all the eigenvalues of $E_{PON}$ in $\mathbb{D}\setminus \delta \mathbb{D}$ live on the circle with the radius $\sqrt{(\kappa-1)/(\kappa+1)}$.
\end{remark}

There exist many kinds of graphs such that $j_+^{-1}(\sigma_{per})\neq \emptyset$; for example, 
the complete graph with $|\delta V_0|<|V_0|-1$ (see Section~\ref{sect:example}). 
Thus for such a graph which has the spectrum $z\in j_+^{-1}(\sigma_{per})$, we need to pay attention when expressing the stationary state with the inflow $z\in j_+^{-1}(\sigma_{per})$, since $L_z$ itself becomes non-invertible. 
However, the following proposition is simple but has an important message from \cite{HS}. 
\begin{proposition}\label{prop:analytic}
Let $\lambda_1,\lambda_2,\dots$ be all the eigenvalues of $E_{PON}$ in $\mathbb{D}\setminus \delta \mathbb{D}$ with $1>|\lambda_1|\geq |\lambda_2|\geq \cdots$. 
Then the stationary state $\phi_z$ is analytic with respect to $z$ in $0< |z|<1/|\lambda_1|$. 
\end{proposition}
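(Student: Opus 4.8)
The plan is to convert the fixed-point relation (\ref{eq:limitz}) into a closed resolvent expression and then locate its singularities. Solving (\ref{eq:limitz}) for $\phi_z$ gives
\[
\phi_z=(z^{-1}I-E_{PON})^{-1}\rho=z(I-zE_{PON})^{-1}\rho,
\]
and a direct computation of $\rho=\chi_{A_0}U\tilde{\psi}_0$ shows that the only surviving contribution is the $k=0$ term $\tilde{\psi}_0(e_j)=\alpha_j$ of (\ref{eq:initialstate}), so that $\rho$ is a fixed vector, supported on the arcs issuing from $\delta V_0$ and independent of $z$. Hence analyticity of $\phi_z$ reduces to analyticity of the matrix-valued map $z\mapsto z(I-zE_{PON})^{-1}$ evaluated at $\rho$. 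Since $\mathrm{spec}(E_{PON})\subset\mathbb{D}$, the factor $(I-zE_{PON})^{-1}$ is rational with poles only at $z=1/\lambda$, $\lambda\in\mathrm{spec}(E_{PON})\setminus\{0\}$; the eigenvalues in $\mathbb{D}\setminus\delta\mathbb{D}$ contribute poles at $|z|\ge 1/|\lambda_1|$, but the eigenvalues on $\delta\mathbb{D}$ a priori produce poles on $|z|=1$. The whole proposition hinges on the fact that these unit-circle poles never act on $\rho$.

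To make this precise I would split $\mathbb{C}^{A_0}=\mathcal{H}_1\oplus\mathcal{H}_2$, where $\mathcal{H}_1$ is spanned by the eigenvectors of $E_{PON}$ with unimodular eigenvalue. Because $E_{PON}=\chi_{A_0}U\chi_{A_0}^*$ is a compression of the unitary $U$, it is a contraction, so for $E_{PON}v=\lambda v$ with $|\lambda|=1$ the equality $\|E_{PON}v\|=\|v\|$ forces $(I-E_{PON}^*E_{PON})v=0$ and hence $E_{PON}^*v=\bar\lambda v$. Eigenvectors for distinct unimodular eigenvalues are then orthogonal, $\mathcal{H}_1$ reduces $E_{PON}$ (being invariant under both $E_{PON}$ and $E_{PON}^*$), and $E_{PON}|_{\mathcal{H}_1}$ is unitary. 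Consequently the orthogonal complement $\mathcal{H}_2$ is invariant as well, with $\mathrm{spec}(E_{PON}|_{\mathcal{H}_2})=\mathrm{spec}(E_{PON})\cap(\mathbb{D}\setminus\delta\mathbb{D})$ of spectral radius $|\lambda_1|<1$.

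The heart of the matter, and the step I expect to be the main obstacle, is to prove $\rho\in\mathcal{H}_2$, i.e.\ $\rho\perp\mathcal{H}_1$. Given a unimodular eigenvector $v\in\mathcal{H}_1$, set $w=\chi_{A_0}^*v$. From $\|\chi_{A_0}Uw\|=\|E_{PON}v\|=\|v\|=\|Uw\|$ and the fact that $\chi_{A_0}$ merely drops the components outside $A_0$, one sees $Uw$ is supported in $A_0$, whence $Uw=\chi_{A_0}^*\chi_{A_0}Uw=\lambda\chi_{A_0}^*v=\lambda w$; that is, every unimodular eigenstate of $E_{PON}$ lifts to a genuine eigenstate of $U$ confined to $G_0$ (a \emph{trapped mode}), so $U^*w=\bar\lambda w$. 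Therefore
\[
\langle\rho,v\rangle=\langle U\tilde{\psi}_0,\chi_{A_0}^*v\rangle=\langle U\tilde{\psi}_0,w\rangle=\langle\tilde{\psi}_0,U^*w\rangle,
\]
which is a scalar multiple of $\langle\tilde{\psi}_0,w\rangle$ and hence vanishes, because $\tilde{\psi}_0$ lives on the incoming tail arcs while $w$ lives on $A_0$ and these supports are disjoint. This trapped-mode mechanism is precisely the one underlying Theorem~\ref{thm:removable}, where $\mathcal{H}_1$ is matched with $j_+^{-1}(\sigma_{per})$ together with $\{\pm1\}$.

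With $\rho\in\mathcal{H}_2$ and $\mathcal{H}_2$ invariant, the iterates $\phi_n=z\sum_{k=0}^{n-1}(zE_{PON})^k\rho$ of (\ref{eq:iteration_phi_n}) stay in $\mathcal{H}_2$, and since the spectral radius of $zE_{PON}|_{\mathcal{H}_2}$ equals $|z|\,|\lambda_1|<1$ for $|z|<1/|\lambda_1|$, the Neumann series converges locally uniformly there to $z(I-zE_{PON}|_{\mathcal{H}_2})^{-1}\rho$. As a locally uniform limit of the vector polynomials $\phi_n$, the function $\phi_z$ is analytic on $0<|z|<1/|\lambda_1|$; equivalently, on $\mathcal{H}_2$ the resolvent $(I-zE_{PON})^{-1}$ is rational with poles only at reciprocals of eigenvalues of modulus $\le|\lambda_1|$, all at distance $\ge 1/|\lambda_1|$ from the origin. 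Once the orthogonality $\rho\perp\mathcal{H}_1$ is established, the analytic continuation across $|z|=1$ is automatic; that orthogonality is the only genuinely nontrivial ingredient.
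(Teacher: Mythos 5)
Your proof is correct, and it shares the paper's overall skeleton: write $\phi_z$ as the power series $z\sum_{k\ge 0}(zE_{PON})^k\rho$ and show that $\rho$ has no component along the unimodular part of the spectrum of $E_{PON}$, so that the only relevant spectral radius is $|\lambda_1|$ and the series converges locally uniformly, hence to an analytic limit, for $|z|<1/|\lambda_1|$. The genuine difference lies in how that key fact is justified and how the convergence is then extracted. The paper does not prove the key fact at all: it cites \cite{HS} for the statement that $(E_{PON})^j\rho$ stays in the stable generalized eigenspace (the sum of Jordan blocks with $|\lambda|<1$), and then obtains the radius of convergence from binomial estimates on powers of Jordan blocks. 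You instead prove the key fact from scratch: $E_{PON}=\chi_{A_0}U\chi_{A_0}^*$ is a contraction, so its unimodular eigenvectors are also eigenvectors of $E_{PON}^*$ and span a reducing subspace $\mathcal{H}_1$ on which it acts unitarily; each such eigenvector lifts to a trapped eigenstate of $U$ supported in $A_0$, which is orthogonal to $\rho=\chi_{A_0}U\tilde{\psi}_0$ because $\tilde{\psi}_0$ is supported on the tails. This buys three things: the argument is self-contained where the paper's is not; the non-orthogonal Jordan decomposition is replaced by an orthogonal reducing one (indeed $\mathcal{H}_2=\mathcal{H}_1^\perp$ coincides with the stable generalized eigenspace, because $\mathcal{H}_1$ reduces $E_{PON}$ and unimodular eigenvalues of a contraction are semisimple); and the Jordan-block estimates are replaced by the cleaner spectral-radius/Neumann-series bound. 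Your closing observation that this trapped-mode mechanism is the one behind $\sigma_{per}$ in Theorem~\ref{thm:removable} is also accurate.

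One point to tighten: in the chain $\langle\rho,v\rangle=\langle U\tilde{\psi}_0,w\rangle=\langle\tilde{\psi}_0,U^*w\rangle$, the vector $\tilde{\psi}_0$ is not in $\ell^2(\tilde{A})$ (it is only uniformly bounded along the tails), so moving $U$ across the pairing is a formal adjoint identity. It is nevertheless valid here because $w$, and hence $U^*w$, has finite support and $U$ is locally finite, so every sum involved is finite; stating this explicitly makes the orthogonality step airtight.
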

\begin{proof}
Let us extend the domain of the parameter $z\in \delta \mathbb{D}$ to $z\in \mathbb{C}\setminus\{0\}$ formally. 
The $n$-th iteration of the quantum walk restricted to $G_0$,
$\psi_n$, has the following recursion:  
\[ \psi_{n+1}=E_{PON}\psi_n+z^{-n}\rho,\;\psi_0=0, \]
where $\rho(a)=2\bs{\alpha}_{in}(t(a))/\deg_{\tilde{G}}(t(a))$. 
Then $\phi_n=z^{n}\psi_n$ is expressed as follows: for $n\geq 1$, 
\begin{equation}\label{eq:psin} \phi_n=z(1+zE_{PON}+\cdots+(zE_{PON})^{n-1})\rho. \end{equation}
Let $J(\lambda,k)$ be the Jordan block with the size $k$ and its every diagonal element is $\lambda$ such that \[ J(\lambda,k)=\begin{bmatrix} 
\lambda & 1 & 0 &\cdots & 0 \\
        & \lambda & 1 &  \cdots & 0 \\
        &         &\ddots & \ddots & \\
        &         &       & \lambda & 1 \\
        &         &       &         & \lambda 
\end{bmatrix}. \]
Since the submatrix $E_{PON}$ loses the semi-simpleness in general, it should be decomposed into the Jordan blocks by 
\[ E_{PON} \cong \bigoplus_{\lambda\in \mathrm{spec}(E_{PON})}J(\lambda,k_\lambda). \]
By \cite{HS}, there exits an invertible matrix $R$ such that  
\[ (E_{PON})^{j}\rho=R \left(\bigoplus_{\lambda\in \mathrm{spec}(E_{PON}):|\lambda|<1} J^j(\lambda,k_\lambda)\right)R^{-1}\rho   \]
for any $j$. 
Note that as is expressed in the above equation, the $n$-th iteration $\phi_n$ belongs to the stable eigenspace associated with the eigenvalues $|\lambda|<1$ for any time step $n$~\cite{HS}. 
We should note that 
\[\bs{1}_{\{i\leq j\}}\binom{n}{j}\lambda^{n-j}=(J(\lambda,k)^n)_{i,j}\leq (J(\lambda,k)^n)_{1,k}=\binom{n}{k}\lambda^{n-k}\]
for sufficiently large $n$. 
Then the radius of convergence of the power series of $z$ in (\ref{eq:psin}) is 
\[ \lim_{n\to\infty} \left|\frac{\binom{n}{k}\lambda_1^{n-k}}{\binom{n+1}{k}\lambda_1^{n+1-k}}\right|=1/|\lambda_1|. \]
\end{proof}
\begin{remark}\label{rem:phindomain}
The original $n$-th iteration of the quantum walk, $\psi_n$, can not be defined if the inflow parameter is $z=0$. 
Then  $\phi_n:=z^n\psi_n$ also cannot be defined. 
However once $\phi_n$ is considered as the function of $z$ defined by (\ref{eq:psin}), the domain of every entry of $\phi_z$ can be extended to $0\leq |z|<1/|\lambda_1|$. 
\end{remark}
Using this proposition, we can state that such points in $\mathbb{B}_*$ are the removable singularities. 
Before stating the theorem, let us prepare a few notations. Let $\partial_z: \mathbb{C}^{A_0}\to \mathbb{C}^{V_0}$ be a boundary operator such that 
\[ (\partial_z\psi)(u)=z^*\sum_{t(a)=u} \psi(a)-\sum_{o(a)=u}\psi(a).  \]
Note that its adjoint operator $\partial_z^*: \mathbb{C}^{V_0}\to \mathbb{C}^{A_0}$ is described by
\[ (\partial_z^*f)(a)=zf(t(a))-f(o(a)).  \]
\begin{theorem}[Stationary state]\label{thm:stationarystate}
The function $\varphi:\mathbb{C}\to \mathbb{C}^{A_0}$ 
\begin{equation}\label{eq:stationaryextend} \varphi(z)=\partial_z^*L_z^{-1}\bs{\alpha}_{in} 
\end{equation}
is analytically extendable over $ (\mathbb{C}\setminus \mathbb{B}_*)\cup\{z\in \mathbb{C} \;:\;|z|<1/|\lambda_1|\}$. In particular, 
$\varphi(e^{i\theta})$ is the stationary state of this quantum walk with the frequency of the inflow $\theta$; that is, $\varphi(e^{i\theta})=\phi_{e^{i\theta}}$. 
\end{theorem}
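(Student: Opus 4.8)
The plan is to recognize the explicit formula $\varphi(z)=\partial_z^*L_z^{-1}\bs{\alpha}_{in}$ as nothing but the analytic continuation of the dynamically defined stationary state $\phi_z$. First I would show that the formula reproduces $\phi_z$ on the unit circle away from the bad set $\mathbb{B}_*$, where everything in sight is genuinely invertible; then I would use Proposition~\ref{prop:analytic}, which grants analyticity of $\phi_z$ on a disk strictly larger than $\delta\mathbb{D}$, together with the identity theorem to transfer that analyticity onto $\varphi$ and thereby show that the apparent singularities of the formula are removable.

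First I would establish $\varphi(z)=\phi_z$ on $\delta\mathbb{D}\setminus\mathbb{B}_*$. For such $z$ the matrix $L_z$ is invertible and $j_-(z)\neq0$, so the Poisson equation (\ref{eq:startingpoint}) of Theorem~\ref{prop:PoiPoten} gives $\nu_z=j_-(z)L_z^{-1}\bs{\alpha}_{in}$. Restricting the twisted-gradient identity (\ref{eq:startingpoint2}) to an internal arc $a\in A_0$, where both $o(a),t(a)\in V_0$, reads $j_-(z)\phi_z(a)=z\nu_z(t(a))-\nu_z(o(a))=(\partial_z^*\nu_z)(a)$. Dividing by $j_-(z)$ and inserting the expression for $\nu_z$, the two factors $j_-(z)$ cancel and leave $\phi_z=\partial_z^*L_z^{-1}\bs{\alpha}_{in}=\varphi(z)$. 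This cancellation is precisely what makes the naive pole of $\varphi$ at $z=\pm1$ evaporate.

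Next I would record the relevant domains and glue. Since $\partial_z^*$ is affine in $z$ and the entries of $L_z$ are analytic on $\mathbb{C}\setminus\{0\}$, Cramer's rule writes each entry of $L_z^{-1}$ as a ratio of analytic functions with denominator $\det L_z$, so $\varphi$ is analytic on $\mathbb{C}\setminus(\mathbb{B}_*\cup\{0\})$, the set $\mathbb{B}_*$ being finite by Theorem~\ref{thm:removable}. On the other hand, Proposition~\ref{prop:analytic} and Remark~\ref{rem:phindomain} give that $\phi_z$ is analytic on the full disk $\{|z|<1/|\lambda_1|\}$, including $z=0$ and all points of $\mathbb{B}_*$ inside that disk. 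Both functions are therefore analytic on the connected open set $\Omega=\{0<|z|<1/|\lambda_1|\}\setminus\mathbb{B}_*$, and by the previous step they agree on $\delta\mathbb{D}\setminus\mathbb{B}_*\subset\Omega$, a set with a limit point in $\Omega$; the identity theorem forces $\varphi\equiv\phi_z$ on all of $\Omega$. Hence $\phi_z$, analytic on the entire disk, supplies the analytic continuation of $\varphi$ across $z=0$ and across the finitely many points of $\mathbb{B}_*\cap\{|z|<1/|\lambda_1|\}$, and patching with the formula on $\mathbb{C}\setminus\mathbb{B}_*$ yields one analytic function on $(\mathbb{C}\setminus\mathbb{B}_*)\cup\{|z|<1/|\lambda_1|\}$. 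Specializing to $z=e^{i\theta}$ gives $\varphi(e^{i\theta})=\phi_{e^{i\theta}}$ at the regular frequencies by the first step and at the singular frequencies $e^{i\theta}\in\mathbb{B}_*$ by continuity of the continuation, identifying $\varphi(e^{i\theta})$ with the stationary state.

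I expect the main obstacle to be the handling of the removable points in $\mathbb{B}_*$, in particular $z=\pm1$ and the frequencies in $j_+^{-1}(\sigma_{per})$, where the explicit formula is only apparently singular. The decisive ingredient is that the dynamically defined $\phi_z$ stays analytic there via Proposition~\ref{prop:analytic}, so that aligning it with $\varphi$ on the circle through the identity theorem is exactly what certifies removability; the delicate bookkeeping is to pin down a genuinely common connected domain of analyticity on which the agreement set carries a limit point.
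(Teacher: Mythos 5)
Your proof is correct and follows essentially the same route as the paper: identify $\varphi(z)$ with $\phi_z$ on $\delta\mathbb{D}\setminus\mathbb{B}_*$ via Theorem~\ref{prop:PoiPoten} (with the $j_-(z)$ cancellation), invoke Proposition~\ref{prop:analytic} and Remark~\ref{rem:phindomain} for analyticity of $\phi_z$ on $\{|z|<1/|\lambda_1|\}$, and glue by the identity theorem to conclude removability of the singularities in $\mathbb{B}_*\cap\delta\mathbb{D}$. Your treatment is in fact slightly more careful than the paper's, since you explicitly exclude $z=0$ from the domain of the formula (where $j_\pm(z)$ blow up) and verify connectedness of the common domain before applying the identity theorem.
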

\begin{proof}
It is obvious that the function $\varphi(z)$ is  analytic for any $z\in \mathbb{C}\setminus\mathbb{B}_*=:\mathbb{D}_1$. 
On the other hand, by Proposition~\ref{prop:analytic} and Remark~\ref{rem:phindomain}, $\phi_z$ is analytic in $\{z\in \mathbb{C}\;;\;|z|<1/|\lambda_1|\}=:\mathbb{D}_2$.
Theorem~\ref{prop:PoiPoten} leads to $\phi_z=\varphi(z)$ for every $z\in \delta \mathbb{D}\setminus\mathbb{B}_*$. 
Then the identity theorem implies 
$\varphi(z)=\phi_z$ in $\mathbb{D}_2\setminus \mathbb{B}_*$. Then since $(\delta \mathbb{D}\cap \mathbb{B}_*)\subset \mathbb{D}_2$, all of the siguralities of $\varphi(z)$ on $|z|=1$ are removable. 
Note that $|z|=1$ or $|z|\geq 1/|\lambda_1|$ for every $z\in \mathbb{B}_*$ by Theorem~\ref{thm:removable}. 
By taking the direct continuation $\varphi(z)$ at the domain $\mathbb{D}_2\setminus \mathbb{B}_*$ for $\phi_z$ in the analytic continuation, we can extend the domain of $\phi_z$ to $\mathbb{D}_1\cup  \mathbb{D}_2$.  
\end{proof}
\begin{reptheorem}{cor:scattering}[Scattering matrix]
Let $z\in\delta \mathbb{D}$. 
The scattering matrix is expressed by
\[ S_z=z\;\chi_{\delta {V_0}}\; \left\{2j_-(z)L_z^{-1}-I_{V_0}\right\}\;\chi_{\delta V_0}^*.  \]
Here $z\in \delta\mathbb{D}$ with $\det(L_z)=0$ is a removable singularity. 
\end{reptheorem}
\begin{proof}
From (\ref{eq:eigeneq}), inserting the arc of the tail connecting to the boundary vertex $u_j\in\delta V$ into the arc $a$, we have 
\[ \bs{\beta}_\delta(u_j)=2z\nu_z(u_j)-z\bs{\alpha}_\delta(u_j). \]
If $z\notin \mathbb{B}_*$, then Theorem~\ref{prop:PoiPoten} implies $\nu_z(u_j)=j_-(z)(\chi_{\delta V_0}L^{-1}_z\chi_{\delta V_0}^*\bs{\alpha}_{\delta})(u_j)$.  
Then we have 
\[ \bs{\beta}_\delta=z(2j_-(z)\chi_{\delta V_0}L_z^{-1}\chi_{\delta V_0}^*-I_{\delta V_0})\bs{\alpha}_{\delta}=z\chi_{\delta V_0}(2j_-(z)L_z^{-1}-I_{V_0})\chi_{\delta V_0}^*\bs{\alpha}_{\delta}. \]

Next, let us consider the case for $z\in \mathbb{B}_*\cap \delta \mathbb{D}$. 
Let $u,v\in V_\delta$. 
The stationary state with the inflow $\bs{\alpha}_{in}=\delta_{v}$ is denoted by $\phi_z^{(v)}$.  
The $(u, v)$ element of the scattering matrix $(S_z)_{u,v}$ is described by 
\begin{align*}
(S_z)_{u,v} &=2\langle K^*\delta_u,\; \phi_z^{(u)} \rangle-\delta_{u,v} \\
&= 2\langle \delta_u,\; K\phi_z^{(u)} \rangle-\delta_{u,v},
\end{align*}
where $K:\mathbb{C}^{A_0}\to \mathbb{C}^{V_0}$ such that 
$(K\psi)(u)=(1/\deg_{G_0}(u))\sum_{a\in A_0\;:\;t(a)=u}\psi(a)$, 
and its adjoint is $(K^*f)(a)=(1/\deg_{G_0}(u))f(t(a))$. 
Since $\phi_z$ is analytic in $|z|<1/|\lambda_1|$ by Proposition~\ref{prop:analytic}, 
$s_{u,v}(z):=2\langle \delta_u,\; K\phi_z^{(v)} \rangle-\delta_{u,v}$ is also analytic in $|z|<1/|\lambda_1|$. On the other hand, since $s_{u,v}(z)$ is identical with $(z\chi_{\delta V_0}\{2j_-(z)L_z^{-1}-I_{V_0}\}\chi_{\delta V_0}^*)_{u,v}$ for any $z\in \delta \mathbb{D}\setminus\mathbb{B}_{*}$, the identity theorem implies 
\[s_{u,v}(z)=(z\chi_{\delta V_0}\{2j_-(z)L_z^{-1}-I_{V_0}\}\chi_{\delta V_0}^*)_{u,v}  \]
for any $z\in \partial\mathbb{D}$. 
\end{proof}
In the following, let us explain 
the role of the first factor
``$z$" in the expression for $S_z$. 
The inflow at time $n$ to the vertex $u_j\in \delta V$ is expressed by $\bs{\alpha}_\delta^{(n)}(u_j):=z^{-n}\bs{\alpha}_\delta(u_j)$. 
On the other hand, the outflow from the vertex $u_j\in\delta V$ at the next step $n+1$ can be described by
\begin{align*}
    \bs{\beta}^{(n+1)}_\delta(u_j) &= \frac{2}{\tilde{d}(u)}\sum_{t(a)=u} \tilde{\psi}_n(a)-z^{-n}\alpha_j
\end{align*}  
from the definition of this quantum walk. 
Let us replace $\tilde{\psi}_n$ into an element
of the stable orbit; that is, $z^{-n}\tilde{\phi}_z$. Then in a similar fashion to the proof of Theorem~\ref{cor:scattering}, we have 
\begin{align*}
    \bs{\beta}^{(n+1)}_\delta
    &= z^{-n}\;\chi_{\delta V_0 }(2j_-(z)L_z^{-1}-I_{V_0})\chi_{\delta V_0}^*\bs{\alpha}_\delta. \\
    &= z^{-1} S_z \bs{\alpha}_{\delta}^{(n)}. 
\end{align*}
This means that the matrix $z^{-1}S_z$ gives the response to the input of the previous time, while  the scattering matrix $S_z$ gives the ``snap shot" of the scattering on the surface in the long time limit. 
\begin{reptheorem}{cor:comf}[Comfortability]
The comfortability for $z=e^{i\theta}$ is described by \begin{equation}\label{eq:generalcomftheta}
\mathcal{E}_z=
\langle  L_z^{-1}\bs{\alpha}_{in},(D_0-\cos\theta M_0)L_z^{-1}\bs{\alpha}_{in} \rangle.
\end{equation}
Here, $z\in \delta \mathbb{D}$ with $\det{L_z}=0$ is a removable singularity. 
\end{reptheorem}
\begin{proof}
By Theorem~\ref{thm:stationarystate}, we have 
\begin{align}
    \mathcal{E}_z
    &= \frac{1}{2}\langle \phi_z,\phi_z \rangle \notag \\
    &= \frac{1}{2}\langle \partial_z^* L_z^{-1}\bs{\alpha}_{in},\partial_z^*L_z^{-1}\bs{\alpha}_{in}\rangle \notag \\
    &= \frac{1}{2}\langle L_z^{-1}\bs{\alpha}_{in},\partial_z\partial_z^* L_z^{-1}\bs{\alpha}_{in} \rangle \notag \\
    &= \langle  L_z^{-1}\bs{\alpha}_{in},(\frac{1+|z|^2}{2}D_0-\frac{z+z^{*}}{2} M_0)L_z^{-1}\bs{\alpha}_{in} \rangle \notag\\
    &= |z|\langle  L_z^{-1}\bs{\alpha}_{in},(j_+(|z|)D_0-j_+(z/|z|) M_0)L_z^{-1}\bs{\alpha}_{in} \rangle. \label{eq:ComK}
\end{align}
Then we have (\ref{eq:generalcomftheta}) for $z\in\delta\mathbb{D}\setminus\mathbb{B}_*$. 
Since $\phi_z$ can be defined in $\mathbb{D}_1\cup \mathbb{D}_2$ as an analytic function, $z\in \delta \mathbb{D}\cap \mathbb{B}_*$ is a removable singularity. 
\end{proof}

The above statements of theorems include that the stationary state with the frequency of the inflow $z_*\in \mathbb{B}_*\cap \delta \mathbb{D}$ is obtained by taking the limit of $z\to z_*$. 
The following proposition shows a more direct computational approach to obtain the stationary state for $z_*$ without taking the limit which seems to be more practical when $z_*\neq \pm 1$. 
To this end let us prepare the following inner product. 
\[\langle f,g \rangle_{\pi}:=\sum_{u\in V_0}\bar{f}(u)g(u)d(u) \]
for any $f,g\in \mathbb{C}^{V_0}$. 
On the other hand, the standard inner product is denoted by 
\[\langle f,g \rangle_{2}:=\sum_{u\in V_0}\bar{f}(u)g(u). \]
Note that $\langle f,g \rangle_{\pi}=\langle f,D_0g \rangle_{2}$.
We will discuss the case for $z=\pm 1$ in the next section. 
Now we are ready to give the proposition. 
\begin{proposition}
Let $z\in\delta \mathbb{D}\setminus\{\pm 1\}$. 
Then the potential function $\nu_z$ of the stationary state $\phi_z$ satisfies the following:   
\begin{align}
    L_z\nu_z &= j_-(z)\bs{\alpha}_{in} \notag\\
    \langle \mu_k, \nu_z\rangle_\pi &= 0 \;\;(k\leq \dim (\mathcal{K}_z)). \label{eq:linear2}
\end{align}
Here $\{\mu_k\}$ are the basis of $\mathcal{K}_z:=\{g\in \ker(j_+(z)-P_0) \;|\; \supp (g)\subset V_0\setminus \delta V_0\}$. 
\end{proposition}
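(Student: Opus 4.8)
The first identity is not the substance of the statement: it is exactly the circuit equation of Theorem~\ref{prop:PoiPoten}, whose derivation is purely algebraic and valid for every $z\in\delta\mathbb D$, in particular for $z\in\delta\mathbb D\setminus\{\pm1\}$ (where moreover $j_-(z)\neq0$). Since by Theorem~\ref{thm:stationarystate} the stationary state $\phi_z$, and hence its potential $\nu_z$, extends analytically across the singular points, $\nu_z$ genuinely solves $L_z\nu_z=j_-(z)\bs{\alpha}_{in}$ there as well. The real content is the orthogonality relation $\langle\mu_k,\nu_z\rangle_\pi=0$, which selects, among the affine family of solutions of the (possibly degenerate) Poisson equation, the one produced by the dynamics.

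The first preparatory step is to pin down $\ker L_z$. Writing $z=e^{i\theta}$ with $\theta\neq0,\pi$, I would pair $L_zh=0$ with $h$ in the Hermitian form $\langle\cdot,\cdot\rangle_2$ and take imaginary parts: because $M_0,D_0,\Pi_{\delta V_0}$ are real symmetric, $\langle h,M_0h\rangle_2$ and $\langle h,D_0h\rangle_2$ are real and $j_+(z)=\cos\theta$ is real, so the only imaginary contribution is $j_-(z)\langle h,\Pi_{\delta V_0}h\rangle_2=i\sin\theta\sum_{u\in\delta V_0}|h(u)|^2$. Since $\sin\theta\neq0$ this forces $h$ to vanish on $\delta V_0$, whence $\Pi_{\delta V_0}h=0$ and $L_zh=0$ collapses to $(P_0-j_+(z))h=0$. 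Together with the trivial inclusion $\mathcal K_z\subset\ker L_z$ this yields $\ker L_z=\mathcal K_z$ for $z\in\delta\mathbb D\setminus\{\pm1\}$; combined with the positive-definiteness of $\langle\cdot,\cdot\rangle_\pi$, this is precisely what makes the two displayed conditions determine $\nu_z$ uniquely, which is the computational point of the proposition.

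For the orthogonality itself the key observation is that it cannot be read off at a singular point $z_*$ alone: inserting the Poisson equation into $\langle\mu_k,\nu_{z_*}\rangle_\pi$ and using $L_{z_*}\mu_k=0$ only returns the tautology $0=0$. The plan is therefore to argue at a nearby regular $z$, where $L_z$ is invertible and $\nu_z=j_-(z)L_z^{-1}\bs{\alpha}_{in}$ is the unique solution, and then to pass to the limit. Introduce the bilinear pairing $(f,g):=\sum_{u}f(u)g(u)$, so that $\langle\mu,\nu_z\rangle_\pi=(\mu,D_0\nu_z)$ for a \emph{real} representative $\mu:=\mu_k\in\ker(j_+(z_*)-P_0)$ with $\supp\mu\subset V_0\setminus\delta V_0$ (such a real basis exists because $j_+(z_*)$ and $P_0$ are real). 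The eigenvector identity $M_0\mu=j_+(z_*)D_0\mu$ together with $\Pi_{\delta V_0}\mu=0$ gives the clean relation $L_z\mu=(j_+(z_*)-j_+(z))D_0\mu$. Hence, using the bilinear symmetries $L_z=L_z^{T}$ and $D_0=D_0^{T}$,
\[ \langle\mu,\nu_z\rangle_\pi=(\mu,D_0\nu_z)=\frac{(L_z\mu,\nu_z)}{j_+(z_*)-j_+(z)}=\frac{(\mu,L_z\nu_z)}{j_+(z_*)-j_+(z)}=\frac{j_-(z)\,(\mu,\bs{\alpha}_{in})}{j_+(z_*)-j_+(z)}, \]
and the last pairing vanishes because $\supp\mu$ and $\supp\bs{\alpha}_{in}\subset\delta V_0$ are disjoint. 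Thus $\langle\mu,\nu_z\rangle_\pi=0$ on the punctured neighbourhood of $z_*$ consisting of regular $z$ with $j_+(z)\neq j_+(z_*)$ (all but finitely many $z$ near $z_*$, since $\mathbb B_*$ is finite and $j_+(z)=j_+(z_*)$ only at $z=z_*,z_*^{-1}$).

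Finally I would invoke analyticity. By Proposition~\ref{prop:analytic} and Theorem~\ref{thm:stationarystate}, $\phi_z$ is analytic on a neighbourhood of $\delta\mathbb D$; and since $\nu_z(u)$ can be recovered from $\phi_z$ on internal arcs via (\ref{eq:eigeneq}) (choosing at each $u$ an internal arc $a$ with $o(a)=u$ gives $\nu_z(u)=\tfrac{1}{2z}\phi_z(a)+\tfrac12\phi_z(\bar a)$, analytic near $z_*\neq0$), the map $z\mapsto\langle\mu,\nu_z\rangle_\pi$ is analytic across $z_*$. A function analytic near $z_*$ and vanishing on a punctured neighbourhood vanishes at $z_*$ by the identity theorem, giving $\langle\mu_k,\nu_{z_*}\rangle_\pi=0$. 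The main obstacle, and the only genuinely delicate point, is exactly this transfer: the orthogonality is information \emph{beyond} the Poisson equation, invisible at the singular $z_*$ itself, and is recovered only by computing at regular neighbours and appealing to the analytic continuation of the stationary state; the reduction to a real eigenbasis (legitimate because $j_+(z_*)\in\sigma_{per}\subset\mathbb R$ by Theorem~\ref{thm:removable}) is a secondary technical point.
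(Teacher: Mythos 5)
Your proof is correct, but it takes a genuinely different route from the paper's. The paper argues directly at the singular point: for $\mu\in\ker L_z$ it invokes the invariance property $\langle\partial_z^*\mu,\phi_z\rangle_2=0$, i.e.\ (\ref{eq:permanent}), imported from \cite{HS}, which encodes that the dynamics never populates the permanent eigenspace; combining this with the circuit equation (\ref{eq:startingpoint2}) and the computation $\partial_z\partial_z^*\mu=2(D_0-\cos\theta\,M_0)\mu=2(1-\cos^2\theta)D_0\mu$ turns that single dynamical fact into the $\pi$-orthogonality, with no limiting procedure at all. You instead establish the orthogonality at nearby \emph{regular} points --- using the complex symmetry $L_z^{\top}=L_z$, the identity $L_z\mu=(j_+(z_*)-j_+(z))D_0\mu$, and the disjointness of $\supp\mu$ and $\supp\bs{\alpha}_{in}$ --- and then transfer it to $z_*$ by the identity theorem, leaning on the analyticity of $\phi_z$ from Proposition~\ref{prop:analytic} and Theorem~\ref{thm:stationarystate}. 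What your route buys: it is self-contained within the paper's own analyticity machinery and avoids the external permanence lemma of \cite{HS}; it is also the same continuation mechanism the paper itself uses to handle singular frequencies in the proof of Theorem~\ref{cor:scattering}, so it is consistent in spirit with the rest of the paper. What the paper's route buys: it is shorter and works pointwise at $z_*$, with no need to control limits or to worry about which solution of the degenerate Poisson equation the dynamics selects. Two smaller remarks: your imaginary-part argument identifying $\ker L_z=\mathcal{K}_z$ for $z\in\delta\mathbb{D}\setminus\{\pm1\}$ is an elementary re-derivation of (\ref{eq:B*1}) restricted to the circle, where the paper simply cites Theorem~\ref{thm:removable}; and your insistence on a real basis of $\mathcal{K}_{z_*}$ is indeed what legitimizes replacing the sesquilinear $\langle\mu,\cdot\rangle_\pi$ by the bilinear pairing $(\mu,D_0\,\cdot\,)$, and is justified exactly as you say, since $P_0$ and $j_+(z_*)$ are real.
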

This proposition implies that if $z\in \mathbb{B}_*\setminus\{\pm 1\}$, then the potential function
$\nu_z$ is the unique vector which is orthogonal to $\mathcal{K}_{z}$ with respect to the inner product $\langle \cdot,\cdot\rangle_{\pi}$ in the solutions of the linear equation.
\begin{proof}
If $z\notin \mathbb{B}_*$, then the statement is directly obtained by (\ref{eq:startingpoint}) in Theorem~\ref{prop:PoiPoten} since $\mathcal{K}_z=\{\bs{0}\}$.
In the following, we consider the case for  $z\in \mathbb{B}_*$; that is, $z\in j_+^{-1}(\sigma_{per})$. The rank of $L_z$ is $|V_0|-\dim(\mathcal{K}_z)$ by (\ref{eq:B*1}). Thus it is enough to show that $\nu_z\in \ker(L_z)^\perp$
under the inner product $\langle \cdot, \cdot \rangle_\pi$. 
Let us put $\mu\in \ker(L_z)$. Note that 
$(M_0-j_+(z) D_0)\mu=0$ by (\ref{eq:B*1}) in Theorem~\ref{thm:removable} since $\supp(\mu)\cap \delta V=\emptyset$. 
According to \cite{HS}, it holds that  
\begin{equation}\label{eq:permanent}
\langle \partial_z^* \mu, \phi_z \rangle_2=0, 
\end{equation}
which is equivalent to 
\begin{align}\label{eq:center}
    \langle \partial_z^* \mu, \partial_z^* \nu_z \rangle_2=0
\end{align}
by (\ref{eq:startingpoint2}) in Theorem~\ref{prop:PoiPoten}. 
Thus (\ref{eq:center}) is equivalent to  
\begin{align*}
    0&= \langle \partial_z^* \mu, \partial_z^* \nu_z \rangle_2 
     = \langle \partial_z\partial_z^* \mu, \nu_z \rangle_2 \\
     &= 2 \langle\; (-\cos \theta M_0+D_0)\mu, \nu_z \;\rangle_2 \\ 
     &= 2\langle\; (-\cos \theta^2+1)D_0\mu, \nu_z \;\rangle_2 \\
     &= 2(1-\cos^2\theta)\langle \mu, \nu_z \rangle_\pi. 
\end{align*}
Since $\cos^2\theta \neq 1$, we obtain the desired conclusion. 
\end{proof}


  
\section{The stationary state for $z\in \{\pm 1\}$}
In this section, we will obtain the stationary state for $z=\pm 1$ by taking the limits of  $\phi_z$ to $z=\pm 1$ from Theorem~\ref{thm:stationarystate}. 

The case where $G_0$ is non-bipartite with the inflow $z=-1$ is simply obtained as follows. 
From Theorem~\ref{prop:PoiPoten}, in the neighborhood of $z=-1$, we have 
\[ \phi_z=\partial^*_z L_z^{-1}\bs{\alpha}_{in}. \]
Note that if $G_0$ is non-bipartite, the signless Laplacian $L_{-1}=Q$ is invertible. Then 
\[ \phi_{-1}=\lim_{z\to -1}\partial^*_z L_z^{-1}\bs{\alpha}_{in}=\partial^{*}_{-1}Q^{-1}\bs{\alpha}_{in}.  \]
The other cases---that is, $z=1$ or ``$z=-1$ and $G_0$ is  bipartite"---are not so simple because $L_{\pm 1}$ are not invertible.   

The stationary states for $z=\pm 1$ have already been characterized in \cite{HSS1,HSS2} under the direct consideration of equations (\ref{eq:eigeneq}) and (\ref{eq:permanent}). To show the previous results, let us prepare a few notations. If $G_0$ is a bipartite graph with the partite set $V=X\sqcup Y$, it will be useful to use the  notation ``$\flat$" defined as follows: \\
for any $\psi\in \mathbb{C}^{A_0}$, 
\[ \psi^{\flat}(a) =\begin{cases} \psi(a) & \text{: $t(a)\in X$,}\\
-\psi(a) & \text{: $t(a)\in Y$,}
\end{cases} \]
and for any $f\in \mathbb{C}^{V_0}$, 
\[ f^{\flat}(u) =\begin{cases} f(u) & \text{: $u\in X$,}\\
-f(u) & \text{: $u\in Y$.}
\end{cases} \]
It holds that if we set $f(u):=\sum_{t(a)=u}\psi (a)$ for the bipartite case, then 
\begin{equation}
f^\flat (u)=\sum_{t(a)=u}\psi^\flat (a). 
\end{equation}

In the following,  let us show our previous results on the cases of $z=\pm 1$. 
\begin{theorem}[\cite{HSS1,HSS2}]\label{thm:varphi1}
Let us consider the two cases described above---namely, $z=1$ and  ``$z=-1$ and $G_0$ is bipartite". 
Let $\mathrm{j}\in  \mathbb{C}^{\tilde{A}}$ be the electric current function on $\tilde{G}$ with the following boundary condition on $\delta V_0=\{u_1,\dots,u_{|\delta V_0|}\}$: 
\[\mathrm{j}(e_k)=
\begin{cases}
\bs{\alpha}_{in}^{\flat}(u_k)-\mathrm{ave}(\bs{\alpha}_{in}^{\flat}),\;(k=1,\dots,|\delta V_0|) & \text{: $z=-1$ and $G$ is bipartite,} \\
\bs{\alpha}_{in}(u_k)-\mathrm{ave}(\bs{\alpha}_{in}),\;(k=1,\dots,|\delta V_0|) & \text{: $z=1$,}
\end{cases}
 \]
where $\mathrm{ave}(\bs{\alpha}_{in})=(\alpha_1+\cdots+\alpha_r)/{|\delta V_0|}$. 
Then for any connected graph $G_0$ we have
\begin{align*} 
\phi_{1}(a) &= 
\mathrm{j}(a)+\mathrm{ave}(\bs{\alpha}_{in}),
\end{align*}
while for any connected bipartite graph $G_0$ we have  
\begin{align*}
\phi_{-1}^{\flat}(a) &= 
\mathrm{j}(a)+\mathrm{ave}(\bs{\alpha}_{in}^{\flat}). \end{align*}
\end{theorem}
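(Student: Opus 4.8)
\emph{Strategy.} The plan is to realize $\phi_{1}$ and $\phi_{-1}^{\flat}$ as boundary values of the analytic family $\phi_z$. By Theorem~\ref{thm:stationarystate} the points $z=\pm1$ are removable singularities of $\phi_z=\partial_z^*L_z^{-1}\bs{\alpha}_{in}$, so $\phi_z$ is analytic across them; since $\nu_z$ is an affine function of $\phi_z$ with $z$-independent coefficients (the averaging operator, plus the constant boundary values $\tilde{\phi}_z(e_k)=\alpha_k$), the potential $\nu_z$ is analytic there as well. This lets me expand everything in a Taylor series at the degenerate point and differentiate the circuit and Poisson equations of Theorem~\ref{prop:PoiPoten}; the genuine analyticity (furnished by the Kato-type argument inside Theorem~\ref{thm:stationarystate}) is what makes the resulting $0/0$ limits legitimate rather than merely formal.

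\emph{The case $z=1$.} Letting $z\to1$ in the circuit equation $j_-(z)\phi_z(a)=z\nu_z(t(a))-\nu_z(o(a))$, the left side tends to $0$ while $\phi_1$ stays finite, so $\nu_1(t(a))=\nu_1(o(a))$ for every arc; on the connected $G_0$ this forces $\nu_1=c\,\bs{1}$. To recover $\phi_1$ I would apply L'H\^opital to the same relation, using $j_-'(1)=1$ and $\nu_1\equiv c$, obtaining $\phi_1(a)=c+\nu_1'(t(a))-\nu_1'(o(a))$ with $\nu_1':=\tfrac{d}{dz}\nu_z|_{z=1}$. Differentiating $L_z\nu_z=j_-(z)\bs{\alpha}_{in}$ and inserting $L_1=M_0-D_0$ together with $L_1'=\Pi_{\delta V_0}$ (from $j_+'(1)=0$, $j_-'(1)=1$) gives $L_1\nu_1'=\bs{\alpha}_{in}-c\,\Pi_{\delta V_0}\bs{1}$. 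Pairing with $\bs{1}\in\ker L_1$ is the decisive step: the Fredholm solvability condition forces $c=(\sum_j\alpha_j)/|\delta V_0|=\mathrm{ave}(\bs{\alpha}_{in})$, and this same equation says that $\nu_1'$ is harmonic at interior vertices while the net flow out of $u_k$, namely $\sum_{o(a)=u_k}\big(\nu_1'(t(a))-\nu_1'(u_k)\big)$, equals $\alpha_k-\mathrm{ave}(\bs{\alpha}_{in})$. This is exactly Kirchhoff's law with the prescribed boundary injection, so $\mathrm{j}(a):=\nu_1'(t(a))-\nu_1'(o(a))$ is the electric current of the statement and $\phi_1=\mathrm{j}+\mathrm{ave}(\bs{\alpha}_{in})$.

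\emph{The case $z=-1$, bipartite.} Here $L_{-1}=M_0+D_0=Q$ with $\ker Q=\spann\{\bs{1}^{\flat}\}$, and I would transport the previous computation through the sign twist $S:f\mapsto f^{\flat}$, which realizes $SQS=D_0-M_0=:\Delta$ and leaves $\Pi_{\delta V_0}$ invariant. The circuit-equation limit now yields the bipartite-constant $\nu_{-1}=c\,\bs{1}^{\flat}$; differentiating the Poisson equation (again $L_{-1}'=\Pi_{\delta V_0}$, $j_-'(-1)=1$) and applying $S$ reduces the problem to the Laplacian equation $\Delta w=\bs{\alpha}_{in}^{\flat}-c\,\Pi_{\delta V_0}\bs{1}$ for $w:=(\nu_{-1}')^{\flat}$. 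Solvability against $\bs{1}$ pins $c=\mathrm{ave}(\bs{\alpha}_{in}^{\flat})$, and unwinding $\flat$ in the L'H\^opital expression gives $\phi_{-1}^{\flat}(a)=c+\big(w(o(a))-w(t(a))\big)$, whose antisymmetric part is precisely the current with $\mathrm{j}(e_k)=\bs{\alpha}_{in}^{\flat}(u_k)-\mathrm{ave}(\bs{\alpha}_{in}^{\flat})$, as claimed.

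\emph{Anticipated obstacle.} Because Theorem~\ref{thm:stationarystate} already delivers the analytic framework, the remaining work is not analytic but structural, and that is where I expect the errors to hide. First, the Fredholm solvability step is what manufactures the constant $\mathrm{ave}(\bs{\alpha}_{in})$ (resp. $\mathrm{ave}(\bs{\alpha}_{in}^{\flat})$) out of the degeneracy, and it relies on knowing that $\bs{1}$ (resp. $\bs{1}^{\flat}$) spans the kernel of the relevant singular matrix — using connectedness (resp. bipartiteness). Second, one must track the orientation of the gradient through the $\flat$-twist so that the current's sign matches the stated boundary condition at $z=-1$, since the natural convention there is $\mathrm{j}(a)=w(o(a))-w(t(a))$, opposite to the $z=1$ case; this bookkeeping, rather than any deep analytic point, is the most delicate part of the calculation.
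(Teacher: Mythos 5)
Your proposal is correct, and it takes a route that differs substantively from the paper's own re-derivation (which is carried out inside the proof of Theorem~\ref{thm:continuous}). The paper's engine is Kato's perturbation theory applied to $L(\theta):=L_{e^{i\theta}}$ near $\theta=0,\pi$: a resolvent partial-fraction decomposition and the eigenvalue expansion $\lambda(\theta)=\theta\lambda^{(1)}+O(\theta^{2})$ with $\lambda^{(1)}=i|\delta V_0|/|V_0|$ computed by a trace formula, yielding Lemma~\ref{lem:Kinverse}, $\lim_{\theta\to 0}i\sin\theta\,L(\theta)^{-1}=\tfrac{1}{|\delta V_0|}J_{|V_0|}$ (and its $\flat$-analogue at $\pi$); from this the paper gets both the value $\nu(0)=\mathrm{ave}(\bs{\alpha}_{in})\bs{1}_{V_0}$ and the differentiability of $\nu(\theta)$, and then expands the circuit and Poisson equations to first order in $\theta$, exactly as you do in $z$. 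You bypass the perturbation machinery entirely: analyticity of $\phi_z$, hence of $\nu_z$, at $z=\pm1$ is imported from Theorem~\ref{thm:stationarystate}/Proposition~\ref{prop:analytic}; the constancy $\nu_1=c\,\bs{1}$ (resp.\ $\nu_{-1}=c\,\bs{1}^{\flat}$) falls out of the vanishing of $j_-$ plus connectedness (resp.\ bipartiteness); and the value of $c$ is pinned by Fredholm solvability of the differentiated Poisson equation against the one-dimensional kernel of $L_{\pm1}$ --- precisely the step where the paper instead invokes the trace computation of $\lambda^{(1)}$ (the two are cousins: pairing the right-hand side with the kernel vector is the trace against $P^{(0)}$). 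Your argument is leaner for this one theorem; the paper's heavier approach buys more, since Lemma~\ref{lem:Kinverse} is reused verbatim in Corollary~\ref{cor:scatteringpm1} for the scattering matrices at $z=\pm1$, and the perturbation expansion is what produces Theorem~\ref{thm:continuous} itself (the identification of $-i\nu'(0)$ as the electric potential), which is the section's real objective. Two small points in your favor: your implicit use of the identity theorem to upgrade the circuit and Poisson equations from identities on $\delta\mathbb{D}$ to identities of analytic functions on $|z|<1/|\lambda_1|$ (so that differentiation at $z=\pm1$ is legitimate) is sound, and your sign bookkeeping in the bipartite case --- $\mathrm{j}(a)=w(o(a))-w(t(a))$ for $w=(\nu_{-1}')^{\flat}$, the flip being absorbed by $SQS=D_0-M_0$ --- is exactly right and matches the stated boundary condition.
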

%

Hereinafter, we will reproduce the above theorem by taking the limit $z\to\pm 1$. 
As a by-product of reproducing this result, we find  
an interesting relation to the potential function of the quantum walk and that of the electric circuit as follows. 
\begin{theorem}\label{thm:continuous}
Set $\nu:[0,2\pi)\to \mathbb{C}^{V_0}$ by $\nu(\theta):=\nu_{e^{i\theta}}$. 
Then 
its derivatives at $\theta=0,\pi$ describe the potential functions of the electric circuit satisfying the following Poisson equations: for any connected graph $G_0$, 
\[ (P_0-I)(-i\nu'(0))=q_0,\;
 \]
and for any connected bipartite graph $G_0$ with the partite sets $V_0=X\sqcup Y$,
\[(P_0-I)(-i{\nu'(\pi)}^\flat)=q_0^\flat,\]
where 
\[q_0= D_0^{-1}\bs{\alpha}_{in}. \]
\end{theorem}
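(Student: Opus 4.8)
The plan is to obtain $\nu'(0)$ and $\nu'(\pi)$ by differentiating the Poisson equation $L_{e^{i\theta}}\nu(\theta)=j_-(e^{i\theta})\bs{\alpha}_{in}$ of Theorem~\ref{prop:PoiPoten} at the two singular frequencies, treating these as removable points in the sense of Theorem~\ref{thm:stationarystate}. First I would record that $\theta\mapsto\nu(\theta)=\nu_{e^{i\theta}}$ is smooth at $\theta=0,\pi$: since $\phi_{e^{i\theta}}$ is analytic across $\delta\mathbb{D}$ (Proposition~\ref{prop:analytic} and Theorem~\ref{thm:stationarystate}) and $\nu_z$ is recovered from $\phi_z$ by the vertexwise averaging (\ref{eq:nuz}) together with constant tail data, $\nu(\theta)$ is real-analytic near each of these points. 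Writing $j_+(e^{i\theta})=\cos\theta$ and $j_-(e^{i\theta})=i\sin\theta$, the equation reads $(M_0-\cos\theta\,D_0+i\sin\theta\,\Pi_{\delta V_0})\nu(\theta)=i\sin\theta\,\bs{\alpha}_{in}$, which is the object I differentiate.

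For $\theta=0$ I would Taylor-expand in $\theta$. At order $\theta^0$ one gets $L_1\nu(0)=0$ with $L_1=M_0-D_0=D_0(P_0-I)$; since $G_0$ is connected, $\ker(P_0-I)=\mathrm{span}\,\bs{1}$, so $\nu(0)=c\,\bs{1}$ for some constant $c$. Differentiating once and evaluating at $\theta=0$, where $\tfrac{d}{d\theta}L_{e^{i\theta}}\big|_{0}=i\Pi_{\delta V_0}$, gives $i\Pi_{\delta V_0}\nu(0)+D_0(P_0-I)\nu'(0)=i\bs{\alpha}_{in}$, i.e. $D_0(P_0-I)\nu'(0)=i(\bs{\alpha}_{in}-c\,\Pi_{\delta V_0}\bs{1})$. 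The Fredholm condition (the right side must be orthogonal to $\ker L_1=\mathrm{span}\,\bs{1}$, as $L_1$ is real symmetric) forces $c=(\alpha_1+\cdots+\alpha_r)/|\delta V_0|=\mathrm{ave}(\bs{\alpha}_{in})$, exactly the constant of Theorem~\ref{thm:varphi1}. Setting $h:=-i\nu'(0)$ and solving the first-order relation then yields $(P_0-I)h=q_0$ once the boundary current balance is incorporated, the term $\mathrm{ave}(\bs{\alpha}_{in})\,D_0^{-1}\Pi_{\delta V_0}\bs{1}$ being the grounding charge that makes the injected current $\mathrm{j}(e_k)=\alpha_k-\mathrm{ave}(\bs{\alpha}_{in})$ conserved. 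As a consistency check I would differentiate (\ref{eq:startingpoint2}) at $\theta=0$ to recover $\phi_1(a)=h(t(a))-h(o(a))+\mathrm{ave}(\bs{\alpha}_{in})$, reproducing Theorem~\ref{thm:varphi1}.

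For $\theta=\pi$ the only change is that the relevant singular matrix is the signless Laplacian $L_{-1}=M_0+D_0=Q$. When $G_0$ is bipartite with $V_0=X\sqcup Y$, $Q$ is singular with $\ker Q=\mathrm{span}\,\bs{1}^{\flat}$, so the order-$\theta^0$ step gives $\nu(\pi)=c'\,\bs{1}^{\flat}$. The device here is the intertwining identity $(Qf)^{\flat}=(D_0-M_0)f^{\flat}$, valid for bipartite $G_0$ because $M_0f^{\flat}=-(M_0f)^{\flat}$ and $D_0$ commutes with $\flat$; it conjugates $Q$ into the ordinary Laplacian $D_0(I-P_0)$. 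Since $\tfrac{d}{d\theta}L_{e^{i\theta}}\big|_{\pi}=-i\Pi_{\delta V_0}$, applying $\flat$ to the first-order equation at $\theta=\pi$ and using this identity transports the whole computation into the $\theta=0$ form; the Fredholm condition gives $c'=\mathrm{ave}(\bs{\alpha}_{in}^{\flat})$ and one arrives at $(P_0-I)(-i\,\nu'(\pi)^{\flat})=q_0^{\flat}$, again matching Theorem~\ref{thm:varphi1}.

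The main obstacle I expect is the justification of differentiability at the singular frequencies together with the bookkeeping of the kernel. Because $L_{\pm1}$ are not invertible, one cannot simply differentiate $\nu_z=j_-(z)L_z^{-1}\bs{\alpha}_{in}$; the finite limit exists only through the cancellation furnished by the analytic continuation of $\phi_z$ (Theorem~\ref{thm:stationarystate}, via Kato's perturbation theory). I would therefore phrase the argument as a Taylor expansion of the identity $L_{e^{i\theta}}\nu(\theta)=i\sin\theta\,\bs{\alpha}_{in}$ with $\nu(\theta)$ known a priori to be analytic, so that the order-$0$ and order-$1$ coefficients are legitimate, and let the Fredholm alternative supply the one constant that the singular matrix leaves undetermined. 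The secondary delicate point is tracking the boundary contribution $\Pi_{\delta V_0}$ arising from the derivative of $L_{e^{i\theta}}$, which is precisely what encodes the grounding of the tails in the electric-network reading of $q_0$.
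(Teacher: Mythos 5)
Your proof is correct, but it reaches the two analytic facts on which everything hinges by a genuinely different route than the paper. Where you get smoothness of $\nu(\theta)$ at $\theta=0,\pi$ essentially for free—from the analyticity of $\phi_z$ across $\delta\mathbb{D}$ (Proposition~\ref{prop:analytic}, Theorem~\ref{thm:stationarystate}) plus the averaging formula (\ref{eq:nuz})—the paper re-derives it by hand with Kato perturbation theory: it first proves Lemma~\ref{lem:Kinverse}, namely $\lim_{\theta\to 0} i\sin\theta\, L(\theta)^{-1}=\frac{1}{|\delta V_0|}J_{|V_0|}$ (and the $\flat$-analogue at $\pi$), using the resolvent decomposition and the eigenvalue expansion $\lambda(\theta)=\theta\lambda^{(1)}+\cdots$ with $\lambda^{(1)}=i|\delta V_0|/|V_0|$, and then establishes differentiability of $\nu$ by an explicit difference-quotient computation involving $\lambda^{(2)}$, $P^{(1)}$, $A_0$. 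Likewise, where you pin down the constants $c=\mathrm{ave}(\bs{\alpha}_{in})$ and $c'=\mathrm{ave}(\bs{\alpha}_{in}^{\flat})$ by the Fredholm alternative applied to the first-order Taylor coefficient, the paper reads them off from Lemma~\ref{lem:Kinverse}. From that point on the two arguments coincide: first-order expansion of $L(\theta)\nu(\theta)=j_-(e^{i\theta})\bs{\alpha}_{in}$, and the intertwining identity $\left((P_0+1)f\right)^{\flat}=-(P_0-1)f^{\flat}$ for the bipartite case. Your route is more economical because it recycles already-proved results; the paper's heavier route buys a standalone matrix-analytic by-product, Lemma~\ref{lem:Kinverse}, which is reused immediately to compute the scattering matrices $S_{\pm 1}$ in Corollary~\ref{cor:scatteringpm1}. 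One point your derivation makes transparent (and which agrees with the paper's proof body, not its theorem statement): the source term must be $q_0=D_0^{-1}\left(\bs{\alpha}_{in}-\Pi_{\delta V_0}\nu(0)\right)$, i.e.\ including the grounding term $-\mathrm{ave}(\bs{\alpha}_{in})D_0^{-1}\Pi_{\delta V_0}\bs{1}_{V_0}$; with the bare $q_0=D_0^{-1}\bs{\alpha}_{in}$ of the statement, the compatibility condition $\langle \bs{1}_{V_0},q_0\rangle_\pi=0$ fails unless $\alpha_1+\cdots+\alpha_r=0$, so the Poisson equation would have no solution at all.
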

Now through the following proof, let us see how Theorem~\ref{thm:varphi1} is reproduced. 
\begin{proof}
Let $L_{e^{i\theta}}$ be denoted by $L(\theta)$ and the inverse matrix of $L(\theta)$ be denoted by $L(\theta)^{-1}$, which is a $|V_0|$-dimensional matrix. Recall that the potential function $\nu_z(u)$ is originally defined in (\ref{eq:nuz}) as the average of the stationary state $\phi_z(a)$'s over all the arcs whose terminal vertex is $u$. Then from Theorem~\ref{prop:PoiPoten}, the potential function 
$\nu(\theta):=\nu_{e^{i\theta}}$ on the $\epsilon$-neighborhood of $\theta=0,\pi$ for sufficiently small $\epsilon>0$, $B_{0,\epsilon}:=\{\theta \;:\; |\theta|<\epsilon\}$, $B_{\pi,\epsilon}:=\{\theta \;:\; |\theta-\pi|<\epsilon\}$, can be expressed as follows:
\[ \nu(\theta)=
i\sin \theta\;L(\theta)^{-1}\bs{\alpha}_{in}.  \]
Our first target is to show 
\[ \lim_{\theta\to 0}\nu(\theta) = \mathrm{ave}(\bs{\alpha}_{in})\bs{1}_{V_0} \]
and, if $G$ is bipartite, 
\[ \lim_{\theta\to \pi}\nu(\theta) = \mathrm{ave}(\bs{\alpha}^{\flat}_{in})\bs{1}^{\flat}_{V_0}, \]
because after 
the Kirchhoff current law to the electric current $\mathrm{j}$ for $\theta=0,\pi$ cases are applied, we obtain $(\nu(0))(u)=(1/\deg_{\tilde{G}}(u))\;\sum_{t(a)=u}\phi_1(a)=\mathrm{ave}(\bs{\alpha}_{in })$ and so on by Theorem~\ref{thm:varphi1}. 

We use the following lemma to express $L(\theta)^{-1}$. 
\begin{lemma}\label{lem:Kinverse}
We have
\begin{equation} 
\lim_{\theta\to 0} i\sin\theta L(\theta)^{-1}=\frac{1}{|\delta V_0|}J_{| V_0|}
\end{equation}
for any connected graph $G_0$, and  
\begin{equation}\label{eq:Kinversebipartite} 
\lim_{\theta\to \pi} i\sin\theta L(\theta)^{-1}=\frac{1}{|\delta V_0|}J_{|V_0|}^\flat 
\end{equation}
for any connected bipartite graph, where $J_{|V_0|}=\bs{1}_{V_0}\bs{1}_{V_0}^*$ and $J^\flat_{|V_0|}=\bs{1}_{V_0}^{\flat}{\bs{1}_{V_0}^\flat}^*$.
\end{lemma}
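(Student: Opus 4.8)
\section*{Proof proposal for Lemma~\ref{lem:Kinverse}}

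The plan is to treat $\theta=0$ (and, for bipartite $G_0$, $\theta=\pi$) as an isolated simple eigenvalue crossing and to extract the leading singular part of $L(\theta)^{-1}$ by analytic perturbation theory. First I would record that with $z=e^{i\theta}$ one has $j_+(e^{i\theta})=\cos\theta$ and $j_-(e^{i\theta})=i\sin\theta$, so that
\[ L(\theta)=M_0-\cos\theta\,D_0+i\sin\theta\,\Pi_{\delta V_0}, \]
which depends analytically on $\theta$. At $\theta=0$ this reduces to $L(0)=M_0-D_0$, the negative of the combinatorial Laplacian; since $G_0$ is connected, $0$ is a simple eigenvalue of $L(0)$ with eigenvector $\bs{1}_{V_0}$, and the orthogonal projection onto $\Ker L(0)$ is $P(0)=J_{|V_0|}/|V_0|$. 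The factor $i\sin\theta\to 0$ while $L(\theta)^{-1}$ blows up precisely along this kernel direction, so the assertion is the resolution of an indeterminate form $0\cdot\infty$.

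The core step is to invoke Kato's perturbation theory~\cite{Kato} for the analytic family $L(\theta)$. Because $0$ is an isolated simple eigenvalue of $L(0)$, there are a scalar $\lambda(\theta)$ and a rank-one Riesz spectral projection $P(\theta)$, both analytic near $\theta=0$, with $\lambda(0)=0$ and $P(0)=J_{|V_0|}/|V_0|$, such that for small $\theta\neq 0$ with $\lambda(\theta)\neq 0$ one may split the inverse as
\[ L(\theta)^{-1}=\frac{1}{\lambda(\theta)}\,P(\theta)+S(\theta), \]
where $S(\theta)$ is the reduced resolvent on $\mathrm{Ran}(I-P(\theta))$ and stays bounded (indeed analytic) as $\theta\to 0$. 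Multiplying by $i\sin\theta$, the term $i\sin\theta\,S(\theta)$ vanishes in the limit, so everything reduces to evaluating $\lim_{\theta\to 0}\tfrac{i\sin\theta}{\lambda(\theta)}$.

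To compute this I would use first-order perturbation theory for the simple eigenvalue. Differentiating $L(\theta)$ gives $L'(0)=i\Pi_{\delta V_0}$, and since $L(0)$ is symmetric the left and right eigenvectors both coincide with $\bs{1}_{V_0}$; the Feynman--Hellmann formula then yields
\[ \lambda'(0)=\frac{\langle \bs{1}_{V_0},L'(0)\bs{1}_{V_0}\rangle}{\langle \bs{1}_{V_0},\bs{1}_{V_0}\rangle}=\frac{i\,\langle\bs{1}_{V_0},\Pi_{\delta V_0}\bs{1}_{V_0}\rangle}{|V_0|}=\frac{i\,|\delta V_0|}{|V_0|}. \]
In particular $\lambda'(0)\neq 0$ (this is where the hypothesis $\delta V_0\neq\emptyset$ enters), so $\lambda(\theta)$ has a simple zero at $\theta=0$ and $\tfrac{i\sin\theta}{\lambda(\theta)}\to\tfrac{i}{\lambda'(0)}=\tfrac{|V_0|}{|\delta V_0|}$. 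Combining, $\lim_{\theta\to 0}i\sin\theta\,L(\theta)^{-1}=\tfrac{|V_0|}{|\delta V_0|}P(0)=\tfrac{1}{|\delta V_0|}J_{|V_0|}$, as claimed.

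For the bipartite case at $\theta=\pi$ the same machinery applies after shifting $\theta=\pi+s$: now $L(\pi)=M_0+D_0$ is the signless Laplacian, whose kernel, for a connected bipartite $G_0$, is simple and spanned by $\bs{1}^{\flat}_{V_0}$, giving $P(\pi)=J^{\flat}_{|V_0|}/|V_0|$; moreover $L'(\pi)=-i\Pi_{\delta V_0}$ yields $\lambda'(\pi)=-i|\delta V_0|/|V_0|$. Since $i\sin\theta=-is+O(s^{3})$ near $\theta=\pi$, the identical computation returns the limit $\tfrac{1}{|\delta V_0|}J^{\flat}_{|V_0|}$. I expect the only genuine obstacle to be the bookkeeping behind the resolvent splitting: one must justify that $0$ is an isolated simple eigenvalue (so that $\lambda(\theta)$ and $P(\theta)$ are single-valued analytic and $S(\theta)$ bounded) and that $\lambda'(0)\neq 0$ so the pole is simple---both of which reduce to standard connectivity and bipartiteness facts for the (signless) Laplacian together with $|\delta V_0|\geq 1$.
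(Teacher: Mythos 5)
Your proposal is correct and takes essentially the same route as the paper: Kato's analytic perturbation theory for the isolated simple zero eigenvalue of the Laplacian (resp.\ signless Laplacian), the resolvent splitting $L(\theta)^{-1}=\lambda(\theta)^{-1}P(\theta)+S(\theta)$ at the spectral point $0$ with bounded reduced resolvent, and the first-order coefficient $\lambda'(0)=i|\delta V_0|/|V_0|$, which the paper computes as $\mathrm{tr}(L^{(1)}P^{(0)})$ --- for a simple eigenvalue this is exactly your Feynman--Hellmann quotient. As a minor remark, your computation at $\theta=\pi$ yields the sign $+\frac{1}{|\delta V_0|}J^{\flat}_{|V_0|}$, which agrees with the lemma as stated, whereas the final display of the paper's own proof carries a spurious minus sign.
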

\begin{proof}
First let us see the case where $\theta\to 0$.  
Remark that from the definition of $L(\theta)$, obviously the matrix $L(\theta)$ is analytic on $\theta\in \mathbb{R}$. Indeed, the expansion of $L(\theta)$ around $\theta=0$ can be expressed by
\begin{align} 
L(\theta) &= (M_0-D_0)+\theta L^{(1)}+\theta^2 L^{(2)}+\theta^3 L(\theta)+\cdots \\
&= (M_0-D_0)+i\theta \Pi_{\delta V_0}+\theta^2\Gamma_\theta, \label{eq:Ktheta}
\end{align}
where $\Gamma_\theta$ is bounded, that is, there exists a constant $c$ such that  $||\Gamma_\theta||<c$ for any $|\theta|<1$. 
Let $\lambda(\theta)\in \sigma(L(\theta))$ be the eigenvalue whose absolute value is  closest to the value $0$ in all the eigenvalues of $L(\theta)$. 
If $\theta$ is sufficiently small, then this eigenvalue is the small perturbation of the maximal eigenvalue $\lambda_0=0$ of the Laplacian matrix $L_0=M_0-D_0$, which is simple. 
Therefore if $\theta\in B_{0,\epsilon}\setminus\{0\}$, then $\lambda(\theta)$ is simple and isolated.  
Then the resolvent of $L(\theta)$ is decomposed into the following partial fraction by \cite{Kato}:
\begin{equation}\label{eq:partialfraction}
    (L(\theta)-z)^{-1}=-\frac{1}{z-\lambda(\theta)}P_\theta \oplus A_\theta
\end{equation}
for any $z\notin \sigma(L(\theta))$. 
Here $P_\theta$ is the eigenprojection of the eigenvalue $\lambda(\theta)$ which can be expanded by $P_\theta=P^{(0)}+\theta P^{(1)}+\theta^2 P^{(2)}+\cdots$ and  
$||A_\theta||<c$ for sufficiently small $\theta$. 
Note that for any $\lambda\in \sigma(L(\theta))\setminus\{\lambda(\theta)\}$, it holds that $\lambda\not\to 0$ as $\theta\to 0$ because $\lambda(\theta)$ and $\lambda_0=0$ are simple and isolated. 
Since $0\notin \sigma(L(\theta))$, we can put $z=0$ in (\ref{eq:partialfraction}).
Then we have 
\begin{equation}\label{eq:sita}
    i\sin\theta\; L(\theta)^{-1} = \frac{i\sin\theta}{\lambda(\theta)}\;P_\theta \oplus i\sin \theta\; A_\theta.  
\end{equation}
Since $\lambda_0$ is simple, by \cite{Kato}, 
\[\lambda(\theta)=\lambda_0+\sum_{n=1}^{\infty}\theta^n \lambda^{(n)},\]
where 
\begin{equation}\label{eq:lambda1} \lambda_0=0,\;\lambda^{(1)}=\mathrm{tr}(L^{(1)} P^{(0)})=\mathrm{tr}(i\Pi_{\delta V_0}\frac{1}{|V_0|}J_{|V_0|})=i\frac{|\delta V_0|}{|V_0|}. 
\end{equation}
Note that $P_0$ is characterized as the projection to the constant function $\bs{1}_{V_0}$.  
Then we obtain
\begin{align}
    i\sin\theta L(\theta)^{-1}
    &= \frac{i\sin \theta}{\lambda_0+\theta\lambda^{(1)}+\theta^2 \lambda^{(2)}+O(\theta^3)} (P^{(0)}+\theta P^{(1)}+\cdots) \oplus i\sin\theta A_\theta \label{eq:limitK0'} \\
    &= \frac{i\theta+O(\theta^3)}{i\theta|\delta V_0|/|V_0|+O(\theta^2)} (P^{(0)}+\theta P^{(1)}+\cdots) \oplus i(\theta+O(\theta^3)) A_\theta \label{eq:limitK0} \\
    &\to 
    \frac{1}{|\delta V_0|/|V_0|}P^{(0)}
    = \frac{1}{|\delta V_0|}J_{|V_0|}\text{ ($\theta\to 0$).} \label{eq:limitK}
\end{align}
Next, let us consider the case where $\theta\to \pi$. We just replace $\theta$ with $\pi-\epsilon$ in the above discussion. 
The matrix $L(\theta)$ can be expanded around $\pi$ as follows:  
\[ L(\pi-\epsilon)=M_0+D_0+i\epsilon\Pi_{\delta V_0}+\epsilon^2 \tilde{\Gamma}_{\epsilon}, \]
which corresponds to (\ref{eq:Ktheta}), and $L(\pi-\epsilon)$ is a perturbation matrix of the {\it signless} Laplacian matrix $M_0+D_0$. 
We also obtain
\begin{align*} i\sin\epsilon L(\pi-\epsilon)^{-1} &= \frac{i\sin \epsilon}{\lambda(\pi-\epsilon)}P_{\pi-\epsilon}\oplus i\sin\epsilon A_{\pi-\epsilon}, 
\end{align*}
which corresponds to (\ref{eq:sita}).
Since $\lambda_\pi$ is simple, we have 
\[ \lambda(\pi-\epsilon) =\lambda(\pi)+\sum_{n=1}\epsilon^n{\tilde{\lambda}}^{(n)}, \]
where 
\[\lambda(\pi)=0,\; {\tilde{\lambda}}^{(1)}=\mathrm{tr}(i\Pi_{\delta V_0}\frac{1}{|V_0|}J_{|V_0|}^{\flat})=i\frac{|\delta V_0|}{|V_0|} \]
because ${\tilde{P}}^{(0)}=(1/|V_0|)\;\bs{1}_{V_0}^{\flat}{\bs{1}_{V_0}^\flat}^*$.
Then by an approach similar to that used in (\ref{eq:limitK0'})--(\ref{eq:limitK}), we obtain 
\[ \lim_{\theta\to\pi}i\sin\theta L(\theta)^{-1}=-\frac{1}{|\delta V_0|}J_{|V_0|}^{\flat}. \]
This completes the proof of Lemma~\ref{lem:Kinverse}. 
\end{proof}
From this lemma, we obtain 
\begin{equation}\label{eq:nu0}
\lim_{\theta\to 0}\nu(\theta)=\lim_{\theta\to 0}i\theta L(\theta)^{-1}\bs{\alpha}_{in}=\frac{1}{|\delta V_0|}J_{|V_0|}\bs{\alpha}_{in}=\mathrm{ave}(\alpha)\bs{1}_{V_0}=\nu(0), 
\end{equation}
which shows the continuity of $\nu(\theta)$ at $\theta=0$. 
Next, let us see the differentiability of $\nu(\theta)$ at $\theta=0$. 
Note that the definition of $\nu$ and  (\ref{eq:lambda1}) imply 
\begin{align*}
\nu(0) &= \mathrm{ave}(\bs{\alpha}_{in})\bs{1}_{V_0} 
= \frac{1}{|\delta V_0|/|V_0|} P^{(0)}\bs{\alpha}_{in} \\
&= \frac{i}{\lambda^{(1)}}P^{(0)}\bs{\alpha}_{in}. 
\end{align*}
From (\ref{eq:limitK0'}), we have 
\begin{multline*}
    \frac{\nu(\theta)-\nu(0)}{\theta}\\ =\frac{1}{\theta}\left\{ \left(\frac{i\sin \theta}{\lambda_0+\theta\lambda^{(1)}+\theta^2 \lambda^{(2)}+O(\theta^3)} (P^{(0)}+\theta P^{(1)}+\cdots) \oplus i\sin\theta A_\theta\right)\bs{\alpha}_{in}-\frac{i}{\lambda^{(1)}}P^{(0)}\bs{\alpha}_{in}\right\}\\
    \to 
    -i \frac{\lambda^{(2)}}{\{\lambda^{(1)}\}^2}P^{(0)}\bs{\alpha}_{in}+i\frac{1}{\lambda^{(1)}}P^{(1)}\bs{\alpha}_{in}+iA_0\bs{\alpha}_{in},\;\;(\theta\to 0)
\end{multline*}
which shows the differentiability of $\nu(\theta)$ at $\theta=0$.
We put $d\nu(\theta)/d\theta|_{\theta=0}=:\nu'_1$ and the difference between $\nu(0)$ and $\nu(\theta)$ by 
\[ \Delta_\theta:=\nu(\theta)-\nu(0). \]
The LHS of (\ref{eq:startingpoint}) in Theorem~\ref{prop:PoiPoten} can be expanded by using (\ref{eq:Ktheta}) as follows:  
\begin{align*}
    L(\theta)\nu(\theta) &= \left(D_0(P_0-I)+i\theta \Pi_{\delta V_0}\right) (\nu_1+\Delta_\theta) +O(\theta^2) \\
    &= i\theta \Pi_{\delta V_0}\nu_1+D_0(P_0-I)\Delta_\theta+i\theta \Pi_{\delta V_0}\Delta_\theta+O(\theta^2).
\end{align*}
On the other hand, the RHS of (\ref{eq:startingpoint}) in Theorem~\ref{prop:PoiPoten} can be expanded by
\[ j_-(e^{i\theta})\bs{\alpha}_{in}=i\theta\bs{\alpha}_{in} +O(\theta^3), \]
where $j_-(z)$ is defined in Definition~\ref{def:jz}. 
Combining the above, we have 
\[ (P_0-I)\frac{\Delta_\theta}{i\theta}=D_0^{-1}(\bs{\alpha}_{in}-\Pi_{\delta V_0}\nu(0))-D_0^{-1}\Pi_{\delta V_0}\Delta_\theta+O(\theta). \]
Taking the limit of both sides where $\theta\to 0$, we obtain that $\nu'(0)$ satisfies 
the following Poisson equation of the electric circuit:
\begin{equation}\label{eq:poissonEQ} (P_0-I)(-i\nu'(0))=q_0. \end{equation}
Here $q_0=D_0^{-1}(\bs{\alpha}_{in}-\Pi_{\delta V_0}\nu(0))$. It is enough to check that 
\[ \langle f,q_0 \rangle_\pi:=\sum_{u\in V_0} \bar{f}(u)q_0(u) d(u)=0.  \]
However, since $f$ is a constant function, 
it is easy to check that the above equation holds by the definition of $q_0$ and (\ref{eq:nu0}). 
Therefore the differential of $\nu(\theta)$ at $\theta=0$, $\nu_1'\in\mathbb{C}^{V_0}$, describes the current flow $\mathrm{j}$ which satisfies the Kirchhoff current and voltage laws: rewriting $\nu_1,\nu_1'\in\mathbb{C}^{V_0}$ by $\nu_1=\nu(0)$, $\nu_1'=\nu'(0)$ again, we have 
\begin{equation}\label{eq:potential} \mathrm{j}(a) = (-i)\nu_1'(t(a))-(-i)\nu_1'(o(a)) 
\end{equation}
for any $a\in A_0$. 
Finally, let us feed this back to the stationary state using (\ref{eq:startingpoint2}) in Theorem~\ref{prop:PoiPoten}. 
The LHS of (\ref{eq:startingpoint2}) is expanded to $i\theta \phi_{e^{i\theta}}(a)+O(\theta^3)$ for any $a\in A_0$. 
The RHS of (\ref{eq:startingpoint2}) is expanded to 
\begin{multline*} (1+i\theta)(\;\nu_1(t(a))+\Delta_\theta(t(a))\;)-(\;\nu_1(o(a))+\Delta_\theta(o(a))\;) +O(\theta^2)\\
=  i\theta \nu_1(t(a)) +\Delta_\theta(t(a))-\Delta_\theta(o(a))+O(\theta^2)
\end{multline*}
for any $a\in A_0$. 
Then we have 
\begin{equation*}
\lim_{\theta\to 0}\phi_{e^{i\theta}}(a) 
= \mathrm{ave}(\bs{\alpha}_{in}) + \mathrm{j}(a)
\end{equation*}
by (\ref{eq:nu0}) and (\ref{eq:potential}). This is nothing but $\phi_1$ in Theorem~\ref{thm:varphi1}, which completes the proof for the $z=1$ case. 

For $z=-1$ and $G_0$ is bipartite, 
in a similar fashion, starting from (\ref{eq:Kinversebipartite}) in  Lemma~\ref{lem:Kinverse} and replacing the input parameter $\theta$ with $\pi-\epsilon$,  we can see the continuity and differentiablility of $\nu(\theta)$ at $\theta=\pi$. The derivative at $\theta=\pi$; $\nu_{-1}'$, satisfies 
\[(P_0+1)(i\nu_{-1}')= q_0, \]
which corresponds to (\ref{eq:poissonEQ}).
However, note that it holds that 
\[ \left((P_0+1)f\right)^\flat=-(P_0-1)f^\flat \]
for any $f\in\mathbb{C}^{V_0}$ if $G_0$ is bipartite. Then we have
\[ (P_0-1)(-i\nu_1')^\flat=q_0^\flat. \]
Thus using the same approach as in the $z=1$ case, we see that $(-i\nu_1')^\flat$ is the potential function of the electric circuit and $\lim_{\theta\to\pi}\phi^\flat_{e^{i\theta}}(a)=\mathrm{ave}(\bs{\alpha}_{in})+\mathrm{j}(a)$. 
This completes the proof of Theorem~\ref{thm:continuous}.
\end{proof}

From this theorem, we also obtain the scattering matrix at $z=\pm 1$. 
\begin{corollary}[\cite{HSS1,HSS2}]\label{cor:scatteringpm1}
The scattering matrices for $z=\pm 1$ are expressed by 
\[ 
S_1 = \mathrm{Gr}(|\delta V_0|),
\]
for any connected graph $G_0$, while 
\[
S_{-1}= 
\begin{cases}
I & \text{: $G_0$ is non-bipartite,} \\
\begin{bmatrix}I_{X} & 0 \\ 0 & -I_{Y}\end{bmatrix} \mathrm{Gr}(|\delta V_0|) \begin{bmatrix}I_{X} & 0 \\ 0 & -I_{Y}\end{bmatrix} & \text{: $G_0$ is bipartite.}
\end{cases}
\]
\end{corollary}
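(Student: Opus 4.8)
The plan is to obtain both scattering matrices as boundary values of the closed-form expression
\[ S_{e^{i\theta}} = e^{i\theta}\,\chi_{\delta V_0}\bigl\{2i\sin\theta\,L(\theta)^{-1}-I_{V_0}\bigr\}\chi_{\delta V_0}^* \]
from Theorem~\ref{cor:scattering}, where $L(\theta):=L_{e^{i\theta}}$ and I have used $j_-(e^{i\theta})=i\sin\theta$. By Theorem~\ref{thm:stationarystate} the frequencies $z=\pm1$ are removable singularities of the scattering matrix, so it suffices to evaluate the limits $\theta\to0$ and $\theta\to\pi$. I would split into the three regimes appearing in the statement and treat them separately.

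For $S_1$ I would feed the first limit of Lemma~\ref{lem:Kinverse}, namely $\lim_{\theta\to0}i\sin\theta\,L(\theta)^{-1}=\frac{1}{|\delta V_0|}J_{|V_0|}$, into the displayed formula; since $e^{i\theta}\to1$ this yields $S_1=\chi_{\delta V_0}\bigl\{\frac{2}{|\delta V_0|}J_{|V_0|}-I_{V_0}\bigr\}\chi_{\delta V_0}^*$. The two elementary restriction identities $\chi_{\delta V_0}J_{|V_0|}\chi_{\delta V_0}^*=J_{|\delta V_0|}$ and $\chi_{\delta V_0}I_{V_0}\chi_{\delta V_0}^*=I_{\delta V_0}$ then collapse this to $\frac{2}{|\delta V_0|}J_{|\delta V_0|}-I_{\delta V_0}=\mathrm{Gr}(|\delta V_0|)$. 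The non-bipartite $z=-1$ case requires no limit at all: there $L_{-1}=Q$ is invertible, so $-1\notin\mathbb{B}_*$, and since $j_-(-1)=0$ the term carrying $L_z^{-1}$ simply drops out, leaving $S_{-1}=(-1)\,\chi_{\delta V_0}(-I_{V_0})\chi_{\delta V_0}^*=I_{\delta V_0}$.

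The bipartite $z=-1$ case is the substantive one, and the place where the sign bookkeeping must be done with care. Here I would take $\theta\to\pi$ and use the bipartite limit of Lemma~\ref{lem:Kinverse}, which replaces $J_{|V_0|}$ by $J^\flat_{|V_0|}=\bs{1}_{V_0}^\flat(\bs{1}_{V_0}^\flat)^*$. Writing $\Sigma_\delta$ for the diagonal signature matrix on $\mathbb{C}^{\delta V_0}$ carrying $+1$ on $X\cap\delta V_0$ and $-1$ on $Y\cap\delta V_0$, the key algebraic fact is that restricting the flipped rank-one matrix to the boundary conjugates the all-ones block by the signature, i.e.\ $\chi_{\delta V_0}J^\flat_{|V_0|}\chi_{\delta V_0}^*=\Sigma_\delta J_{|\delta V_0|}\Sigma_\delta$, which follows from $\chi_{\delta V_0}\bs{1}_{V_0}^\flat=\Sigma_\delta\,\chi_{\delta V_0}\bs{1}_{V_0}$. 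Combining this with the prefactor $e^{i\pi}=-1$ and the identity term, and using $\Sigma_\delta^2=I$, should produce the conjugated Grover matrix $\Sigma_\delta\,\mathrm{Gr}(|\delta V_0|)\,\Sigma_\delta$. A cleaner route I would keep in reserve is to bypass the matrix formula entirely and instead insert the explicit potential limits $\lim_{\theta\to0}\nu(\theta)=\mathrm{ave}(\bs{\alpha}_{in})\bs{1}_{V_0}$ and $\lim_{\theta\to\pi}\nu(\theta)=\mathrm{ave}(\bs{\alpha}^\flat_{in})\bs{1}^\flat_{V_0}$ from the proof of Theorem~\ref{thm:continuous} directly into the scalar relation $\bs{\beta}_\delta(u_j)=2z\nu_z(u_j)-z\alpha_j$, and read off the matrix entrywise.

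The main obstacle is precisely this last sign tracking: the prefactor $e^{i\pi}=-1$, the sign of the $J^\flat$ limit in Lemma~\ref{lem:Kinverse} (whose sign must be tracked carefully), and the signature conjugation each contribute factors of $-1$, and one must verify they combine to the stated $\Sigma_\delta\,\mathrm{Gr}(|\delta V_0|)\,\Sigma_\delta$, consistently with the $z=1$ and non-bipartite checks and with the known characterization of $\phi_{\pm1}$ in Theorem~\ref{thm:varphi1}. Everything else is a routine passage to the limit, legitimized by the removability established in Theorem~\ref{thm:stationarystate}.
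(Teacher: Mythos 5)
Your overall route is the same as the paper's: evaluate the closed formula of Theorem~\ref{cor:scattering} at the removable frequencies via the limits of Lemma~\ref{lem:Kinverse}. Your $z=1$ case and your non-bipartite $z=-1$ case are correct and complete; for the latter the paper only remarks that $L_{-1}$ is invertible, and your observation that $j_-(-1)=0$ makes the term with $L_z^{-1}$ drop out is exactly the missing detail (a small quibble: by the paper's definition $-1\in\mathbb{B}_*$ always, since $\pm1$ are included by fiat; what you actually need and use is just invertibility of $L_{-1}$).

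The genuine gap is in the bipartite case, and it sits exactly at the step you deferred. Write $\Sigma_\delta=I_X\oplus(-I_Y)$ for your signature matrix on $\mathbb{C}^{\delta V_0}$ and push your own computation through with the sign of Lemma~\ref{lem:Kinverse} as stated: $2i\sin\theta\,L(\theta)^{-1}\to\frac{2}{|\delta V_0|}J^\flat_{|V_0|}$, your identity $\chi_{\delta V_0}J^\flat_{|V_0|}\chi^*_{\delta V_0}=\Sigma_\delta J_{|\delta V_0|}\Sigma_\delta$, and the prefactor $e^{i\pi}=-1$ yield
\[
S_{-1}=(-1)\Bigl(\tfrac{2}{|\delta V_0|}\Sigma_\delta J_{|\delta V_0|}\Sigma_\delta-I_{\delta V_0}\Bigr)
=I_{\delta V_0}-\tfrac{2}{|\delta V_0|}\Sigma_\delta J_{|\delta V_0|}\Sigma_\delta
=-\Sigma_\delta\,\mathrm{Gr}(|\delta V_0|)\,\Sigma_\delta,
\]
the \emph{negative} of the matrix in the statement: the three factors of $-1$ you list do not cancel to the printed formula. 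This is not an artifact of choosing the wrong sign in the lemma. For the single-edge bipartite graph with both endpoints in $\delta V_0$, the tailed graph is the free walk on $\mathbb{Z}$ and a direct computation gives $S_z=z^2\,\mathrm{Gr}(2)$, hence $S_{-1}=\mathrm{Gr}(2)=-\Sigma_\delta\mathrm{Gr}(2)\Sigma_\delta\neq\Sigma_\delta\mathrm{Gr}(2)\Sigma_\delta$; your reserve route via $\nu(\pi)=\mathrm{ave}(\bs{\alpha}^\flat_{in})\bs{1}^\flat_{V_0}$ and $\bs{\beta}_\delta(u_j)=2z\nu_z(u_j)-z\bs{\alpha}_\delta(u_j)$ leads to the same conclusion; and the opposite sign appearing in the last display of the paper's proof of Lemma~\ref{lem:Kinverse} (which contradicts that lemma's own statement) is excluded because it would give $S_{-1}=I_{\delta V_0}+\frac{2}{|\delta V_0|}\Sigma_\delta J_{|\delta V_0|}\Sigma_\delta$, which is not even unitary. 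So an honest completion of your argument proves $S_{-1}=-\Sigma_\delta\,\mathrm{Gr}(|\delta V_0|)\,\Sigma_\delta$ and thereby exposes a sign error in the corollary as printed; asserting that the signs ``should'' combine to the stated expression is precisely the claim that fails, and it is the one thing a proof of this corollary cannot leave unverified.
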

\begin{proof}
Let us consider the case for $z=1$.
By Theorem~\ref{cor:scattering}, we have 
\begin{equation}\label{eq:scattering_perturbed}
S_{e^{i\theta}}=e^{i\theta}\;\chi_{\delta {V_0}}\; \left\{2j_-(z)L(e^{i\theta};G_0;\delta V_0)^{-1}-I_{V_0}\right\}\;\chi_{\delta V_0}^* 
\end{equation}
for sufficiently small $\theta$. 
By Lemma~\ref{lem:Kinverse}, we have 
\[ \lim_{\theta\to 0}S(e^{i\theta};G_0;\delta V_0)=\chi_{\delta V_0}(\frac{2}{|\delta V_0|}J_{|V_0|}-I_{V_0})\chi_{\delta V_0}^*
=\frac{2}{|\delta V_0|}J_{|\delta V_0|}-I_{\delta V_0} \]
by (\ref{eq:limitK}). 
The case for $z=-1$ and ``$G_0$ is bipartite" can also be obtained in a similar way. 
Finally, in the case for $z=-1$ and ``$G_0$ is non-bipartite", noting that the signless Laplacian $L_{-1}$ becomes invertible, we obtain the desired conclusion. 
\end{proof}
\begin{remark}
The comfortabilities for $z=\pm 1$ are characterized by some graph geometries in \cite{HSS2}.  
The example can be seen for the complete graph case with arbitrary frequency in Section~5.2. 
\end{remark}

%
\section{Example: Complete graph case}\label{sect:example}
In this section, we consider scattering matrix and comfortability in the case of the complete graph with the vertex number $N$ and the boundary number $\ell$. The inflow penetrates the internal graph from a fixed vertex, say $v_1$. See Fig.~\ref{fig:zu}. 

The complete graph is simple but a nice example in that the transition matrix $P_0$ has an eigenvector which has no overlap to $\delta V$ if $1\leq|\delta V|<N-1$.
This means $j_+^{-1}(\sigma_{per})\neq \emptyset$; that is, $\mathbb{B}_*\cap \delta \mathbb{D}\neq \emptyset$. 
Let us see that as follows. 
The eigensystem of $P_0$ is described by  $\mathrm{spec}(P_0)=\{1,-1/(N-1)\}$
and 
\[\ker(P_0-1)=\mathrm{span}\{[1,\dots,1]^\top\}, \]
\[\ker\left(P_0+\frac{1}{N-1}\right)=\mathrm{span}\{[1,-1,0,\dots,0]^\top,\;[0,1,-1,\dots,0]^\top,\dots,[0,\dots,1,-1]^\top\}.\]
Thus the latter eigenspace is spanned by functions which have a finite range support with $2$ vertices. This means that there exists at least one eigenvector which has no overlap to $\delta V$ if $|\delta V|<N-1$.  
Then for the complete graph, 
$\mathbb{B}_*\cap \delta \mathbb{D}=\{ 1, e^{\pm i \theta_*}\}$ if $|\delta V|<N-1$, 
where $\cos\theta_*=-1/(N-1)$. 
We will see a special response of the scattering and also comfortability at such input $z\in \mathbb{B}_*$.  
\begin{figure}[hbtp]
    \centering
    \includegraphics[keepaspectratio, width=100mm]{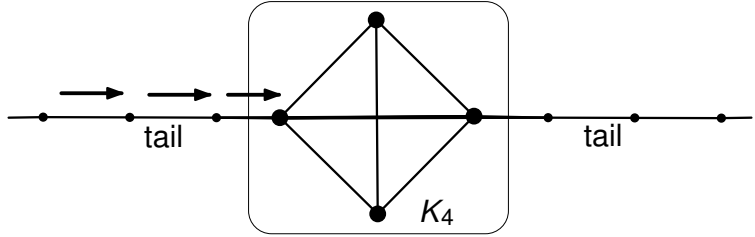}
    \caption{The situation for the complete graph case ($N=4$, $\ell=2$, $\bs{\alpha}_{in}=[1,0,0,0]^\top$). }
    \label{fig:zu}
\end{figure}
\subsection{Scattering matrix}
Put $S(\theta):=S_{e^{i\theta}}$. 
For the complete graph with $2$-boundary, the scattering is the perfect transmission for $z=1$, while it is the perfect reflection for $z=-1$ \cite{HSS1}.  
From Theorem~\ref{cor:scatteringpm1}, we can easily confirm  $S(0)=\sigma_X$ while $S(\pi)=I$ which is consistent with \cite{HSS1}. To reveal the  continuous connection between them, let us set the transmitting rate $t(\theta):=|(S(\theta))_{2,1}|^2=|(S(\theta))_{1,2}|^2$. Note that $t(0)=1$, $t(\pi)=t(\pm\theta_*)=0$. 
Where $\theta_*=\arccos(-1/(N-1))$. 
As a result of this subsection, we can continuously connect them: Figure~\ref{Fig:scattering} for $N=4$ shows the transmitting rate. 
\begin{figure}[hbtp]
    \centering
    \includegraphics[keepaspectratio, width=80mm]{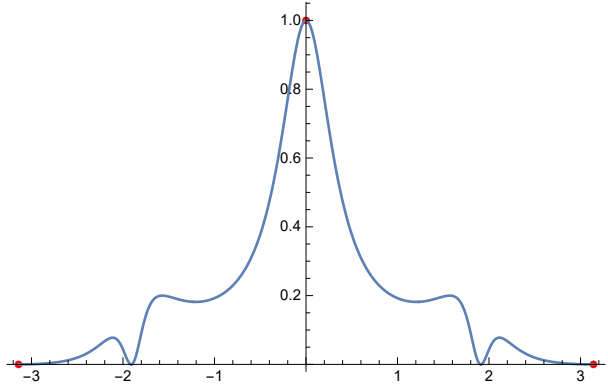}
    \caption{The scattering of quantum walk on the complete graph with $4$-vertex and $\ell=2$, $\bs{\alpha}_{in}=[1,0]^\top$:  The horizontal and vertical lines are the frequency of the inflow $\theta\in[-\pi,\pi]$ and the transmitting rate in the case of $N=4$. The red points have been obtained in \cite{HSS1}. We continuously connect them in this paper. }
    \label{Fig:scattering}
\end{figure}
We obtain an explicit expression for the inverse of $L(\theta)^{-1}$ as follows. 
\begin{lemma}\label{lem:CompleteGraphLInv}
Let the internal graph $G_0$ be the complete graph with the $N$-vertex and $\ell$-boundary. 
We set $m=N-\ell$. 
Let $\alpha=-1-(N-1)\cos\theta$, $\beta=\alpha+i\sin\theta$. 
Then the inverse of $L(\theta)$ is expressed by  
\begin{equation}
     (L(\theta)^{-1})_{i,j}=\frac{1}{\beta\alpha(\beta\alpha+\ell \alpha+m\beta)}
\begin{cases}
\beta\alpha^2+(\ell-1)\alpha^2+m\beta\alpha & \text{:  $1\leq i,j\leq \ell$, $i=j$,}\\
-\alpha^2 & \text{:  $1\leq i,j\leq \ell$, $i\neq j$,}\\
\alpha\beta^2+(m-1)\beta^2+\ell\beta\alpha & \text{: $\ell< i,j\leq \ell+m$, $i=j$,}\\
-\beta^2 & \text{: $\ell< i,j\leq \ell+m$, $i\neq j$,}\\
-\beta\alpha & \text{: otherwise}
\end{cases} 
\end{equation}
for $1\leq i,j\leq N$. 
\end{lemma}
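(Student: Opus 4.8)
The plan is to exploit the fact that on the complete graph the matrix $L(\theta)$ is a diagonal matrix plus a rank-one term, and then to invert it by the Sherman--Morrison formula. First I would write $L(\theta)$ out explicitly. For $G_0=K_N$ we have $M_0=J_N-I_N$ and $D_0=(N-1)I_N$, and ordering the vertices so that the $\ell$ boundary vertices come first, $\Pi_{\delta V_0}$ is the diagonal projection onto the first $\ell$ coordinates. Substituting $j_+(e^{i\theta})=\cos\theta$ and $j_-(e^{i\theta})=i\sin\theta$ into Definition~\ref{def:jz} gives
\[
L(\theta)=J_N+\big(-1-(N-1)\cos\theta\big)I_N+i\sin\theta\,\Pi_{\delta V_0}.
\]
With the abbreviations $\alpha=-1-(N-1)\cos\theta$ and $\beta=\alpha+i\sin\theta$ of the statement, this is exactly $L(\theta)=D+\bs{1}\bs{1}^\top$, where $D=\mathrm{diag}(\beta,\dots,\beta,\alpha,\dots,\alpha)$ carries $\ell$ diagonal entries equal to $\beta$ (the boundary vertices) and $m=N-\ell$ entries equal to $\alpha$, and $\bs{1}$ is the all-ones vector, so that $\bs{1}\bs{1}^\top=J_N$. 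Thus $\alpha$ and $\beta$ are literally the two diagonal values of $D$.

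Next I would apply Sherman--Morrison,
\[
L(\theta)^{-1}=D^{-1}-\frac{D^{-1}\bs{1}\bs{1}^\top D^{-1}}{1+\bs{1}^\top D^{-1}\bs{1}}.
\]
Here $D^{-1}$ is diagonal with entries $1/\beta$ and $1/\alpha$, the denominator scalar is $1+\ell/\beta+m/\alpha=(\alpha\beta+\ell\alpha+m\beta)/(\alpha\beta)$, and $(D^{-1}\bs{1}\bs{1}^\top D^{-1})_{i,j}=(D^{-1})_{i,i}(D^{-1})_{j,j}$ because $\bs{1}$ is constant. The five cases in the statement then drop out according to which of $\{\alpha,\beta\}$ sit in $(D^{-1})_{i,i}$ and $(D^{-1})_{j,j}$: a boundary diagonal entry ($1\le i=j\le\ell$) uses $1/\beta$ in both the leading term and the correction and gives the first line, two distinct boundary indices give the second (only the correction survives), and so on for the interior block and the mixed case. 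Clearing the factor $1/(\alpha\beta)$ from the scalar and rescaling numerator and denominator to the common normalization $\beta\alpha(\beta\alpha+\ell\alpha+m\beta)$ reproduces each displayed entry; for instance the boundary diagonal term becomes $\tfrac{1}{\beta}-\tfrac{\alpha/\beta}{\alpha\beta+\ell\alpha+m\beta}=\tfrac{\beta\alpha+(\ell-1)\alpha+m\beta}{\beta(\alpha\beta+\ell\alpha+m\beta)}$, matching after multiplying top and bottom by $\alpha$.

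There is no genuine obstacle here; the work is entirely bookkeeping, and the only points worth recording are the invertibility conditions. The Sherman--Morrison step requires $\alpha\beta\neq0$ (so that $D^{-1}$ exists) and $\alpha\beta+\ell\alpha+m\beta\neq0$ (so that the denominator is nonzero). By the matrix-determinant lemma one has $\det L(\theta)=\beta^{\ell-1}\alpha^{m-1}(\alpha\beta+\ell\alpha+m\beta)$, so these degeneracies occur exactly at the frequencies $\theta=0$ and $\theta=\pm\theta_*$ (where $\alpha=-1-(N-1)\cos\theta=0$), which is precisely the set $\mathbb{B}_*\cap\delta\mathbb{D}$ identified earlier for the complete graph. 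I would therefore state the lemma for $\theta$ outside this finite set, noting that the explicit formula is the $L(\theta)^{-1}$ appearing in Theorems~\ref{cor:scattering} and~\ref{cor:comf} away from those removable singular frequencies.
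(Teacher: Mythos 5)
Your proof is correct, and it takes a genuinely different route from the paper's. The paper also exploits the rank-one structure, but does so by writing each column of $L(\theta)$ as $\bs{u}+\beta\bs{e}_j$ (boundary) or $\bs{u}+\alpha\bs{e}_j'$ (interior), expanding $\det L(\theta)$ by multilinearity (terms with $\bs{u}$ appearing twice vanish), computing the adjugate matrix $\tilde{L}_\theta$ entry by entry in the same way, and then forming $L(\theta)^{-1}=\tilde{L}_\theta/\det L(\theta)$. Your Sherman--Morrison argument replaces the entire adjugate computation with one standard formula, which is shorter and less error-prone; the paper's expansion, in exchange, is self-contained and produces $\det L(\theta)=\beta^{\ell-1}\alpha^{m-1}(\alpha\beta+\ell\alpha+m\beta)$ as an explicit intermediate (which you recover anyway via the matrix-determinant lemma). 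One small imprecision in your closing remark: the identification of the degenerate frequencies with $\mathbb{B}_*\cap\delta\mathbb{D}=\{1,e^{\pm i\theta_*}\}$ holds only when $\ell<N-1$ (i.e.\ $m\geq 2$). For $m=1$ the matrix $L(\theta_*)$ is in fact invertible even though $\alpha=0$ makes $D$ singular, so Sherman--Morrison is inapplicable there while the displayed formula survives as a removable singularity (the factors of $\alpha$ cancel); for $m=0$ the entries of $D$ do not involve $\alpha$ at all. This does not affect the validity of the computation on the generic set where you perform it, since both sides of the identity are rational in $\alpha,\beta$ and extend by continuity, but the caveat should be stated if you restrict the lemma to non-singular $\theta$ as you propose.
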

\begin{proof}
If $L(\theta)$ is invertible, then the scattering matrix can be described by
\[ e^{-i\theta}S(\theta)=2i\sin\theta \chi_{\delta V_0}L(\theta)^{-1}\chi_{\delta V_0}^* -I_{\delta V_0}. \]
Since the adjacency matrix $M_0$ is of the form $J-I$, $L(\theta)$ can be expressed by 
\begin{align} L(\theta) &= J-(1+(\ell+m-1)\cos\theta)I+i\sin\theta \Pi_{\delta V_0}  \\
&= [\; \bs{u}+\beta\bs{e}_1\;|\;\cdots\;|\;\bs{u}+\beta\bs{e}_\ell\;|\;\bs{u}+\alpha \bs{e}_1'\;|\;\cdots\;|\;\bs{u}+\alpha\bs{e}_{m}' \;], \label{eq:expression}
\end{align}
where $m=N-\ell$, $\bs{u}=[1,\dots,1]^\top$, $\bs{e}_j$ is the $j$-th standard basis, $\bs{e}_j'=\bs{e}_{\ell+j}$ and 
$\alpha=-1-(\ell+m-1)\cos\theta$, $\beta=\alpha +i\sin\theta$. 
Note that 
\[ L(\theta)^{-1}=\tilde{L}_\theta/\det (L(\theta)). \]
Here $\tilde{L}_\theta$ is the adjugate matrix of $L(\theta)$. 
From the expression of $L(\theta)$ in (\ref{eq:expression}),  we can expand the determinant. Note that if the vector $\bs{u}$ appears more than two times at the column vectors, then such a determinant is $0$. Then it holds that  
\begin{multline}
    \det(L(\theta))= \det\;[\;\beta \bs{e}_1\;|\;\cdots\;|\;\beta\bs{e}_\ell\;|\;\alpha \bs{e}_1'\;|\;\cdots\;|\;\alpha\bs{e}_m'\;] \\
    + \sum_{j=1}^\ell \det [\;\beta \bs{e}_1\;|\;\cdots\;|\;\underset{ j}{\bs{u}}\;|\;\cdots\;|\;\beta\bs{e}_{\ell}\;|\;\alpha \bs{e}_1'\;|\;\cdots\;|\;\alpha \bs{e}_m'\;] \\
    +\sum_{j=1}^m \det[\; \beta \bs{e}_1\;|\;\cdots\;|\;\beta\bs{e}_{\ell}\;|\;\alpha \bs{e}_1'\;|\;\cdots\;|\;\underset{j}{\bs{u}}\;|\;\cdots\;|\;\alpha \bs{e}_m' \;].
\end{multline}
Then we have 
\[ \det(L(\theta))=\beta^{\ell}\alpha^m+\ell\beta^{\ell-1}\alpha^m+m\beta^{\ell}\alpha^{m-1}. \]
In a similar fashion, we obtain
\[(\tilde{L}_\theta)_{i,j}=\begin{cases}  \beta^{\ell-1}\alpha^m+(\ell-1)\beta^{\ell-2}\alpha^m+m\beta^{\ell-1}\alpha^{m-1} & \text{: $1\leq i,j\leq \ell$, $i=j$, }\\ 
-\beta^{\ell-2}\alpha^m & \text{: $1\leq i,j\leq \ell$, $i\neq j$, }\\
\alpha^{m-1}\beta^\ell+(m-1)\alpha^{m-2}\beta^\ell+\ell\alpha^{m-1}\beta^{\ell-1} & \text{: $\ell< i,j\leq N$, $i=j$, }\\ 
-\alpha^{m-2}\beta^\ell & \text{: $\ell< i,j\leq N$, $i\neq j$, }\\
-\alpha^{m-1}\beta^{\ell-1} & \text{: otherwise.}
\end{cases}\]
Then we obtain the desired conclusion.
\end{proof}
Let $S_\theta$ be the scattering matrix on the surface. Then we obtain the following expression for $S_\theta$.
\begin{proposition}
Let the frequency of the input  be $z=e^{i\theta}$. 
Let $K_{\ell+m}$ be the complete graph with $\ell+m$ vertices and $\ell$ boundaries. 
Then the scattering matrix on this graph is described as follows: 
\[ e^{-i\theta}S_\theta = 2ix\sin\theta J_\ell+(2iy \sin\theta-1)I_\ell  \]
where 
\begin{equation}\label{eq:x}
x=-\frac{\alpha}{\beta}\frac{1}{\alpha \beta + \alpha \ell + \beta m},\;y=\frac{1}{\beta}.    
\end{equation} 
\end{proposition}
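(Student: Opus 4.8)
The plan is to specialize the general scattering formula of Theorem~\ref{cor:scattering} to the complete graph and then substitute the explicit inverse from Lemma~\ref{lem:CompleteGraphLInv}. Setting $z=e^{i\theta}$ we have $j_-(e^{i\theta})=i\sin\theta$, so Theorem~\ref{cor:scattering} gives
\[ e^{-i\theta}S_\theta = 2i\sin\theta\;\chi_{\delta V_0}L(\theta)^{-1}\chi_{\delta V_0}^*-I_{\delta V_0}. \]
Because the $\ell$ boundary vertices are taken to be the first $\ell$ vertices of $K_{\ell+m}$, the compression $\chi_{\delta V_0}L(\theta)^{-1}\chi_{\delta V_0}^*$ is precisely the top-left $\ell\times\ell$ principal block of $L(\theta)^{-1}$. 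Thus the whole problem reduces to reading off this block from Lemma~\ref{lem:CompleteGraphLInv}.

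From the $1\le i,j\le\ell$ cases of Lemma~\ref{lem:CompleteGraphLInv}, every diagonal entry of this block equals a common value $d$ and every off-diagonal entry a common value $o$; writing $\Delta:=\beta\alpha(\alpha\beta+\ell\alpha+m\beta)$ for brevity,
\[ d=\frac{\beta\alpha^2+(\ell-1)\alpha^2+m\beta\alpha}{\Delta},\qquad o=\frac{-\alpha^2}{\Delta}. \]
The permutation symmetry of $K_{\ell+m}$ among the boundary vertices is what forces the block to depend only on the two scalars $d$ and $o$, so it has the form $o\,J_\ell+(d-o)I_\ell$. Substituting into the displayed identity yields
\[ e^{-i\theta}S_\theta = 2i\sin\theta\,o\,J_\ell+\bigl(2i\sin\theta\,(d-o)-1\bigr)I_\ell, \]
and matching against the target expression identifies $x=o$ and $y=d-o$.

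It then remains to verify the two algebraic identities. The first, $o=x$, is immediate: cancelling $\alpha$ between the numerator and the factor $\beta\alpha$ in $\Delta$ gives $o=-(\alpha/\beta)/(\alpha\beta+\alpha\ell+\beta m)$, which is exactly $x$. The only step requiring care is $d-o=1/\beta$: after cancelling one factor of $\alpha$ and placing $d$ and $o$ over the common denominator $\beta(\alpha\beta+\ell\alpha+m\beta)$, the numerator becomes $\beta\alpha+(\ell-1)\alpha+m\beta+\alpha=\alpha\beta+\ell\alpha+m\beta$, which telescopes against the denominator to leave $1/\beta=y$. This cancellation is the crux of the computation; apart from it the argument is mechanical.

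One bookkeeping point to flag is that Lemma~\ref{lem:CompleteGraphLInv} presupposes $L(\theta)$ invertible, so strictly the formula is first established for $\theta$ with $\det L(\theta)\neq 0$. For the finitely many exceptional frequencies in $\mathbb{B}_*\cap\delta\mathbb{D}$ (namely $\theta=0$ and, when $\ell<N-1$, $\theta=\pm\theta_*$), the identity extends by the removable-singularity and analyticity argument already invoked in the proof of Theorem~\ref{cor:scattering}, so no separate treatment is needed here.
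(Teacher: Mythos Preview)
Your proof is correct and follows the same approach as the paper: apply the general scattering formula from Theorem~\ref{cor:scattering} and substitute the top-left $\ell\times\ell$ block of $L(\theta)^{-1}$ from Lemma~\ref{lem:CompleteGraphLInv}. The paper's own proof is a one-line ``inserting the expression of the inverse \dots\ we obtain the desired conclusion,'' whereas you spell out the identification $x=o$, $y=d-o$ and the cancellation $d-o=1/\beta$ explicitly; this is additional detail rather than a different argument.
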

\begin{proof}
If $L(\theta)$ is invertible, then the scattering matrix can be described by
\[ e^{-i\theta}S_\theta=2i\sin\theta \chi_{\delta V_0}\;L(\theta)^{-1}\;\chi_{\delta V_0}^* -I_{\delta V_0}. \]
Inserting the expression of the inverse of $L(\theta)$ in Lemma~\ref{lem:Kinverse} into the above, we obtain the desired conclusion. 
\end{proof}
The expression for the coefficients of $J_\ell$ and $I_\ell$ can be described as 
\begin{align*}
    2ix\sin\theta &= \frac{1+(N-1)c}{1+(N-1)c-is}\;\frac{2}{\ell c-t'\{(m-1)s+i(N-1)(1+(N-1)c)\}}\\
    2iy\sin\theta-1 &= \frac{1+(N-1) c+is}{-1-(N-1) c+is}, 
\end{align*}
where $c=\cos\theta$, $s=\sin\theta$, and $t'=\tan(\theta/2)$. 
From the above expressions, the condition under which the RHS becomes the diagonal matrix is  $\theta=0,\pi$ or $\alpha=0$. Note that $\alpha=0$ if and only if $\theta=\pm \theta_*$, where $\theta_*=\mathrm{Arccos}(-1/(N-1))$. 
By the  L'Hôpital's rule, if $\theta=0$, \[\lim_{\theta\to 0}S_\theta= \frac{2}{\ell}J_{\ell}-I_{\ell}, \]
which is consistent with the result on \cite{HSS1}, while for $\theta=\pi$ and $\pm\theta_*$, 
\[ S_\pi=I_{\ell},\;S_{\pm \theta_*}=e^{\pm i \theta_*} I_{\ell}. \]
Thus we obtain the following corollary. 
The perfect reflection happens for ``$\theta=\pm\theta_*$ and $m\geq 1$" or ``$\theta=\pm \pi$". 
\begin{corollary}
Assume $m=1,2,\dots , N-2$.
$\theta=\pi$ or $\theta=\pm \theta_*$ if and only if 
every quantum walker goes out from the same place where it came in; that is, the perfect reflection.
\end{corollary}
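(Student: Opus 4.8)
The plan is to reduce the assertion to the vanishing of the off-diagonal part of the scattering matrix and then determine exactly when that vanishing occurs. Under the hypothesis $m\le N-2$ we have $\ell=N-m\ge 2$, so $S_\theta$ is at least a $2\times 2$ matrix and ``perfect reflection'' is equivalent to $S_\theta$ being diagonal: feeding $\bs{\alpha}_\delta=\delta_{u_j}$ produces the $j$-th column of $S_\theta$, which is supported only on $u_j$ for every $j$ precisely when the off-diagonal entries are zero. From the expression $e^{-i\theta}S_\theta=2ix\sin\theta\,J_\ell+(2iy\sin\theta-1)I_\ell$ the off-diagonal entries equal $e^{i\theta}\cdot 2ix\sin\theta$, and since $e^{i\theta}\ne 0$ the perfect reflection condition is exactly $x\sin\theta=0$.

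First I would combine $x=-\frac{\alpha}{\beta}\frac{1}{\alpha\beta+\alpha\ell+\beta m}$ with the sine factor into the single fraction $x\sin\theta=\frac{-\alpha\sin\theta}{\beta(\alpha\beta+\alpha\ell+\beta m)}$. Its numerator vanishes iff $\alpha=0$ or $\sin\theta=0$; recalling $\alpha=-1-(N-1)\cos\theta$, the condition $\alpha=0$ is exactly $\cos\theta=-1/(N-1)$, i.e. $\theta=\pm\theta_*$, while $\sin\theta=0$ gives $\theta=0$ or $\theta=\pi$. Thus $\{0,\pi,\pm\theta_*\}$ are the only candidates, and it remains to test at each whether the denominator is nonzero (so the fraction truly vanishes) or also zero (an indeterminate form that must be resolved through the limit of $S_\theta$).

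For $\theta=\pm\theta_*$ we have $\alpha=0$ and $\beta=i\sin\theta_*\ne 0$, so the denominator reduces to $\beta^2 m$, which is nonzero precisely because $m\ge 1$; hence $x\sin\theta_*=0$ and a short check of the diagonal factor $2iy\sin\theta_*-1=1$ gives $S_{\pm\theta_*}=e^{\pm i\theta_*}I_\ell$, a perfect reflection. For $\theta=\pi$ one computes $\alpha=\beta=N-2$ and the denominator factor $\alpha\beta+\alpha\ell+\beta m=(N-2)(2N-2)$, which is nonzero since $N=\ell+m\ge 3$; therefore $x\sin\pi=0$ and $S_\pi=I_\ell$.

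The main obstacle is $\theta=0$, where both numerator and denominator vanish: there $\alpha=\beta=-N$ and $\alpha\beta+\alpha\ell+\beta m=N^2-N(\ell+m)=0$, so $x\sin\theta$ is a genuine $0/0$ and the naive fraction gives no information. To settle it I would invoke the limit already established in the complete-graph computation, $\lim_{\theta\to 0}S_\theta=\frac{2}{\ell}J_\ell-I_\ell$, whose off-diagonal entries equal $2/\ell\ne 0$ because $\ell\ge 2$. Hence $\theta=0$ does \emph{not} yield perfect reflection, and the perfect reflection occurs exactly at $\theta=\pi$ and $\theta=\pm\theta_*$, as claimed. The two hypotheses enter precisely here: $m\ge 1$ secures the clean vanishing at $\theta_*$, while $\ell\ge 2$ both renders reflection a nontrivial (off-diagonal) condition and rules out $\theta=0$.
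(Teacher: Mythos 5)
Your proposal is correct and takes essentially the same route as the paper: the paper's proof is precisely the reduction ``from (\ref{eq:x}) it is enough to clarify when $x=0$,'' with the surrounding text supplying the candidates $\theta=0,\pi,\pm\theta_*$ and the L'H\^{o}pital limit $\lim_{\theta\to 0}S_\theta=\frac{2}{\ell}J_\ell-I_\ell$ that rules out $\theta=0$. You merely spell out the details the paper leaves implicit --- notably the denominator checks and the genuine $0/0$ form at $\theta=0$ --- which is a faithful elaboration rather than a different argument.
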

\begin{proof}
From (\ref{eq:x}), it is enough to clarify when $x=0$.
\end{proof}
%
\begin{remark}
Let us assume that $m=0$, equivalently, $\ell = N$. Then the perfect
reflection occurs if and only if $\theta = \pi$.
Moreover , for $\theta = \pm\theta_{*}$, we can easily obtain
\[
S_{\pm\theta_{*}} = e^{i\pm\theta_{*}} (
-\frac{2}{N+i\sin(\pm\theta_{*})}J_{\ell} + I_{\ell}).
\]
Finally, let us assume that $m=N-1$, 
equivalently $\ell=1$. 
Under this condition, 
we can consider that
the perfect reflection 
always occurs
since only one
tail exists and the inflow penetrates 
along it.
Thus  we can calculate a scalar  $S_{\theta}$ by using (\ref{eq:x}) 
with 
$
\sin^{2}\theta=
(N+\alpha)(N-2-\alpha)/{(N-1)^{2}}
$
for 
$e^{-i\theta}S_{\theta} = 2i(x+y)\sin\theta-1$. As a result, we have
$S_{\theta} = e^{i(\theta +\tilde{\theta})}$, where
\[
\cos \tilde{\theta} =
\frac{
(\alpha + N-1)^{2}(N-2-\alpha) - \alpha^{2}(\alpha + N)(N-1)^{2}
}
{
(\alpha + N-1)^{2} (N-2-\alpha) + \alpha^{2}(\alpha+N)(N-1)^{2}
}
\]
and
\[
\sin \tilde{\theta} =
\frac{
2\alpha (\alpha + N-1) (N-1)^{2}\sin\theta
}
{
(\alpha +N-1)^{2}(N-2-\alpha) + \alpha^{2}(\alpha+N)(N-1)^{2}
}.
\]
Recall $\alpha = -1 -(N-1)\cos\theta$.
\end{remark}
For example, let us set $G_{0}=K_{4}$, which 
is the complete graph of four vertices. 
We illustrate the transmitting probability of $\ell$-tails ($\ell=1,2,3, 4$)
for $\theta\in [-\pi, \pi]$ in 
Figure~\ref{Fig:scattering1}. 
Here the transmitting probability 
is defined by $1-(\text{reflection probability})^{2}$.
Perfect reflection can be observed at the point where the curve touches the horizontal line.

%

Let us compare the two walkers along the tails who were reflected perfectly by the internal graph. 
If $\theta=\pi$, the quantum walker of the outflow is completely same as when it was the inflow, while if $\theta=\pm \theta_*$, the appearance of the quantum walker of the outflow is changed from that of when it was the inflow because of the twisted term $e^{\pm i\theta_*}$. 
Then by observing the phase of the quantum walker coming back to the same tail, we can detect whether this quantum walker felt comfortable in the internal graph or not, because $\mathcal{E}_{\pi}\in O(1/N)$, while  $\mathcal{E}_{\pm\theta_*}\in O(N)$; see the next subsection for more detail. 
\begin{figure}[hbtp]
    \centering
    \includegraphics[keepaspectratio, width=100mm]{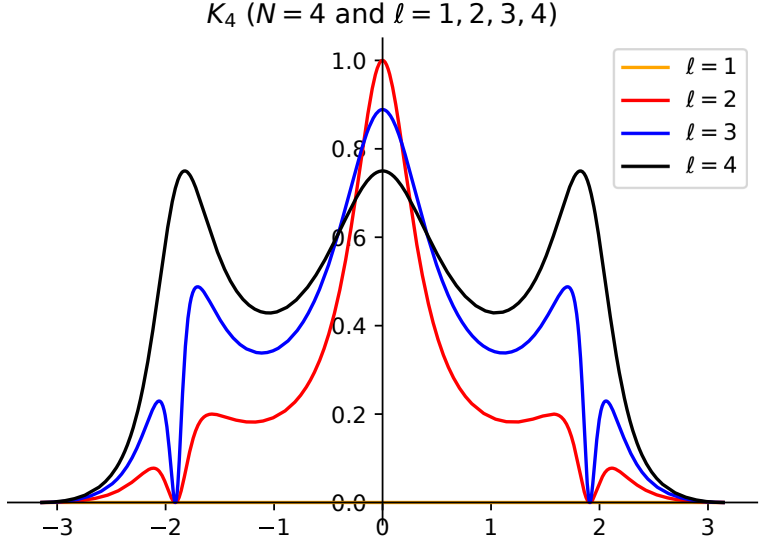}
    \caption{The transmitting probability on the complete graph with $4$ vertices:  The horizontal and vertical lines are the frequency of the inflow $\theta\in[-\pi,\pi]$ and the transmitting rate to all  tails except the receiver's tail for each $\ell=1,2,3,4$.} 
    \label{Fig:scattering1}
\end{figure}
\subsection{Comfortability}
Let us consider the complete graph with the $N$-vertex and $|\delta V_0|=\ell$ ($1\leq \ell\leq N$) with $\bs{\alpha}_{in}=[1,0,\dots,0]^\top$. 
If $N=4$, $\ell=2$ and $\bs{\alpha}_{in}=[1,0,0,0]^\top\in\mathbb{C}^{4}$, the comfortabilities for $z=1$ and $z=-1$ are computed as $13/8$ and $5/12$, respectively~\cite{HSS1}. 
Let the comfortability for $z=e^{i\theta}$ be denoted by 
$\mathcal{E}(\theta):=\mathcal{E}_{e^{i\theta}}$. 
In this section, we are interested in connecting $\mathcal{E}(0)$ to $\mathcal{E}(\pi)$ by continuously changing the frequency of the inflow from $\theta=0$ to $\theta=\pi$. 
See Figure~\ref{fig:zu}.

Using the specialty of the complete graph, $M_0=J-I$, $D_0=(N-1)I$,  (\ref{eq:ComK}) in  Theorem~\ref{cor:comf} is further reduced to \begin{equation}\label{eq:1}
    \mathcal{E}(\theta)=(N-1+\cos\theta)||L(\theta)^{-1}\bs{\alpha}_{in}||^2-\cos\theta\langle L(\theta)^{-1}\bs{\alpha}_{in},JL(\theta)^{-1}\bs{\alpha}_{in} \rangle.
\end{equation}
Here $\bs{\alpha}_{in}=[1,0,\dots,0]^\top\in\mathbb{C}^N$. 

From Lemma~\ref{lem:CompleteGraphLInv}, $L(\theta)^{-1}\bs{\alpha}_{in}$ can be expressed by
\[ L(\theta)^{-1}\bs{\alpha}_{in}=[p,\overbrace{q,\dots,q}^{\ell-1},\overbrace{r,\dots,r}^{m}]^\top, \]
where 
\begin{align*}
    p =\frac{\beta \alpha^2+(\ell-1)\alpha^2+m\beta\alpha}{\alpha\beta(\alpha \beta +\ell\alpha+m\beta)},\;
    q= \frac{-\alpha^2}{\alpha\beta(\alpha \beta +\ell\alpha+m\beta)}, \;
    r= \frac{-\alpha\beta}{\alpha\beta(\alpha \beta +\ell\alpha+m\beta)}. 
\end{align*}
Then inserting the above into (\ref{eq:1}), we show that the comfortability of the quantum walk on the complete graph with the  $N$-vertex, $\ell$-boundary for the $\bs{\alpha}=[1,0,\dots,0]^\top$ inflow with the frequency $z=e^{i\theta}$ can be expressed by 
\begin{equation}\label{eq:complete}
    \mathcal{E}(\theta)=(N-1+\cos\theta)\;\{\;|p|^2+(\ell-1)|q|^2+m |r|^2\;\}-\cos\theta\; |p+(\ell-1)q+m r|^2.
\end{equation}
 Figure~\ref{fig:comf} depicts the comfortability $\mathcal{E}(\theta)$ for $N=4$ and $\ell=2$. 
Starting from this expression, let us observe some interesting aspects of quantum walks through the comfortability of the complete graph. 
\begin{figure}[hbtp]
    \centering
    \includegraphics[keepaspectratio, width=100mm]{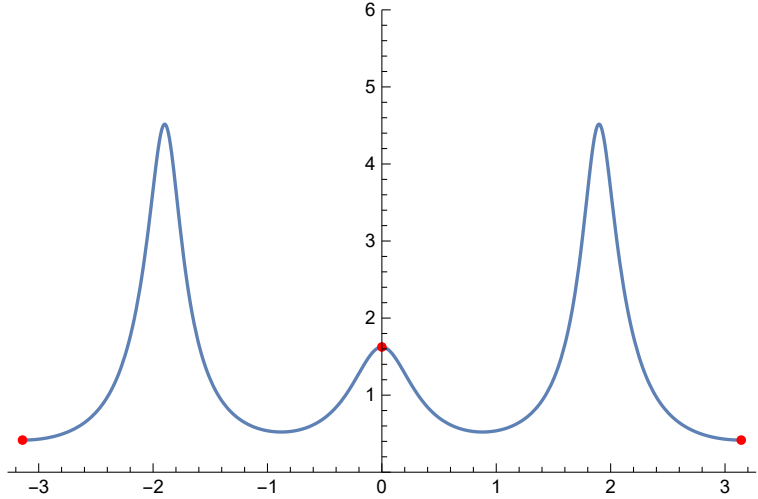}
    \caption{The comfortability of quantum walk on the complete graph with $4$-vertex with $2$ boundaries:  The horizontal and vertical lines are frequency of the inflow and the comfortability. The red points at $\theta=0$ and $\theta=\pm \pi$ are obtained by \cite{HSS1}. In this paper, we continuously connect them. }
    \label{fig:comf}
\end{figure}

After some algebra, the expression for the comfortability (\ref{eq:complete}) can be rewritten as  
\begin{equation}\label{eq:comf2}
    \mathcal{E}(\theta)=\frac{(N-1) (\alpha^2f_1+m(2\alpha+m+1)f_2)}{(N(N-2)(\alpha^2+1)-2\alpha)(\alpha^2f_3-m (\alpha-N+2)(2\alpha+m))},
\end{equation}
where 
\begin{align*}
    f_1 &:= f_1(\alpha;N)=(N-1)^2(\alpha(N^2-4N+2)+(N-2)(N^2-N+1)), \\
    f_2 &= f_2(\alpha;N)=(\alpha-N+2)(\alpha-N^2+2 N), \\
    f_3 &= f_3(\alpha;N)=N(N-2)\alpha+N^3-2N^2+2N-2.
\end{align*}
Recall that $\alpha=-1-(N-1)\cos\theta$. Then the range of $\alpha$ is $[-N,N-2]$. 
Note that for the complete graph case, $L(\theta)$ does not have the inverse if and only if $\theta=0, \pm \theta_*$ but the form (\ref{eq:comf2}) has no singularities in $[-N,N-2]$. This is consistent with Theorem~\ref{thm:stationarystate}. 
From this expression, for example, we give some comfortability at special frequencies of inflow: 
$\theta=0(\Leftrightarrow \alpha=N-2)$, $\theta=\pm\pi/2(\Leftrightarrow \alpha=-1)$, $\theta=\pm\theta_*(\Leftrightarrow \alpha=0)$ and $\theta=\pi(\Leftrightarrow \alpha=-N)$. 
\begin{align*}
    &\mathcal{E}(0) =\frac{N(N-1)}{2\ell^2}+\frac{\ell-1}{\ell N}; \\
    &\mathcal{E}({\pm\pi/2}) = \frac{(N-1)\left\{ N^2-3N+4+m(m-1) \right\}}{2 \left\{ (N-1)^2+(m-1)^2 \right\}}; \\
    &\mathcal{E}({\pm \theta_*}) = \begin{cases}
    \frac{m+1}{m}(N-1) & \text{: $m\neq 0$;}\\
    \\
    \frac{(N-1)^3(N^2-N+1)}{N(N^3-2N^2+2N-2)} & \text{: $m=0$;}
    \end{cases}  \\
    &\mathcal{E}({\pi}) = \frac{2N-3}{2(N-1)(N-2)}.
\end{align*}
Here the scattering for the frequency $\theta=0(\Leftrightarrow\alpha=N-2)$ is described by the Grover matrix, while those for $\theta=\pi(\Leftrightarrow\alpha=-N)$ and $\theta=\theta_*(\Leftrightarrow\alpha=0)$ are the perfect reflection. Their $\theta$'s are included in $\mathbb{B}_*$. 
The reason for our choice of the comfortability at $\theta=\pm\pi/2$ comes from the following curious phenomenon. The definition of $\alpha$ implies $\lim_{N\to\infty}\theta_*=\pi/2$ while 
\[ \lim_{N\to\infty}\frac{\mathcal{E}({\pi/2})}{N}=\frac{1}{2}<\frac{m+1}{m}=\lim_{N\to\infty}\frac{\mathcal{E}(\theta_*)}{N}. \] 
We discuss this after (\ref{eq:comfasymp}). 
\\

\noindent{\bf (1) Is the comfortability is monotone decreasing with respect to the number of boundaries ? }\\
From the classical point of view, the larger the number of drains is, the harder the water pools in the tank in the stationary state, because there are many escape routes for the water. 
Let us consider the quantum walk case. 
For $\theta=0$, the comfortability is monotone decreasing with respect to $\ell$. 
This means that if $\theta=0$, then the larger the number of boundaries, the more the quantum walker feels uncomfortable, which seems reasonably intuitive, because there are many opportunities for a quantum walker in the complete graph to go to the outside. 
On the other hand, for $\theta=\theta_*$, the comfortability for $m$'s have the following relation:
\[ \mathcal{E}(\theta_*)|_{m=1}>\mathcal{E}(\theta_*)|_{m=2}>\mathcal{E}(\theta_*)|_{m=3}>\cdots>
\mathcal{E}(\theta_*)|_{m=N-1}>
\mathcal{E}(\theta_*)|_{m=0}. \]
Note that $m=N-\ell$. 
Thus against the intuitive explanation of $\mathcal{E}_{0}$, the more we remove the  boundaries from the complete graph, the more the quantum walker feels uncomfortable if there is at least one vertex which is not the boundary.
Thus the situation in which every vertex  except one has the tail is the best for the comfortability of the quantum walker if $\theta=\theta_*$. 
However, if we add the ``extra" tail to this best situation for $\theta=\theta_*$, then the comfortability declines to its nadir.
When $\theta=\pi$, the comfortability is independent of the number of boundaries which also runs counter to intuition. We will show that the worst frequency for the comfortability is $\theta=\pi$.   
\\

\noindent{\bf (2) Combinatorial quantity obtained by the comfortability at $\theta=0$}\\
Recall that the comfortability at $\theta=0$ can be expressed by 
\begin{equation}\label{eq:0energy} \mathcal{E}({0})=\frac{N(N-1)}{2\ell^2}+\frac{\ell-1}{\ell N}.  
\end{equation}
The comfortability of the general graph $G=(V,E)$ for $\ell=2$ and $\theta=0$ is  expressed by the following graph geometry:  let $\chi_1$ be the number of spanning trees and $\chi_2$ be the number of spanning forests with two connected components---one including $u_1$ and the other including $u_N$, where $u_1$ is the vertex connecting to the tails of the inflow and $u_N$ is the vertex connecting to the other tail. According to \cite{HSS1}, we have
 \begin{equation}\label{eq:energy}
 \mathcal{E}(0)=\frac{1}{4}\;\left(\frac{\chi_2}{\chi_1}+|E|\;\right)  \end{equation}
for the general graph $G=(V,E)$. 
It is well known that the number of spanning trees of the complete graph with $N$ vertices is $N^{N-2}$, which is known as Cayley's formula. Thus (\ref{eq:0energy}) and (\ref{eq:energy}) lead to the following combinatorial value of the complete graph: 
\begin{equation}\label{eq:forest}
\chi_2 =2N^{N-3} 
\end{equation}
This equality is induced by our studies on quantum walks that make a detour, but it can also be explained directly as follows. 
Let $K_{x,y}:=(T_x,T_y)$ be the spanning forest of the complete graph with two subtrees $T_x$ and $T_y$. Here $T_z$ ($z\in\{x,y\}$) includes vertex $z$, and may be isolated. 
There are $\#V(T_x)\times \#V(T_y)$ pairs of $(x,y)$ inducing the subgraph $(T_x,T_y)$. 
On the other hand, there are also $\#V(T_x)\times \#V(T_y)$ spanning trees inducing the subgraph $(T_x,T_y)$ by eliminating a single appropriate edge. 
Let us consider the multiple set $\cup_{(x,y)}K_{x,y}$. From the above observation, we have
\[ \bigcup_{(x,y)}K_{x,y}=\bigcup_{T:\text{spanning tree}}\;\;\bigcup_{e\in E(T)}(T\setminus\{e\}). \]
The cardinality of RHS is described by $\binom{N}{2}\chi_2$, while the cardinality of LHS is described by $(N-1)\chi_1$. Then we have (\ref{eq:forest}). 
This equality also induces the following formula: 
\[\sum_{k=1}^{N-1}\binom{N-2}{k-1}k^{k-2}(N-k)^{N-k-2}=2N^{N-3}. \] 
\\

\noindent{{\bf (3) What is the worst situation of the boundary in terms of the comfortability?}} \\
From (\ref{eq:comf2}), we can show the following proposition. 
\begin{proposition}\label{prop:Kenergy}
Let us fix the parameter $\theta$. Then we have 
\[ \mathcal{E}(\theta)|_{\ell=N}\leq \mathcal{E}_{\theta}|_{\ell=s} \]
for any $s<N$. 
\end{proposition}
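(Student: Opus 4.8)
The plan is to fix the frequency $\theta$ (equivalently the scalar $\alpha=-1-(N-1)\cos\theta$, which ranges over $[-N,N-2]$) and compare the two comfortabilities through the explicit form (\ref{eq:comf2}). Writing $m=N-\ell$, I would record (\ref{eq:comf2}) as $\mathcal{E}(\theta)=\frac{(N-1)P(m)}{A\,Q(m)}$ with $A=N(N-2)(\alpha^2+1)-2\alpha$, $P(m)=\alpha^2 f_1+m(2\alpha+m+1)f_2$ and $Q(m)=\alpha^2 f_3-m(\alpha-N+2)(2\alpha+m)$, so that $\ell=N$ is exactly $m=0$. Since $P(0)=\alpha^2 f_1$ and $Q(0)=\alpha^2 f_3$, forming the difference and clearing denominators gives
\[
\mathcal{E}(\theta)|_{\ell=s}-\mathcal{E}(\theta)|_{\ell=N}=\frac{N-1}{A}\cdot\frac{P(m)Q(0)-P(0)Q(m)}{Q(m)Q(0)},\qquad m=N-s\ge 1,
\]
and a short expansion shows $P(m)Q(0)-P(0)Q(m)=m\alpha^2 R(m)$ with $R(m)=(2\alpha+m+1)f_2 f_3+(\alpha-N+2)(2\alpha+m)f_1$. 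The factor $\alpha^2$ cancels against $Q(0)=\alpha^2 f_3$, leaving
\[
\mathcal{E}(\theta)|_{\ell=s}-\mathcal{E}(\theta)|_{\ell=N}=\frac{(N-1)\,m\,R(m)}{A\,Q(m)\,f_3};
\]
in particular the apparent singularity at $\alpha=0$ ($\theta=\pm\theta_*$) disappears, consistent with the removability noted after (\ref{eq:comf2}).

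The next step is to pin down the signs of every factor except $R(m)$. On $[-N,N-2]$ one checks directly that $N-1>0$, $m\ge 1$, that $f_3$ is linear and increasing in $\alpha$ with positive values at both endpoints, hence $f_3>0$, and that $A>0$ (it is bounded below by $(N-2)^2$ for $\alpha\in[0,N-2]$ and is manifestly positive for $\alpha<0$). For $Q(m)$ I would invoke the remark following (\ref{eq:comf2}) that the formula has no singularity on $[-N,N-2]$: thus $A\,Q(m)\neq 0$ there, and since $A>0$ while $Q(m)$ is continuous and nonvanishing on the connected interval, its sign is constant and is fixed to be positive by a single evaluation (e.g. at $\alpha=N-2$, where $Q(m)=(N-2)^2 f_3>0$). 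Consequently the sign of the difference equals the sign of $R(m)$.

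Then I would factor $R(m)$. Using $f_2=(N-2-\alpha)(N(N-2)-\alpha)$ and $\alpha-N+2=-(N-2-\alpha)$ gives $R(m)=(N-2-\alpha)\tilde R(m)$ with $\tilde R(m)=(2\alpha+m+1)(N(N-2)-\alpha)f_3-(2\alpha+m)f_1$; and the identity $(N(N-2)-\alpha)f_3-f_1=(N-2)H$, where $H:=-N\alpha^2+N(N-3)\alpha+(N^3-2N^2+N-1)$, lets me rewrite $\tilde R(m)=(2\alpha+m)(N-2)H+(N(N-2)-\alpha)f_3$. Since $N-2-\alpha\ge 0$, it remains to prove $\tilde R(m)\ge 0$. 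The decisive simplification is that $\tilde R(m)$ is affine in $m$, so it is non-negative on $m\in[1,N-1]$ if and only if it is non-negative at the two endpoints $m=1$ and $m=N-1$.

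The main obstacle is these two endpoint inequalities. Each of $\tilde R(1)$ and $\tilde R(N-1)$ is an explicit cubic in $\alpha$ with negative leading coefficient $-2N(N-2)$, and I must show each is $\ge 0$ throughout $[-N,N-2]$. I would do this by first verifying positivity at the endpoints $\alpha=-N$ and $\alpha=N-2$ (where the values collapse to clean expressions, with $H(N-2)=(N-1)^3$ and $H(-N)=-(N-1)^2(N+1)$ feeding into positive endpoint values of $\tilde R$), and then excluding interior sign changes, either by controlling the local extrema via the quadratic derivative of the cubic, or by passing to the variables $u=N-2-\alpha\ge 0$ and $v=N+\alpha\ge 0$ (so that $u+v=2N-2$ and $\sin^2\theta=uv/(N-1)^2$) and exhibiting each cubic as a sum of products of the non-negative quantities $u,v$ with positive coefficients. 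Establishing $\tilde R(1),\tilde R(N-1)\ge 0$ yields $\tilde R(m)\ge 0$, hence $R(m)\ge 0$ and $\mathcal{E}(\theta)|_{\ell=N}\le\mathcal{E}(\theta)|_{\ell=s}$; the equality forced at $\alpha=N-2$ (that is, $\theta=\pi$) is precisely the $\ell$-independence of $\mathcal{E}(\pi)$ observed above.
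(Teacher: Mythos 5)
Your algebraic reduction is correct, and in several places it is more careful than the paper's own argument. The factorization of the numerator as $m\,\alpha^2(N-2-\alpha)\,\tilde{R}(m)$ with $\tilde{R}(m)=f_1+(N-2)(2\alpha+m+1)H$, the identity $(N(N-2)-\alpha)f_3-f_1=(N-2)H$, the positivity of $A$ and $f_3$, and the explanation of the equality case at $\theta=\pi$ via the factor $N-2-\alpha$ all check out. (In fact the paper's stated reduction target $g_m=-(2x+m+1)h(x)+f_1$ is missing the factor $N-2$ on the $h$-term; this is a typo, since the paper's explicit expansion of $g_1$ — leading coefficient $-2N(N-2)$ — coincides with your $\tilde{R}(1)$.) Where you genuinely diverge from the paper is the reduction in $m$: the paper splits $[-N,N-2]$ at the zero $\gamma$ of $h$, observes that on $[-N,\gamma]$ positivity is trivial because there $H\le 0$ \emph{and} $2\alpha+m+1<0$, and is thereby left with the single cubic $\tilde{R}(1)$; your affine-in-$m$ argument is cleaner, but it leaves you with \emph{two} cubics, $\tilde{R}(1)$ and $\tilde{R}(N-1)$, the second of which the paper never has to touch.

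The genuine gap is that neither cubic inequality is actually proved, and that is the crux of the whole proposition: the paper spends essentially all of its proof on $\tilde{R}(1)>0$, locating its unique real zero $\gamma_0\in(N-2,\,2(N-2))$ and excluding further real zeros by showing a discriminant is negative, which in turn requires the auxiliary cubic $\tau$ and a calculus argument. Your proposal replaces this with two unexecuted options, and the second one fails as stated: for $N=4$ one has $\tilde{R}(1)=-16\alpha^3+174\alpha+374$, and expanding it in the basis $u^3,u^2v,uv^2,v^3$ with $u=2-\alpha$, $v=4+\alpha$ forces the coefficient of $u^2v$ to be $-27/4<0$, so $\tilde{R}(1)$ is \emph{not} a positive combination of degree-three products of $u$ and $v$; a P\'olya-type certificate exists only after multiplying by powers of $u+v$, with a degree that would have to be controlled uniformly in $N$. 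The first option (endpoint values plus control of the interior critical points) is essentially the computation the paper carries out, and until it is done for both $\tilde{R}(1)$ and $\tilde{R}(N-1)$ — or you add the paper's sign observation that makes $\tilde{R}(N-1)$ unnecessary — the proof is incomplete. A minor further point: you infer $Q(m)\neq 0$ from the paper's informal ``no singularities'' remark, which strictly speaking only rules out poles of the rational function, not common zeros of numerator and denominator; it is safer to prove $Q(m)>0$ directly, which is easy: for $2\alpha+m\ge 0$ every term of $\alpha^2 f_3+m(N-2-\alpha)(2\alpha+m)$ is nonnegative and they cannot all vanish, while for $2\alpha+m<0$ the bound $m(-2\alpha-m)\le \alpha^2$ gives $Q(m)\ge \alpha^2 f_3-\alpha^2(N-2-\alpha)=\alpha^2(N-1)^2(N+\alpha)$, with the endpoint $\alpha=-N$ checked by hand (there $Q(m)=2(N-1)(N-m)^2>0$).
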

This means that if we fix the frequency of the inflow $\theta$ and the number of vertices $N$, the worst situation of the boundary for the comfortability is that every vertex is the boundary. 
\begin{proof}
Let us set $\mathcal{E}(\theta,m)$ as the comfortability at the frequency $\theta$ with the boundary $N-m$. 
We will show that $\mathcal{E}(\theta,m)-\mathcal{E}(\theta,0)>0$ for any $m\geq 1$. 
By (\ref{eq:comf2}), we have  
\begin{equation}
    \mathcal{E}(\theta,m)-\mathcal{E}(\theta,0)=\frac{N-1}{N(N-2)(\alpha^2+1)-2\alpha}\left( \frac{\alpha^2f_1+m(2\alpha+m+1)f_2}{\alpha^2f_3-m(\alpha-(N-2))(2\alpha+m)} -\frac{f_1}{f_3}\right)
\end{equation}
Let us detect the signature of ``$(\;\;)$" in the RHS. By reducing the common denominator, it is enough to show that \[g_m(x):=-(2x+m+1)h(x)+f_1(x;N)>0\;\;(m=1,2,\dots,N-1)\] 
for any $x\in[-N,N-2]$, where $h(x)$ is the following quadratic function of $x$ such that
\begin{equation*}
    h(x):=Nx^2-N(N-3)x-(N^3-2N^2+N-1).
\end{equation*}
Since $h(-N)>0$ while $h(0)<0$, the zero for $h(\gamma)=0$ uniquely exists between $-N$ and $0$.
Thus if $x\in[\gamma,N-2]$, then $h(x)\leq 0$. 
Then if $x\in[\gamma,N-2]$, we have $g_m(x)=(2x+m-1)|h(x)|$, which  
implies that $g_m(x)\geq g_1(x)$ for any $x\in[\gamma,N-2]$. 
On the other hand, let us consider the case for  $x\in[-N,\gamma]$, where $h(x)\geq0$. 
Note that such $\gamma$ can be bounded from the above by
\[\gamma=\frac{N(N-3)-(N-1)\sqrt{N(5N-4)}}{2N}<\frac{N(N-3)-(N-1)\sqrt{4N^2}}{2N}=-\frac{N+1}{2} \]
for any $N\geq 4$. 
If $x\in[-N,\gamma]$ with $N\geq 4$, since 
\begin{align*}
    2x+m+1 &< 2\gamma+m+1<2\left(-\frac{N+1}{2}\right)+m+1=m-N<0, 
\end{align*}
we confirm $g_m(x)=|2x+m+1| h(x)$. 
This implies that 
\[g_m(x)\geq g_1(x) \text{ for any $m=2,3,\dots,N-1$.}\] 
The positivity of $g_m(x)$ in the case for $N=3$ can be checked directly.  

Now the rest of our task is to show $g_1(x)>0$ for $x\in[-N,N-2]$ with $N\geq 4$. 
We have
\begin{multline*}
    g_1(x)=-2N(N-2)x^3+2N(N-2)(N-4)x^2\\+(3N^4-12N^3+11N^2-2N+6)x+(N-2)(N^2+1)(N^2-N-1).
\end{multline*} 
Let $\gamma_0,\gamma_\pm$ be the three solutions of the cubic equation $g_1(x)=0$. 
Comparing each coefficient in the above expression with that of $g_1(x)=-2N(N-2)(x-\gamma_0)(x-\gamma_+)(x-\gamma_-)$, we have   $\gamma_0+\gamma_++\gamma_-=N-4$ and $\gamma_0\gamma_+\gamma_-=(N^2+1)(N^2-N+-1)/(2N)$. 
Thus $g_1(x)$ can be reexpressed by 
\[ g_1(x)=-2N(N-2)(x-\gamma_0)\left(x^2-(N-4-\gamma_0)x+\frac{(N^2+1)(N^2-N-1)}{2N\gamma_0}\right). \]
It is easy to check that $g_1(N-2)>0$ while $g_1(2(N-2))<0$.
Then there exists a real-valued zero of $g_1(x)$ in $(N-2,\;2(N-2))$ which is out of the range $[-N,N-2]$. Next, our task is to show that $\gamma_0$ is the unique zero of $g_1(x)$ in $\mathbb{R}$; in other words,  
\begin{enumerate}
    \item $N-2<\gamma_0<2(N-2)$;
    \item The discriminant of the quadratic equation $g_1(x)/(-2N(N-2)(x-\gamma_0))=0$ is negative.
\end{enumerate}
The discriminant of the quadratic equation``$(\;)"=0$ in $g_1(x)$ can be expressed by 
$\tau(\gamma_0)/(N\gamma_0)$, 
where 
\[ \tau(y)=Ny^3-2N(N-4)y^2+N(N-4)^2y-2(N^2+1)(N^2-N+1). \]
Then it is enough to show that $\tau(y)<0$ for any $y\in[N-2,\;2(N-2)]$. 
The derivative of $\tau(y)$ can be simply described by
\[ \tau'(y)=N\left(\;3y-(N-4)\;\right)\;\left(\;y-(N-4)\;\right). \]
Thus $\tau(y)$ takes the local maximum and minimum values at $y=(N-4)/3$ and $N-4$, respectively, if $N\geq 5$ while $\tau(y)$ is a monotone increasing function when $N=4$. 
The value at the boundary $y=2(N-2)$ is negative; that is, $\tau(2(N-2))<0$ for $N\geq 4$. 
If $N\geq 5$, the local maximal value is negative; that is, $\tau((N-4)/3)<0$. This implies $\tau(y)<0$ for any $y\in(N-2,\;2(N-2))$ with $N\geq 4$. Therefore $g_1(x)$ has the unique real-valued zero $\gamma_0$ in $(N-2,2(N-2))$. 
\end{proof}

\noindent{{\bf (4) The most uncomfortable frequency is $\theta=\pi$ for $N\geq 4$. }} \\
We will show that if we fix the size $N\geq 4$ and the number of boundaries $\ell$, the most uncomfortable frequency for the quantum  walker is $\theta=\pi$.
Note that the comfortability at the frequency $\theta=\pi$ is independent of $m$. 
Since $\mathcal{E}(\theta,m)\geq \mathcal{E}(\theta,0)$ by Proposition~\ref{prop:Kenergy}, it is enough to show that $\mathcal{E}(\pi,0)$ takes the minimum value only at $\theta=\pi$. 
Let us compute the difference  
\begin{equation}
    \mathcal{E}(\theta,0)-\mathcal{E}(\pi,0)=\frac{(N-1)f_1}{(N(N-2)(\alpha^2+1)-2\alpha)f_3}-\frac{2N-3}{2(N-1)(N-2)}.
\end{equation}
By reducing the common denominator, it is enough to show that
\begin{multline*}
    w(x)=2(N-1)^4(N-2)\left(\; (N^2-4N+2)x+(N-2)(N^2-N+1) \;\right)\\
    -(2N-3)\left(\; N(N-2)(x^2+1)-2x \;\right)\left(\; N(N-2)x+N^3-2N^2+2N-2 \;\right)>0
\end{multline*} 
for any $x\in[-N,N-2]$. 
On the other hand, since $\theta=\pi$ is equivalent to $\alpha=N-2$, $w(x)$ can be divided by $(N-2-x)$. This means that $w(x)$ can be described by 
\[ w(x)=(N-2-x)(a_2x^2+a_1xx+a_0). \]
Comparing with each coefficient, we obtain the values of $a_j$; for example, $a_2=(2N-3)N^2(N-2)^2>0$. 
The discriminant of $a_1x^2+a_1x+a_0$ can be estimated by
\begin{align*}
    D/4 := (a_1/2)^2-a_2a_0<0.
\end{align*}
This implies $a_2x^2+a_1x+a_0>0$; that is, $w(x)>0$ for any $x\in[-N,(N-2)]$. Then we can state that the worst frequency for the comfortability is $\theta=\pi$. 
The minimum comfortabilities for $N=3$ are   
$\min_\theta \mathcal{E}(\theta,0), \min_\theta \mathcal{E}(\theta,1)<3/4$, and $\min_\theta \mathcal{E}(\theta,2)=3/4$. 

Since $\mathcal{E}(\theta;0)$ is the bottom of all the functions of $\mathcal{E}(\theta,m)$, 
it would be worthwhile to obtain an abstract shape of this bottom function. 
Instead of $\theta$, we choose $\alpha$ as the parameter of $\mathcal{E}(\theta;0)$. Let us set $\mathcal{E}_*(x):=\mathcal{E}(\theta;0)$ with $x=-1-(N-1)\cos\theta$, $\alpha\in[-N,N-2]$. 
To find when the bottom function takes the local minimum and maximum values, we take the derivative of $\mathcal{E}_*(x)$ and its signature. It is enough to estimate the following signature of the function:
\[ v(x)=-(b_3x^3+b_2x^2+b_1x+b_0). \]
The coefficients $b_j$' are $b_3>0$ if $N\geq 4$; $b_3<0$ if $N=3$, and $b_2,b_1>0$, $b_0<0$. 
It is easy to see that $v(0)<0$ while $v(N-2),v(-N)>0$ which implies that $v(x)$ has the zeros one by one in $(-N,0)$ and $(0,N-2)$, respectively. Then if $N\geq 4$, $\mathcal{E}_*(x)$ uniquely takes the local minimum and maximum values in $(-N,0)$ and $(0,N-2)$, respectively, and the minimum value is located at $x=N-2$.  
On the other hand, if $N=3$, then $\min_{\theta}\mathcal{E}_*(x)$ coincides with the local minimum. \\

The dependence on the frequency of the input of the  comfortability is shown in Fig.~\ref{Fig:connection}. 
We have shown that the curve for $\ell=N$ is always the bottom of any other $\ell$'s, and the bottom curve always has three peaks at the origin and around $\theta=\pm \pi/2$, respectively.  

\begin{figure}[hbtp]
    \centering
    \includegraphics[keepaspectratio, width=100mm]{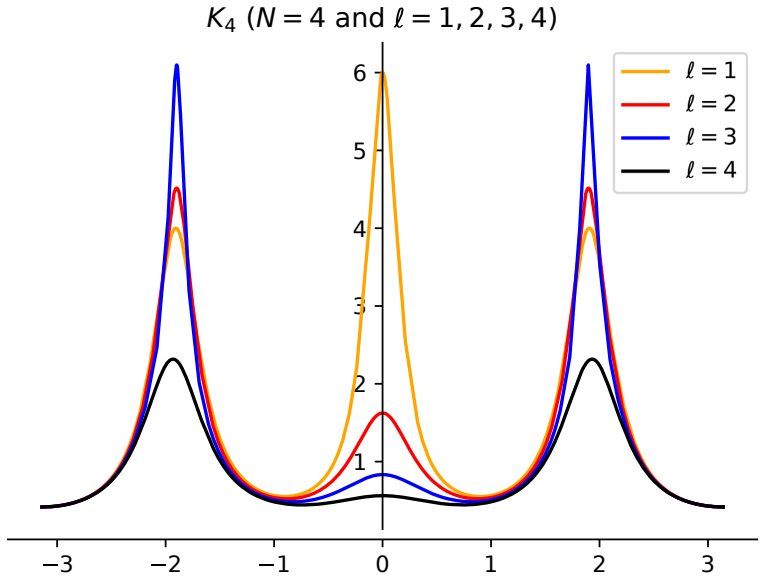}
    \caption{The comfortability of the quantum walk on the complete graph with $4$ vertices for each case for $\ell=1,2,3,4$:  
    The situation of the quantum walk for each $\ell$ is depicted in Figure~\ref{Fig:graphs}. The horizontal and vertical axes are the frequency of the inflow and the comfortability in the case of $N=4$. 
    The worst setting for quantum walkers of the boundaries is $\ell=N$ (black curve); that is, the joining of the tail to every vertex. 
    If the frequency is $\theta=0$, then the greater the number of boundaries is, the lager the  comfortability will be, which matches the classical intuition. On the other hand, if the frequency is around $\theta=\pm\theta_*$, the greater the number of boundaries is, the larger the comfortability will be, but once the number of boundaries is the maximum; that is, $\ell=N$, then the comfortability declines to its lowest level. }
    \label{Fig:connection}
\end{figure}

\noindent{{\bf (5) The quantum walker on the complete graph has a spiky preference for the frequency $\theta$ for large $N$}}\\
Next, let us examine the asymptotics of the comforatibility for large size $N$. 
From (\ref{eq:comf2}), we can estimate the comfortaibility for large size $N$ as follows. 
\begin{equation}\label{eq:comfasymp}
    \mathcal{E}(\theta) \sim \begin{cases} N^2/(2\ell^2) & \text{: $\theta=0$,} \\
    N/2 & \text{: $\theta=\pm \pi/2$,} \\
    1/(N\cos^2\theta) & \text{: otherwise.}
    \end{cases}
\end{equation}
Here $a_N\sim b_N$ is equivalent to $\lim_{N\to\infty}b_N/a_N=1$. 
Therefore the quantum walker on the complete graph has the ``spiky preference" of the frequency of inflow around $\theta=0$ and $\theta=\pm\pi/2$. 
Let us see the comfortability around $\theta=\pi/2$, because if $\theta$ is closed to $\pi/2$, then $\mathcal{E}(\theta)$ goes to infinity. 
This means that we will chase the asymptotics of $\mathcal{E}_{\theta}$ so that the parameter $\theta$ is tuned closer to $\pi/2$ as $N$ becomes larger.  
To this end, we set $\cos \theta=-s/(N-1)$ with $s\in o(N)$. Here $b_N\in o(a_N)$ is $\lim_{N\to\infty}b_N/a_N=0$. 
It is shown that the comfortability $\mathcal{E}(\theta)$ for any $\theta$ satisfying the above setting can be bounded above by $O(N)$ from the expression of (\ref{eq:comf2}). Here $b_N\in O(a_N)$ means $\lim_{N\to\infty} |b_N/a_N|<\infty$. Indeed we have
\begin{equation}\label{eq:comf3}
\mathcal{E}(\theta) \sim 
\begin{cases}
\frac{m+1}{m}N & \text{: $s=1$, $m=N-\ell\in \Theta(1)$,}\\
N & \text{: ``$s=1$, $m=0$" or ``$s\in \Theta(1)\setminus\{1\}$"}\\
N/(-1+s)^2 & \text{: otherwise}
\end{cases}
\end{equation}
for any $s\in o(N)$. 
Here $b_N\in\Theta(a_N)$ means $\lim_{N\to\infty}|b_N/a_N|\in(0,\infty)$.
Then for large $N$, if we increase the number of the boundaries, then 
the frequency of the maximal comfortability is switched from $\theta=0$ to $\theta=\theta_*$ when the number of boundaries is $\ell \in O(N^{1/2})$. 
Thus if $\ell\in \Omega(N^{1/2})$, then $\mathcal{E}(0)<\mathcal{E}(\theta_*)$ for sufficiently large $N$. 
On the other hand, without any asymptotics of $N$, the switching of the magnitude relation between $\mathcal{E}(0)$ and $\mathcal{E}(\theta_*)$ can be estimated as follows. By setting $m=t(N-1)$ with parameter $t\in[0,1]$ in (\ref{eq:comf2}), we can find that if $m<7(N-1)/8$, then  
$\mathcal{E}(0)<\mathcal{E}(\theta_*)$ for any $N\geq 3$. In any case, we observe that the situation of the boundary where  $\mathcal{E}(0)>\mathcal{E}(\theta_*)$ is rare. 

\noindent\\
\noindent {\bf Acknowledgments}
Yu.H. acknowledges financial supports from the Grant-in-Aid of
Scientific Research (C) Japan Society for the Promotion of Science (Grant No.~18K03401). 
E.S. acknowledges financial supports from the Grant-in-Aid of
Scientific Research (C) Japan Society for the Promotion of Science (Grant No.~19K03616) and by fund from Research Origin for Dressed Photon.



\begin{small}
\bibliographystyle{jplain}

\end{small}

\end{document}